\newcommand{\revision}[1]{#1}
\newtheorem{lemma}{Lemma}
\newtheorem{lemma*}{Lemma}
\newtheorem{proposition}{Proposition}
\newtheorem{corollary}{Corollary}
\newtheorem{definition}{Definition}
\newtheorem{remark}{Remark}
\newcommand{\pvec}[1]{\vec{#1}\mkern2mu\vphantom{#1}}
\newcommand*\q{\vec{q}}
\newcommand*\e{\vec{e}}
\newcommand*\qq{\pvec{q}'}
\newcommand*\p{\vec{p}}
\newcommand*\M{\mathcal{M}}
\newcommand*\EM{\M_{EM}}
\newcommand*\PF{\M_{PF}}
\newcommand*\D{\mathcal{D}}
\newcommand*\R{\mathcal{R}}
\newcommand*\E{\mathcal{E}}
\newcommand*\EE{\mathbb{E}}
\newcommand{\set}[1]{\{#1\}}
\title{Permute-and-Flip: A new mechanism for differentially private selection}
\author{
  Ryan McKenna and Daniel Sheldon \\
  College of Information and Computer Sciences\\
  University of Massachusetts, Amherst\\
  Amherst, MA 01002 \\
  \texttt{\{ rmckenna, sheldon \}@cs.umass.edu} \\
}
\begin{document}

\maketitle

\begin{abstract}
We consider the problem of \emph{differentially private selection}.  Given a finite set of candidate items and a quality score for each item, our goal is to design a differentially private mechanism that returns an item with a score that is as high as possible.  The most commonly used mechanism for this task is the exponential mechanism.  In this work, we propose a new mechanism for this task based on a careful analysis of the privacy constraints. The expected score of our mechanism is always at least as large as the exponential mechanism, and can offer improvements up to a factor of two.  Our mechanism is simple to implement and runs  in linear time.  
\end{abstract}

\section{Introduction}

The exponential mechanism \cite{mcsherry2007mechanism} is one of the most fundamental mechanisms for differential privacy.  It addresses the important problem of \emph{differentially private selection}, or selecting an item from a set of candidates that approximately maximizes some objective function.
The exponential mechanism was introduced soon after differential privacy itself, and has remained the dominant mechanism for private selection since.

The exponential mechanism is simple, easy to implement, runs in linear time, has good theoretical and practical performance, and solves an important problem.   It can be used directly as a competitive mechanism for computing simple statistics like medians or modes \cite{mcsherry2009privacy,cormode2012differentially,dwork2014algorithmic}.  Furthermore, it is an integral part of several more complex differentially private mechanisms for a range of tasks, including 
linear query answering \cite{hardt2012simple}, 
heavy hitter estimation \cite{lyu2016understanding},
synthetic data generation \cite{zhang2017privbayes,chen2015differentially},
dimensionality reduction \cite{chaudhuri2013near,kapralov2013differentially,awan2019benefits},
linear regression \cite{talwar2015nearly,alabi2020differentially},
and 
empirical risk minimization \cite{kifer2012private,bassily2014private,reimherr2019kng,asi2020near}.

In this work, we propose the \emph{permute-and-flip mechanism} as an alternative to the exponential mechanism for the task of differentially private selection.  It enjoys the same desirable properties of the exponential mechanism stated above, and its expected error is never higher, but can be up to two times lower than that of the exponential mechanism. Furthermore, we show that in reasonable settings no better mechanism exists: the permute-and-flip mechanism is Pareto optimal, and, if $\epsilon \geq \log(\frac{1}{2}(3 + \sqrt{5})) \approx 0.96$, is optimal in a reasonable sense ``overall''. 

The permute-and-flip mechanism serves as a drop-in replacement for the exponential mechanism in existing and future mechanisms, and immediately offers utility improvements.  The utility improvements of up to $2 \times$ over the state-of-the-art will impact practical deployments of differential privacy, where choosing the right privacy-utility trade-off is already a challenging social choice \cite{abowd2019stepping}.

\section{Preliminaries}

\subsection{Differential Privacy}

A dataset $D$ is a collection of individual data coming from the universe of all possible datasets $\D$.  We say datasets $D$ and $D'$ are neighbors, denoted $ D \sim D' $, if they differ in the data of a single individual.

Differential privacy is a mathematical privacy definition, and a property of a mechanism, that guarantees the output of the mechanism will not differ significantly (in a probabilistic sense) between any two neighboring datasets.

\begin{definition}[Differential Privacy]
A randomized mechanism $\M : \D \rightarrow \R$ is $\epsilon$-differentially private, if and only if:
$$ \Pr[\M(D) \in S] \leq \exp{(\epsilon)} \Pr[\M(D') \in S] $$
for all neighboring datasets $D \sim D'$ and all possible subsets of outcomes $S \subseteq \R$.  
\end{definition}

The sensitivity of a function is an important quantity to consider when designing differentially private mechanisms, which measures how much a function can change between two neighboring datasets.

\begin{definition}[Sensitivity]
The sensitivity of a real-valued function $q : \D \rightarrow \mathbb{R}$ is defined to be:
$$ \Delta_q = \max_{D \sim D'} | q(D) - q(D') |. $$
\end{definition}
\vspace{-1em}
\subsection{Private Selection}

In this work, we study the problem of \emph{differentially private selection}.  Given a finite set of candidates $\R = \set{1, \dots, n} $ and an associated quality score function $q_r : \D \rightarrow \mathbb{R} $ for each $r \in \R$, our goal is to design a differentially private mechanism $\M$ that returns a candidate $r$ that approximately maximizes $q_r(D)$.  The function $q_r(D)$ is typically a measure of how well the candidate $r$ captures some statistic or property of the dataset $D$.  A simple example is the most common medical condition, where $q_r$ counts the number of individuals with medical condition $r$ \cite{dwork2014algorithmic}.   

The only assumption we make is that the sensitivity of $q_r$ is bounded above by $\Delta$ for each $r \in \R$.  We consider mechanisms $\M$ that only depend on the dataset $D$ through the quality scores $q_r(D)$.  Thus, for notational convenience, we drop the dependence on $D$, and treat a mechanism as a function of the quality scores instead.  Specifically, we use $q_r \in \mathbb{R}$ to denote a quality score, $\q = [q_r]_{r \in \R}$ to denote the vector of quality scores, and $\M(\q)$ to denote a mechanism executed on the quality score vector $\q$.  
We define a notion of regularity, describing properties we would like in a mechanism for this task:

\begin{definition}[Regularity] \label{def:regular}
A mechanism $\M : \mathbb{R}^{n} \rightarrow \R$ is regular if the following holds:
\begin{itemize}[leftmargin=1em]
\item[] \textbf{Symmetry}:
For any permutation $\pi : \R \rightarrow \R$ and associated permutation matrix $\Pi \in \mathbb{R}^{n \times n}$,
\begin{equation} \Pr[\M(\q) = r] = \Pr[\M(\Pi \q) = \pi(r)]. \end{equation}
\item[] \textbf{Shift-invariance}: For all constants $c \in \mathbb{R} $,
\begin{equation} \Pr[\M(\q) = r] = \Pr[\M(\q + c \vec{1}) = r]. \end{equation}
\item[] \textbf{Monotonicity}:
If $ q_r \leq q'_r $ and $ q_s \geq q'_s $ for all $ s \neq r$, then
\begin{equation} \Pr[\M(\q) = r] \leq \Pr[\M(\pvec{q}') = r]. \end{equation}
\end{itemize}
\end{definition}

Informally, a symmetric mechanism is one where the quality scores can be permuted arbitrarily without affecting the distribution of outcomes.  This avoids pathologies where a mechanism can appear to do well for a particular quality score vector, but only because it has a built-in bias towards certain outcomes.  Similarly, a shift-invariant mechanism is one where a constant can be added to all quality scores without changing the distribution of outcomes.  A mechanism is monotonic if increasing one quality score while decreasing others will increase the probability on that outcome, and decrease the probability on all other outcomes. 
A mechanism that satisfies all of these criteria is called \emph{regular}.  

Mechanisms that are not regular have undesirable pathologies.  Thus, we restrict our attention to regular mechanisms in this work.  Beyond regularity, the main criteria we use to evaluate a mechanism is the error random variable:
\begin{equation} \label{eq:error}
\E(\M, \q) = q_* - q_{\M(\q)}
\end{equation}
%
%

where $q_* = \max_{r \in \R} q_r$ is the optimal quality score.  


\subsection{Exponential Mechanism}

The exponential mechanism is a mechanism that is both classical and state-of-the-art for the task of differentially private selection. 

\begin{definition}[Exponential mechanism]
Given a quality score vector $\q \in \mathbb{R}^n$, the exponential mechanism is defined by:
$$ \Pr[\M_{EM}(\q) = r] \propto \exp{\Big(\frac{\epsilon}{2 \Delta} q_r\Big)}. $$
\end{definition}

It is well-known that the exponential mechanism is $\epsilon$-differentially private, and it is easy to show that it also satisfies the regularity conditions in \cref{def:regular}. \revision{In addition, it is possible to bound the error of the exponential mechanism, both in expectation and in probability:}

\begin{proposition}[Utility Guarantee of $\M_{EM}$ \cite{bassily2016algorithmic,dwork2014algorithmic}] \label{thm:emutility}
For all $\q \in \mathbb{R}^{n}$ and all $t \geq 0$, 
\begin{align*}
\EE[\E(\M_{EM}, \q)] \leq \frac{2 \Delta}{\epsilon} \log{(n)}, & & &
\Pr[\E(\M_{EM}, \q) \geq \frac{2 \Delta}{\epsilon} (\log{(n)}+t)] \leq \exp{(-t)}.
\end{align*}
\end{proposition}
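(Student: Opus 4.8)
The plan is to establish the high-probability bound first by a direct computation on the exponential-mechanism density, and then derive the in-expectation bound from it, with a small refinement needed to match the stated constant exactly. Throughout, write $\alpha = \frac{\epsilon}{2\Delta} > 0$ and recall $q_* = \max_{r} q_r$.

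\emph{Tail bound.} Fix $c \geq 0$. The event $\E(\M_{EM}, \q) \geq c$ is precisely the event that $\M_{EM}$ returns a candidate $r$ with $q_r \leq q_* - c$, so using $\Pr[\M_{EM}(\q) = r] = \exp(\alpha q_r) / \sum_s \exp(\alpha q_s)$ we have
\[
\Pr[\E(\M_{EM}, \q) \geq c] = \frac{\sum_{r \,:\, q_r \leq q_* - c} \exp(\alpha q_r)}{\sum_{s} \exp(\alpha q_s)}.
\]
I would bound the numerator above by $n \exp(\alpha(q_* - c))$ (at most $n$ terms, each at most $\exp(\alpha(q_* - c))$ since $\alpha > 0$) and the denominator below by $\exp(\alpha q_*)$ (keeping only a maximizer's term), which gives $\Pr[\E(\M_{EM}, \q) \geq c] \leq n \exp(-\alpha c)$. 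Substituting $c = \frac{1}{\alpha}(\log n + t) = \frac{2\Delta}{\epsilon}(\log n + t)$ turns the right-hand side into $n \exp(-(\log n + t)) = \exp(-t)$, which is the claimed probability bound.

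\emph{Expectation bound.} Since $\E(\M_{EM}, \q) \geq 0$, integrating the tail bound gives $\EE[\E(\M_{EM}, \q)] = \int_0^\infty \Pr[\E(\M_{EM}, \q) \geq c]\, dc \leq \int_0^\infty \min(1, n\exp(-\alpha c))\, dc$, and splitting the integral at $c_0 = \frac{\log n}{\alpha}$ yields $\frac{1}{\alpha}(\log n + 1)$, which is slightly weaker than the stated $\frac{2\Delta}{\epsilon}\log n$. To recover the sharp constant I would instead argue through the log-partition function: by shift-invariance assume $q_* = 0$, set $Z(\alpha) = \sum_r \exp(\alpha q_r)$, and note that $\EE[\E(\M_{EM}, \q)] = -\EE[q_{\M_{EM}(\q)}] = -\tfrac{d}{d\alpha} \log Z(\alpha)$. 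Because $\alpha \mapsto \log Z(\alpha)$ is convex (log-sum-exp), we have $\tfrac{d}{d\alpha}\log Z(\alpha) \geq \frac{\log Z(\alpha) - \log Z(0)}{\alpha}$; since $\log Z(0) = \log n$ and $\log Z(\alpha) \geq 0$ (the maximizer contributes the term $\exp(\alpha q_*) = 1$), this gives $-\tfrac{d}{d\alpha}\log Z(\alpha) \leq \frac{\log n}{\alpha} = \frac{2\Delta}{\epsilon}\log n$.

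\emph{Main obstacle.} The tail bound is a routine manipulation of the density, and so is the tail-integration argument for the expectation — the one genuinely delicate point is obtaining the sharp $\log n$ constant without the extra additive $\frac{2\Delta}{\epsilon}$ term, which is why I would replace the crude integration with the convexity-of-log-partition argument (or an equivalent refinement of the integral estimate). Everything else reduces to elementary bounds on sums of exponentials.
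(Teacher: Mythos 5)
The paper itself does not prove this proposition --- it is stated as a known result and attributed to the cited references (Bassily et al.\ and Dwork--Roth), with no argument given in the supplement --- so there is no in-paper proof to compare against. Your blind proof is correct and complete. The tail bound is the standard computation: bounding the numerator $\sum_{r: q_r \le q_*-c}\exp(\alpha q_r)$ by $n\exp(\alpha(q_*-c))$ and the denominator below by the maximizer's term gives $n\exp(-\alpha c)$, and the substitution $c=\tfrac{2\Delta}{\epsilon}(\log n + t)$ yields $\exp(-t)$. You also correctly diagnose the one genuinely delicate point: naive integration of the tail only gives $\tfrac{2\Delta}{\epsilon}(\log n + 1)$, and the sharp constant requires the log-partition argument. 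That argument checks out: with $q_*=0$, $\EE[\E(\M_{EM},\q)] = -\tfrac{d}{d\alpha}\log Z(\alpha)$, the second derivative of $\log Z$ is the variance of $q$ under the Gibbs distribution (hence $\log Z$ is convex), the convexity inequality $\tfrac{d}{d\alpha}\log Z(\alpha) \ge \tfrac{\log Z(\alpha)-\log Z(0)}{\alpha}$ holds, and $\log Z(0)=\log n$ together with $\log Z(\alpha)\ge 0$ gives exactly $\tfrac{2\Delta}{\epsilon}\log n$. This is essentially the argument used in the cited source for the expectation bound, so your reconstruction matches the provenance of the result.
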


\vspace{-1em}
\section{Permute-and-Flip Mechanism}

\begin{algorithm}[t]
\SetAlgoLined
 $q_* = \max_r q_r$ \\
 \For{$r \text{ in RandomPermutation}(\R)$}{
  $p_r = \exp{\Big(\frac{\epsilon}{2 \Delta} (q_r - q_*)\Big)}$ \\ \If{$\text{Bernoulli}(p_r)$}{
    \Return $r$
  }
 }
 \caption{Permute-and-Flip Mechanism, $\M_{PF}(\q)$} \label{alg:mech}
\end{algorithm}

In this section, we propose a new mechanism, $\M_{PF}$, which we call the ``permute-and-flip'' mechanism. Just like the exponential mechanism, it is simple, easy to implement, and runs in linear time. It is stated formally in Algorithm \ref{alg:mech}.  The mechanism works by iterating through the set of candidates $\R$ in a random order. For each item, it flips a biased coin, and returns that item if the coin comes up heads.  The probability of heads is an exponential function of the quality score, which encourages the mechanism to return results with higher quality scores. The mechanism is guaranteed to terminate with a result because if $q_r = q_*$, then the probability of heads is $1$.

\begin{restatable}{theorem}{pfdp} \label{thm:dp}
The Permute-and-Flip mechanism $\PF$ is regular and $\epsilon$-differentially private.  
\end{restatable}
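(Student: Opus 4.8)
The plan is to first derive a closed-form expression for the selection probabilities of $\PF$ and then read off each required property; the differential privacy guarantee is the substantive part. Write $p_s = \exp(\frac{\epsilon}{2\Delta}(q_s - q_*))$. Conditioning on the random permutation used by \cref{alg:mech}, candidate $r$ is returned exactly when its own coin comes up heads and every candidate preceding it comes up tails, so $\Pr[\PF(\q) = r \mid \pi] = p_r \prod_{s \prec_\pi r}(1 - p_s)$. Averaging over the uniform permutation $\pi$, grouping permutations by the predecessor set $B$ of $r$ (for which $\Pr[B] = |B|!\,(n - 1 - |B|)!/n!$), and then combining the Beta-function identity $|B|!\,(n-1-|B|)!/n! = \int_0^1 t^{|B|}(1-t)^{n-1-|B|}\,dt$ with the elementary identity $\prod_{s \ne r}\big((1-t) + t(1 - p_s)\big) = \sum_{B} t^{|B|}(1-t)^{n-1-|B|}\prod_{s \in B}(1 - p_s)$, I expect to obtain
\begin{equation} \label{eq:pf-closed-form}
  \Pr[\PF(\q) = r] = p_r \int_0^1 \prod_{s \ne r}(1 - t\,p_s)\, dt .
\end{equation}
(Differentiating $\prod_s (1 - t p_s)$ shows the right-hand sides sum to $1 - \prod_s(1 - p_s) = 1$ since $p_* = 1$, a useful sanity check.) It also helps to rewrite \eqref{eq:pf-closed-form}, via the substitution $u = t/w_*$, in terms of the unnormalized weights $w_s = \exp(\frac{\epsilon}{2\Delta} q_s)$ and $w_* = \max_s w_s$ as $\Pr[\PF(\q) = r] = w_r \int_0^{1/w_*} \prod_{s \ne r}(1 - u\,w_s)\,du$.

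Regularity then follows with little further work. Symmetry and shift-invariance are read off \eqref{eq:pf-closed-form} directly, since permuting the scores permutes the $p_s$ while fixing $q_*$, and adding a constant to all scores leaves every difference $q_s - q_*$, hence every $p_s$, unchanged. For monotonicity I would argue from the weight form: holding $\{w_s\}_{s \ne r}$ fixed, $\Pr[\PF(\q) = r]$ is linear, hence increasing, in $w_r$ for $w_r \le \max_{s \ne r} w_s$, and for larger $w_r$ it equals $\int_0^1 \prod_{s \ne r}(1 - v\,w_s / w_r)\,dv$ after the substitution $u = v/w_r$, which is again increasing in $w_r$; and it is nonincreasing in each $w_s$ with $s \ne r$, because decreasing $w_s$ both enlarges the integration domain $[0, 1/w_*]$ and increases the factor $1 - u\,w_s$, with the integrand staying nonnegative on the enlarged domain. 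Applying these two observations coordinate-wise yields the monotonicity condition.

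The privacy guarantee is the main step. Because the output space is finite it suffices to show $\Pr[\PF(\q) = r] \le e^{\epsilon}\Pr[\PF(\q') = r]$ for every pair of neighbors $\q \sim \q'$ and every $r$. The key reduction is that, by the monotonicity just established, among all neighbors of a fixed $\q'$ the quantity $\Pr[\PF(\cdot) = r]$ is maximized at the extremal neighbor $\hat\q$ given by $\hat q_r = q'_r + \Delta$ and $\hat q_s = q'_s - \Delta$ for $s \ne r$; it therefore suffices to bound $\Pr[\PF(\hat\q) = r]$ against $\Pr[\PF(\q') = r]$. I would then split on the position of $q'_r$ relative to $M' = \max_s q'_s$. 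If $q'_r \ge M' - 2\Delta$, then $r$ attains the maximum of $\hat\q$, so $\hat p_r = 1$ and $\hat p_s = e^{-\epsilon} p'_s / p'_r$; substituting into \eqref{eq:pf-closed-form} and rescaling the integration variable (via $u = t\,e^{-\epsilon}/p'_r$) reduces the desired inequality to $\int_0^{e^{-\epsilon}/p'_r} g \le \int_0^1 g$ for the integrand $g(u) = \prod_{s \ne r}(1 - u\,p'_s)$, which holds since $g \ge 0$ on $[0,1]$ and $p'_r \ge e^{-\epsilon}$ in this regime. If instead $q'_r < M' - 2\Delta$, the maximum of $\hat\q$ is attained by some $s \ne r$, one computes $\hat p_s = p'_s$ for all $s \ne r$ and $\hat p_r = e^{\epsilon} p'_r$, and \eqref{eq:pf-closed-form} then gives $\Pr[\PF(\hat\q) = r] = e^{\epsilon}\Pr[\PF(\q') = r]$ exactly. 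Summing over $r \in S$ gives $\epsilon$-differential privacy.

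I expect this last step — and, within it, hitting on the right line of attack — to be the main obstacle: the two probabilities cannot be compared permutation-by-permutation in the expectation defining $\Pr[\PF(\q)=r]$, nor factor-by-factor in \eqref{eq:pf-closed-form}, and the inequality genuinely relies on the normalization $\max_s p_s = 1$ holding for both $\q$ and $\q'$; it is precisely this constraint that forces the reduction to the extremal neighbor together with the case split around the threshold $M' - 2\Delta$.
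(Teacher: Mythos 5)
Your proposal is correct, and while it follows the same overall skeleton as the paper's proof --- establish regularity first, use monotonicity and shift-invariance to reduce differential privacy to a single comparison against the worst-case neighbor $\hat\q = \q' + \Delta\e_r - \Delta\sum_{s\neq r}\e_s$, and split on whether the $r$th score becomes the maximum --- the way you verify the key inequality is genuinely different. The paper works with the inclusion--exclusion form of the pmf, $\Pr[\PF(\q)=r] = p_r\sum_{S\not\ni r}\frac{(-1)^{|S|}}{|S|+1}\prod_{s\in S}p_s$, and shows that $\frac{\partial}{\partial q_r}\log\Pr[\PF(\q)=r]\le\frac{\epsilon}{2\Delta}$ everywhere, integrating this bound over $[q_r,q_r+2\Delta]$; the harder case ($q_r$ maximal) reduces after some algebra to the nonnegativity of $\prod_{s\ne r}(1-p_s)$. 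You instead derive the Beta-integral closed form $\Pr[\PF(\q)=r]=p_r\int_0^1\prod_{s\ne r}(1-t\,p_s)\,dt$, which turns the same two cases into a change of variables: exact equality with ratio $e^{\epsilon}$ when $r$ stays non-maximal, and an inequality between integrals of a nonnegative function over nested intervals when $r$ becomes maximal. Your route buys a cleaner and more self-contained verification (no differentiation of a log-pmf, no point of non-differentiability to worry about), and the integral representation also streamlines the monotonicity argument; the paper's derivative argument, on the other hand, makes the ``privacy constraint is tight below the maximum'' structure used in the derivation and optimality sections more immediately visible. One cosmetic point: $\hat\q$ is not itself necessarily a realizable neighbor but only dominates all neighbors in the monotonicity order, which is all you use --- this matches the paper's Proposition~\ref{prop:regdp} --- and it is worth stating that way to avoid confusion.
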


Proofs of all results appear in the supplement; in addition, the main text will contain some proof sketches. The proof of this theorem uses Proposition~\ref{prop:regdp} (below) and a direct analysis of the probability mass function of $\PF$. \revision{Note that the condition in \cref{prop:regdp} can be immediately verified to hold (with equality) when $q_r \leq q_* - 2\Delta$ by observing that $p_r$ in Algorithm~\ref{alg:mech} changes by exactly $\exp(\epsilon)$ when $q_r$ increases by $2\Delta$, and by a short argument conditioning on the random permutation. The proof using the pmf also handles the case when increasing $q_r$ by $2\Delta$ causes item $r$ to have maximum score.}




\subsection{Derivation}
\label{sec:derivation}

We now describe the principles underlying the permute-and-flip mechanism and intuition behind its derivation.
%
To define a mechanism, we must specify the value of $\Pr[\M(\q) = r]$ for every $(\q, r)$ pair. Intuitively, we would like to place as much probability mass as possible on the items with the highest score, and as little mass as possible on other items, subject to the constraints of differential privacy.  
For regular mechanisms, these constraints simplify greatly:
\begin{restatable}{proposition}{regdp} \label{prop:regdp}
A regular mechanism $\M : \mathbb{R}^{n} \rightarrow \R$ is $\epsilon$-differentially private if:
$$ \Pr[\M(\q) = r] \geq \exp{(-\epsilon)} \Pr[\M(\q + 2\Delta\e_r) = r] $$
for all $(\q, r)$, where $\e_r$ is the unit vector with a one at position $r$.
\end{restatable}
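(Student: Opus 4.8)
The plan is to verify the privacy inequality directly. Since $\M$ sees the data only through the quality scores and each $q_r$ has sensitivity at most $\Delta$, any two neighboring datasets induce score vectors $\q, \qq \in \mathbb{R}^n$ with $\|\q - \qq\|_\infty \le \Delta$; and because $\R$ is finite, it suffices to establish the singleton bound $\Pr[\M(\q) = r] \le \exp(\epsilon)\,\Pr[\M(\qq) = r]$ for every candidate $r$ and every such pair $\q, \qq$, and then sum over $r \in S$ to get the bound for a general output set. (The neighbor relation is symmetric, so proving this one-sided bound for all ordered pairs with $\|\q - \qq\|_\infty \le \Delta$ yields both directions.)

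Fix $r$ and a pair $\q, \qq$ with $\|\q - \qq\|_\infty \le \Delta$. The idea is to use monotonicity and shift-invariance to collapse this general pair into the single configuration assumed in the proposition. Let $\hat{\q}$ be the vector with $\hat q_r = q_r' + \Delta$ and $\hat q_s = q_s' - \Delta$ for $s \ne r$; equivalently $\hat{\q} = \qq + 2\Delta\e_r - \Delta \vec{1}$. The bound $\|\q - \qq\|_\infty \le \Delta$ gives $q_r \le \hat q_r$ and $q_s \ge \hat q_s$ for all $s \ne r$, so monotonicity yields $\Pr[\M(\q) = r] \le \Pr[\M(\hat{\q}) = r]$. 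Shift-invariance with $c = \Delta$ gives $\Pr[\M(\hat{\q}) = r] = \Pr[\M(\qq + 2\Delta\e_r) = r]$. Finally, the assumed inequality, applied at the score vector $\qq$, rearranges to $\Pr[\M(\qq + 2\Delta\e_r) = r] \le \exp(\epsilon)\,\Pr[\M(\qq) = r]$. Chaining the three facts gives the desired singleton bound, and hence $\epsilon$-differential privacy.

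I do not anticipate a real obstacle; the only nontrivial move is recognizing that the worst case for the ratio $\Pr[\M(\q) = r] / \Pr[\M(\qq) = r]$ is to push coordinate $r$ up by $\Delta$ and every other coordinate down by $\Delta$, and that after a global shift this is exactly a single $2\Delta$ bump in coordinate $r$ — precisely the quantity the hypothesis controls. The remaining care is bookkeeping: connecting the abstract privacy definition (neighboring datasets, arbitrary output sets) to the score-vector formulation via the sensitivity bound and the additivity of pure differential privacy over disjoint outputs. Notably, the argument uses only monotonicity and shift-invariance from the regularity conditions, not symmetry.
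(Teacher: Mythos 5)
Your proof is correct and follows essentially the same route as the paper's: both arguments combine monotonicity and shift-invariance to reduce an arbitrary neighboring pair $\|\q-\qq\|_\infty\le\Delta$ to the single worst-case $2\Delta\e_r$ perturbation controlled by the hypothesis, differing only in the order of the three steps (you apply the hypothesis at $\qq$, the paper applies it at $\q-2\Delta\e_r$). Your added remarks --- reducing from arbitrary output sets to singletons and noting that symmetry is not used --- are accurate and consistent with what the paper does implicitly.
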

%
The proof (in the supplement) argues that it is only necessary to compare $\q$ to the quality-score vector with $q'_r = q_r + \Delta$ and $q'_s = q_s - \Delta$ for all $s \neq r$, which, by monotonicity, is the \emph{worst-case} neighbor of $\q$. By shift-invariance, the mechanism is identical when $q'_r = q_r + 2\Delta$ and $q'_s = q_s$, or $ \pvec{q}' = \q + 2 \Delta \e_r $, which leads to the constraint in the theorem. 

This theorem allows allows us to reason about only one constraint for every $(\q, r)$ pair, instead of infinitely many.
%
Ideally we would like to distribute probability to items as \emph{un}evenly as possible, which would make these constraints tight (satisfied with equality). However, we can see by examining the overall numbers of constraints and variables that we cannot make all of them tight. For each score vector $\q$, there are: (1)~$n=|\R|$ free variables (the probabilities of the mechanism run on $\q$), (2)~$n$ inequality constraints (Proposition~\ref{prop:regdp}), and (3)~one additional constraint that the probabilities sum to one. This is a total of $n$ inequality constraints and one equality constraint per $n$ variables. On average, we expect at most $n-1$ of the inequality constraints to be tight, leading to $n$ linear constraints that are satisfied with equality for each group of $n$ variables.

The following recurrence for $\Pr[\M(\q) = r]$ defines a mechanism by selecting \emph{certain} constraints to be satisfied with equality:
\vspace{-1em}
\begin{equation}
\Pr[\M(\q) = r] = 
\begin{cases}
\exp{(-\epsilon)} \Pr[\M(\q + 2 \Delta \vec{e}_r) = r] & q_r \leq q_* - 2\Delta \\[5pt]
\frac{1}{n_*} \big(1 - \sum_{s : q_s < q_*} \Pr[\M(\q) = s] \big) & q_r = q_* \\
\end{cases}
\end{equation}
The privacy constraint is tight whenever $q_r \leq q_* - 2 \Delta$ (Case 1).  When $q_r$ is one of the maximum scores (Case 2, $q_r = q_*$), the sum-to-one constraint is used instead, in conjunction with symmetry; here, $n_*$ is the number of quality scores equal to $q_*$.

This recurrence defines a \emph{unique} mechanism for quality score vectors on the $2\Delta$-lattice, i.e., for $\q \in \mathbb{R}^{n}_{2\Delta} = \set{ 2 \Delta \vec{s} : \vec{s} \in \mathbb{Z}^{n}}$.
To see this, note that the base case occurs when $n_*=n$, i.e., all scores are equal to the maximum and each item has probability $\frac{1}{n}$.
Now consider an arbitrary $\q \in \mathbb{R}^{n}_{2\Delta}$
with $n_*=k$ maximum elements. By Case 2, the mechanism is fully defined by the probabilities assigned to items with non-maximum scores. By Case 1, the probability of selecting an item $r$ that does not have maximum score is defined by the probability of selecting $r$ with the score vector $\pvec{q}' = \q + 2 \Delta \vec{e}_r$; this also belongs to the $2\Delta$-lattice, and $q'_r > q_r$. Eventually, a score vector with $n_*= k+1$ maximum elements will be reached, moving closer to the base case of $n_* = n$.

The following recurrence generalizes the original and is well-defined for all $\q \in \mathbb{R}^{n}$, obtained by simply interpolating between the points in the $2\Delta$-lattice.
\vspace{-1em}
\begin{equation} \label{eq:recursive2}
\Pr[\M(\q) = r] = 
\begin{cases}
\exp{\big(\frac{\epsilon}{2 \Delta} (q_r - q_*)\big)} \Pr[\M(\q + (q_* - q_r) \vec{e}_r) = r] & q_r < q_* \\[5pt]
\frac{1}{n_*} \big(1 - \sum_{s : q_s < q_*} \Pr[\M(\q) = s] \big) & q_r = q_* \\
\end{cases}
\end{equation}
The only difference is Case 1, which is obtained by unrolling Case 1 of the original recurrence $(q_* - q_r)/2\Delta$ times so that the $r$th score becomes exactly $q_*$. The advantage is that the new expression is well defined for vectors that are not on the $2\Delta$-lattice.

\cref{eq:recursive2} defines a mechanism. In principle, it also gives a way to compute the probabilities of the mechanism for any fixed $\q$. The most direct approach to calculate these probabilities uses dynamic programming and takes exponential time. A smarter algorithm based on an analytic expression for the solution to the recurrence runs in $O(n^2)$ time (\cref{sec:dynamic}), but is still unacceptably slow compared to the linear time exponential mechanism.
Remarkably, it is not necessary to explicitly compute the probabilities of this mechanism, as the  permute-and-flip mechanism solves this recurrence relation.  As a result, we can simply run the simple linear-time \cref{alg:mech} and avoid computing the mechanism probabilities directly.

\begin{restatable}{proposition}{solution} \label{prop:solution}
$\PF$ solves the recurrence relation in \cref{eq:recursive2}.  
\end{restatable}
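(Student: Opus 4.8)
The plan is to compute just enough of the probability mass function of $\PF$ from \cref{alg:mech} and verify the two cases of \cref{eq:recursive2} separately. Write $p_r = \exp\big(\frac{\epsilon}{2\Delta}(q_r - q_*)\big)$ for the per-item acceptance probabilities used by the algorithm, so $p_r \in (0,1]$ with $p_r = 1$ exactly when $q_r = q_*$. Since $q_* = \max_s q_s$ is attained, at least one coin comes up heads with probability one, so the algorithm always returns some item and $\sum_{r \in \R} \Pr[\PF(\q) = r] = 1$. I expect Case 2 ($q_r = q_*$) to be routine and Case 1 ($q_r < q_*$) to carry the real content.

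For \textbf{Case 2}: by \cref{thm:dp} the mechanism $\PF$ is regular, hence symmetric; applying symmetry with the transposition of two maximizing items $r$ and $r'$ (which fixes $\q$ because $q_r = q_{r'}$) shows every maximal-score item receives the same probability $\rho$. Combining with $\sum_r \Pr[\PF(\q) = r] = 1$ gives $n_*\rho + \sum_{s : q_s < q_*} \Pr[\PF(\q) = s] = 1$, i.e. $\rho = \frac{1}{n_*}\big(1 - \sum_{s : q_s < q_*}\Pr[\PF(\q) = s]\big)$, which is exactly Case 2. (Symmetry and termination can also be read off directly from the algorithm, so this step need not rely on \cref{thm:dp}.)

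For \textbf{Case 1}, set $\qq = \q + (q_* - q_r)\e_r$, which only raises the $r$th score to $q_*$, leaving $\max_s q'_s = q_*$ and $q'_s = q_s$ for $s \neq r$. I will couple the executions of $\PF(\q)$ and $\PF(\qq)$: the \emph{same} random permutation, the \emph{same} coin for each item $s \neq r$ (legitimate since $q'_s = q_s$, so the acceptance probabilities coincide), and for item $r$ a single $V \sim \mathrm{Unif}[0,1]$, declaring $r$'s coin heads iff $V < p_r$ in the run on $\q$ and iff $V < 1$ (always) in the run on $\qq$. Let $A$ be the event that no item appearing before $r$ in the permutation is accepted. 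The key observation is that $A$ is measurable with respect to the permutation and the coins of items $s \neq r$ only, hence it is the \emph{same} event in both runs, while $\{\PF(\q) = r\} = A \cap \{V < p_r\}$ and $\{\PF(\qq) = r\} = A$. Since $V$ is independent of $A$, $\Pr[\PF(\q) = r] = p_r\,\Pr[A] = p_r\,\Pr[\PF(\qq) = r]$, and substituting $p_r = \exp\big(\frac{\epsilon}{2\Delta}(q_r - q_*)\big)$ yields Case 1.

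The one place that needs care --- and the main obstacle --- is the claim that the event $A$ (``item $r$ is reached with everything before it rejected'') is genuinely identical across the two coupled runs; this rests on noting that $A$ never depends on item $r$'s own coin and that, because only coordinate $r$ differs between $\q$ and $\qq$, the two runs are driven by the same randomness up to the step at which $r$ is processed. As an alternative that sidesteps the coupling, one can first derive the closed form $\Pr[\PF(\q) = r] = p_r \int_0^1 \prod_{s \neq r}(1 - u\,p_s)\,du$ (condition on the position of $r$ via a uniform random variable), observe that the integral factor is unchanged when $q_r$ is raised to $q_*$ while $p_r$ becomes $1$, and read off Case 1 directly; Case 2 then follows from the same formula together with $\sum_r \Pr[\PF(\q) = r] = 1$.
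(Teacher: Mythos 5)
Your proof is correct and follows essentially the same route as the paper's: Case 2 from symmetry plus the sum-to-one property, and Case 1 by factoring $\Pr[\PF(\q)=r]$ as $p_r$ times a quantity (the probability that everything preceding $r$ in the permutation is rejected) that is unchanged when $q_r$ is raised to $q_*$ — your coupling is just a probabilistic phrasing of the paper's term-by-term comparison of the two pmf expressions. The integral representation $p_r\int_0^1\prod_{s\neq r}(1-u\,p_s)\,du$ you mention in passing is a valid and pleasant alternative form of the same factorization.
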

\begin{proof}[Proof (Sketch)]
Case 2 is satisfied because $\PF$ is symmetric and a valid probability distribution. For Case 1, let $\pvec{q}' = \q + (q_* - q_r) \e_r$ and consider applying $\PF$ to both $\q$ and $\pvec{q}'$. In each case, the coin-flip probabilities are the same for all items \emph{except} $r$, and the probability of selecting any given permutation is the same. The ratio $\Pr[\PF(\q) = r]/\Pr[\PF(\pvec{q}') = r]$ can be shown to be \emph{exactly} $p_r/p'_r$, where $p_r = \exp{\big(\frac{\epsilon}{2 \Delta} (q_r - q_*)\big)}$ is the coin-flip probability with $\q$ and $p'_r = 1$ is the coin-flip probability with $\pvec{q}'$.  The ratio is exactly $\exp{\big(\frac{\epsilon}{2 \Delta} (q_r - q_*)\big)}$, as required by Case 1.
\end{proof}


\begin{minipage}[H]{0.44\textwidth}
\begin{algorithm}[H]
\SetAlgoLined
 $q_* = \max_r q_r$ \\
 \Repeat{$\text{Bernoulli}(p_r)$}{
  $r \sim \text{Uniform}[\R]$ \\
  $p_r= \exp{\Big(\frac{\epsilon}{2 \Delta} (q_r - q_*)\Big)}$ \\
  \BlankLine \BlankLine
 }
 \Return $r$ 
 \caption{$\M_{EM}(\q)$} \label{alg:em2}
\end{algorithm}
\end{minipage}
\hfill
\begin{minipage}{0.44\textwidth}
\begin{algorithm}[H]
\SetAlgoLined
 $q_* = \max_r q_r$ \\
 \Repeat{$\text{Bernoulli}(p_r)$}{
  $r \sim \text{Uniform}[\R]$ \\
  $p_r = \exp{\Big(\frac{\epsilon}{2 \Delta} (q_r - q_*)\Big)}$ \\
  $\R = \R \setminus \set{r}$
 }
 \Return $r$
 \caption{\label{alg:pf2} $\M_{PF}(\q)$} 
\end{algorithm}
\end{minipage}
\vspace{-0.5em}
\section{Comparison with Exponential Mechanism}

In this section, we compare the permute-and-flip and exponential mechanisms, both algorithmically and in terms of the \revision{error each incurs}. One (unconventional) way to sample from the exponential mechanism is stated in \cref{alg:em2}.  This is a rejection sampling algorithm: an item is repeatedly sampled uniformly at random from the set $\R$ \emph{with replacement} and returned with probability $p_r = \exp{\big(\frac{\epsilon}{2 \Delta} (q_r - q_*)\big)}$.  For the permute-and-flip mechanism, an item is repeatedly sampled uniformly at random from the set $\R$ \emph{without replacement} and returned with the same probability. Sampling without replacement is mathematically equivalent to iterating through a random permutation, and hence \cref{alg:pf2} is equivalent to \cref{alg:mech}. These implementations are not recommended in practice, but are useful to illustrate connections between the two mechanisms.

Intuitively, sampling without replacement is better, because items that are not selected, which are likely to have low scores, are eliminated from future consideration.  In fact, in \cref{thm:empf} we prove that the permute-and-flip mechanism is never worse than the exponential mechanism in a very strong sense.  Specifically, we show that the expected error of permute-and-flip is never larger than the exponential mechanism, and the probability of the error random variable exceeding $t$ is never larger for permute-and-flip (for any $t$).  This is a form of \emph{stochastic dominance} \cite{hadar1969rules}, and suggests it is always preferable to use permute-and-flip over the exponential mechanism, no matter what the risk profile is.

\begin{restatable}{theorem}{empf} \label{thm:empf}
$\M_{PF}$ is never worse than $\M_{EM}$.  That is, for all $\q \in \mathbb{R}^n$ and all $ t \geq 0$,
\begin{align*}
\EE[\E(\M_{PF}, \q)] \leq \EE[\E(\M_{EM}, \q)],  &&&
\Pr[\E(\M_{PF}, \q) \geq t] \leq \Pr[\E(\M_{EM}, \q) \geq t] 
\end{align*}
\end{restatable}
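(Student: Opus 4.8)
The plan is to prove the tail inequality $\Pr[\E(\PF,\q)\ge t]\le \Pr[\E(\EM,\q)\ge t]$ for every $t\ge 0$; the expectation inequality then follows immediately by integrating, since $\E\ge 0$ gives $\EE[\E(\M,\q)]=\int_0^\infty \Pr[\E(\M,\q)\ge t]\,dt$ for both mechanisms. Writing $S_\tau=\set{r\in\R: q_r\le \tau}$ for the score sublevel set, the event $\set{\E(\M,\q)\ge t}$ is exactly $\set{\M(\q)\in S_{q_*-t}}$, so the tail bound is equivalent to: for every threshold $\tau$, $\Pr[\PF(\q)\in S_\tau]\le \Pr[\EM(\q)\in S_\tau]$; that is, the output score of $\PF$ first-order stochastically dominates that of $\EM$.

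The core of the argument is a monotone likelihood ratio between the two output distributions: I claim that $q_r\le q_s$ implies $\Pr[\PF(\q)=r]/\Pr[\EM(\q)=r]\le \Pr[\PF(\q)=s]/\Pr[\EM(\q)=s]$. A standard pairing argument upgrades this to the desired stochastic dominance: fixing $\tau$, write $S=S_\tau$ and $S^c=\R\setminus S$, sum the rearranged inequality $\Pr[\PF(\q)=r]\Pr[\EM(\q)=s]\le \Pr[\EM(\q)=r]\Pr[\PF(\q)=s]$ over all pairs $r\in S$, $s\in S^c$ (legal since $q_r\le \tau<q_s$ for such pairs), and simplify the resulting inequality $\Pr[\PF\in S]\Pr[\EM\in S^c]\le \Pr[\EM\in S]\Pr[\PF\in S^c]$ by adding $\Pr[\PF\in S]\Pr[\EM\in S]$ to both sides, which yields $\Pr[\PF(\q)\in S]\le \Pr[\EM(\q)\in S]$.

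To establish the likelihood ratio, I would use the ``pre-flipped coins'' view of $\PF$, equivalent to Algorithm~\ref{alg:mech} by deferred decisions: draw independent $C_r\sim\mathrm{Bernoulli}(p_r)$ with $p_r=\exp(\frac{\epsilon}{2\Delta}(q_r-q_*))$, let $H=\set{r: C_r=1}$ (nonempty almost surely, since every maximizer has $p_r=1$), and note that conditioned on $H$ the output of $\PF$ is uniform over $H$ (the first element of a fixed nonempty set in a uniformly random permutation is uniform on that set). Hence $\Pr[\PF(\q)=r]=p_r\cdot\EE[1/(1+N_{-r})]$, where $N_{-r}=\sum_{s\ne r}C_s$ is independent of $C_r$; meanwhile $\Pr[\EM(\q)=r]=p_r/\sum_s p_s$. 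The likelihood ratio is therefore $(\sum_s p_s)\,\EE[1/(1+N_{-r})]$, and it remains only to show $\EE[1/(1+N_{-r})]$ is nondecreasing in $q_r$: if $q_r\le q_s$, write $N_{-r}=M+C_s$ and $N_{-s}=M+C_r$ with $M=\sum_{u\notin\set{r,s}}C_u$; since $p_r\le p_s$ one can couple the two Bernoullis so that $C_r\le C_s$ pointwise, giving $N_{-s}\le N_{-r}$ and hence $\EE[1/(1+N_{-s})]\ge \EE[1/(1+N_{-r})]$ because $x\mapsto 1/(1+x)$ is decreasing.

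I expect the main obstacle to be finding this representation: once one sees that $\PF$'s pmf factors as $p_r\,\EE[1/(1+N_{-r})]$ through pre-flipped coins, the monotonicity of the likelihood ratio and its implication for stochastic dominance are routine. A more hands-on alternative would be to directly couple the with-replacement and without-replacement rejection samplers in Algorithms~\ref{alg:em2} and~\ref{alg:pf2} on shared randomness and argue the without-replacement run ``can only do better'', but tracking which index achieves the returned item under such a coupling appears more delicate than the likelihood-ratio route.
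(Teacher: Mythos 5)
Your proposal is correct, and while it shares its core with the paper's argument, it converts that core into the theorem by a genuinely different route. Both proofs ultimately rest on the same monotonicity fact: the paper writes the $\PF$ pmf as $p_r\, g_r(\q)$ with $g_r(\q)=\frac{1}{n!}\sum_{\pi}\prod_{s:\pi(s)<\pi(r)}(1-p_s)$ and proves in \cref{lemma:step1a} that $q_r\le q_s$ implies $g_r(\q)\le g_s(\q)$ via an explicit cancellation over permutations; your quantity $\EE[1/(1+N_{-r})]$ is exactly this $g_r(\q)$ (the probability that $r$ precedes all heads-up coins among the others), and your coupling of the pre-flipped coins is a shorter, more conceptual proof of the same lemma. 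The divergence is in the next step. The paper sorts the scores, forms the pointwise \emph{difference} $f_r=\Pr[\PF(\q)=r]-\Pr[\EM(\q)=r]=p_r\bigl(g_r(\q)-h_r(\q)\bigr)$ with $h_r$ constant, argues that the sequence $f_r$ is non-decreasing and sums to zero, and concludes via \cref{lemma:step2} that every prefix sum is non-positive. You instead note that the likelihood \emph{ratio} $\Pr[\PF(\q)=r]/\Pr[\EM(\q)=r]=\bigl(\sum_s p_s\bigr)g_r(\q)$ is monotone in $q_r$ and apply the standard cross-multiplication argument that a monotone likelihood ratio implies first-order stochastic dominance; your pairing computation is correct. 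Your route buys some robustness: working with ratios, you never need to reason about the sign of $g_r-h_r$, whereas the paper's step ``a non-negative non-decreasing factor times a non-decreasing factor is non-decreasing'' requires extra care on indices where $g_r-h_r<0$; the MLR argument delivers the theorem without confronting that issue. What the paper's route buys in exchange is the explicit permutation-sum and inclusion--exclusion expressions for the pmf (\cref{thm:pmf}), which are reused in the privacy and worst-case-error proofs. Your final step, recovering the expectation bound by integrating the tail bound, matches the paper's exactly.
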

As a direct consequence of \cref{thm:empf}, the permute-and-flip mechanism inherits the theoretical guarantees of the exponential mechanism (\cref{thm:emutility}).

\begin{corollary} \label{cor:pfutility}
For all $\q \in \mathbb{R}^{n}$ and all $t \geq 0$, 
\begin{align*}
\EE[\E(\M_{PF}, \q)] \leq \frac{2 \Delta}{\epsilon} \log{(n)}, & & &
\Pr[\E(\M_{PF}, \q) \geq \frac{2 \Delta}{\epsilon} (\log{(n)}+t)] \leq \exp{(-t)}.
\end{align*}
\end{corollary}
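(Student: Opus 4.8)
The plan is to derive \cref{cor:pfutility} as an immediate consequence of \cref{thm:empf} together with \cref{thm:emutility}, rather than proving anything about $\M_{PF}$ directly. This is the natural route because \cref{thm:empf} already establishes that $\M_{PF}$ stochastically dominates $\M_{EM}$ with respect to the error random variable (i.e., has pointwise-smaller tail and smaller expectation), so every upper bound on the error of $\M_{EM}$ transfers verbatim to $\M_{PF}$.

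For the expectation bound, first invoke \cref{thm:empf} to get $\EE[\E(\M_{PF}, \q)] \leq \EE[\E(\M_{EM}, \q)]$ for every fixed $\q \in \mathbb{R}^n$, and then chain this with the first inequality of \cref{thm:emutility}, namely $\EE[\E(\M_{EM}, \q)] \leq \frac{2\Delta}{\epsilon}\log(n)$. For the tail bound, fix $\q$ and apply the second inequality of \cref{thm:empf} at the particular threshold $t' = \frac{2\Delta}{\epsilon}(\log(n)+t)$ for arbitrary $t \geq 0$: this yields
\begin{equation*}
\Pr\big[\E(\M_{PF}, \q) \geq \tfrac{2\Delta}{\epsilon}(\log(n)+t)\big] \leq \Pr\big[\E(\M_{EM}, \q) \geq \tfrac{2\Delta}{\epsilon}(\log(n)+t)\big],
\end{equation*}
and then the second inequality of \cref{thm:emutility} bounds the right-hand side by $\exp(-t)$. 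Combining gives the claimed $\exp(-t)$ bound for $\M_{PF}$.

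There is essentially no obstacle here: both pieces follow by transitivity of $\leq$ once the two cited results are in hand, and the statement of \cref{thm:empf} was phrased precisely so that the tail comparison holds for \emph{all} $t \geq 0$, which is exactly what is needed to substitute the shifted threshold $\frac{2\Delta}{\epsilon}(\log(n)+t)$. The only thing to be careful about is the quantifier bookkeeping — ensuring the comparison in \cref{thm:empf} and the guarantee in \cref{thm:emutility} are applied at the same value of $\q$ and the same threshold — but this is routine. One could remark that this corollary shows $\M_{PF}$ is a strict improvement that loses none of the worst-case guarantees of the exponential mechanism, which is the point of stating it.
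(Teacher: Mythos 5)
Your proposal is correct and matches the paper's argument exactly: the corollary is obtained by chaining the stochastic-dominance comparison of \cref{thm:empf} with the bounds of \cref{thm:emutility}, applied at the same $\q$ and (for the tail bound) at the shifted threshold $\frac{2\Delta}{\epsilon}(\log(n)+t)$. Nothing further is needed.
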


\subsection{Analysis of Worst-Case Error}
To further compare the two mechanisms, it is instructive to compare their expected errors for a particular class of score vectors. In particular, we examine score vectors that are \emph{worst cases} for both mechanisms. This analysis will reveal that permute-and-flip can be up to $2\times$ better than the exponential mechanism, and that the upper bounds on expected error in \cref{thm:emutility} and \cref{cor:pfutility} are within a factor of four of being tight. 

\begin{restatable}{proposition}{worst} \label{prop:worst}
  The worst-case expected errors for both $\EM$ and $\PF$ occur when $\q = (c, \dots, c, 0) \in \mathbb{R}^n$ for some $c \leq 0$.
  Let $p = \exp{\big(\frac{\epsilon}{2 \Delta} c \big)}$. The expected errors for score vectors of this form are:
\begin{equation} \label{eq:emf}
\EE[\E(\M_{EM}, \q)] = \frac{2 \Delta}{\epsilon} \log{\Big(\frac{1}{p}\Big)} \Big[ 1 - \frac{1}{1 + (n-1) p} \Big] ,
\end{equation}
\begin{equation} \label{eq:pff}
\EE[\E(\M_{PF}, \q)] = \frac{2 \Delta}{\epsilon} \log{\Big(\frac{1}{p}\Big)} \Big[ 1 - \frac{1 - (1-p)^n}{n p} \Big].
\end{equation}
The worst-case expected errors are found by maximizing \cref{eq:emf,eq:pff} over $p \in (0, 1]$.
\end{restatable}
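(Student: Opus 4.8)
The plan is to split the proof into two independent parts: (i) verify the closed forms \eqref{eq:emf} and \eqref{eq:pff} for vectors of the stated shape, and (ii) show that such a vector is in fact a global worst case. By shift-invariance I would assume throughout that $q_* = 0$, write $a = \frac{\epsilon}{2\Delta}$, and set $p_r = e^{a q_r}$, so that for the special family $\q = (c,\dots,c,0)$ we have $p = e^{ac}\in(0,1]$ and the coin-flip probability equals $p$ on the $n-1$ low items and $1$ on the top item. For part (i) under $\EM$, the selection probability is $p_r$ normalized, so the top item is returned with probability $1/(1+(n-1)p)$ and each low item with probability $p/(1+(n-1)p)$; since the error is $0$ on the top item and $-c = a^{-1}\log(1/p)$ on each low item, taking the expectation gives \eqref{eq:emf}. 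Under $\PF$, I would run \cref{alg:mech}: the error is $0$ exactly when the top item is returned, and since the top item's coin is always heads this happens iff every low item preceding it in the random permutation flips tails; conditioning on the number $k\in\{0,\dots,n-1\}$ of low items before the top item (each value equally likely) gives probability $\frac1n\sum_{k=0}^{n-1}(1-p)^k = \frac{1-(1-p)^n}{np}$, and multiplying the complement by $-c$ yields \eqref{eq:pff}.

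For part (ii) I must show $\EE[\E(\M,\q)]$ is at most the maximum over $p\in(0,1]$ of the corresponding right-hand side, for every $\q$. I would do this through two reductions. First, it suffices to consider $\q$ with a single maximal score: if $n_*\ge 2$, replace one maximal score by a low score tending to $q_*$; by continuity of the error in the scores the expected error converges to that of $\q$, so the supremum over single-maximum vectors is no smaller. Second, among single-maximum vectors the low scores may be taken equal. For the exponential mechanism this second reduction is immediate: writing $b_r = e^{a q_r}\in(0,1]$ and $g(t) = -t\log t$ (concave, with $g(1)=0$), we have $\EE[\E(\EM,\q)] = a^{-1}\big(\sum_r g(b_r)\big)\big/\big(\sum_r b_r\big)$, and with exactly $n-1$ low items Jensen's inequality for $g$ shows that replacing all low $b_r$ by their common average increases the numerator while leaving the denominator fixed, hence increases the ratio; the resulting vector has the stated shape. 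Since \eqref{eq:emf} is continuous on $(0,1]$ and vanishes as $p\to 0$ and $p\to 1$, its maximum over $p$ is attained at an interior point, which finishes the $\EM$ case and the ``maximize over $p\in(0,1]$'' assertion for it.

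For $\PF$ the first reduction is identical, but the second is the heart of the argument and where I expect the main obstacle. Using that, conditioned on the set of items whose coins come up heads, $\PF$ returns a uniformly random one of them, one obtains $\Pr[\PF(\q)=r] = p_r\int_0^1\prod_{s\ne r}(1-up_s)\,du$, so after the first reduction $\EE[\E(\PF,\q)]$ is an explicit symmetric function of the $n-1$ low probabilities $p_1,\dots,p_{n-1}\in(0,1)$. Unlike the exponential-mechanism expression, this function is not concave, so there is no one-line Jensen argument; instead I would analyze the first-order optimality conditions of the constrained maximization, showing that at any interior critical point all $p_i$ coincide (equivalently, freezing all but two coordinates, the resulting two-variable maximum lies on the diagonal), and that no larger value is attained in the limit as some $p_i\to 0$ — that limit degenerates to the worst case for $n-1$ items, which is no larger because the worst-case error is non-decreasing in $n$ (append an item with score $\to -\infty$). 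Once all low scores are equal, $\q$ has the stated shape, and as before \eqref{eq:pff} attains its maximum over $p$ at an interior point. The crux is thus the non-concave optimization for $\PF$: proving that the symmetric configuration dominates every asymmetric one and every boundary limit.
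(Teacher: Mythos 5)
Your part (i) is correct and matches the paper: the $\EM$ formula is the direct computation, and the paper derives the $\PF$ formula by the same conditioning on the position of the top item in the random permutation. Your reduction for $\EM$ in part (ii) --- Jensen's inequality for the concave $g(t)=-t\log t$ in the numerator with the denominator $\sum_r b_r$ held fixed --- is essentially the paper's argument, which does the same averaging pairwise rather than over all $n-1$ low coordinates at once. The genuine gap is part (ii) for $\PF$, which is the heart of the proposition. You correctly observe that the objective is not concave and that no one-line Jensen argument applies, but you then only announce a plan (``analyze the first-order optimality conditions\dots showing that at any interior critical point all $p_i$ coincide'') without carrying it out. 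That claim is exactly where the work lies and cannot be taken for granted: a symmetric non-concave function can perfectly well have asymmetric maximizers and only a saddle on the diagonal, so ``the two-variable maximum lies on the diagonal'' is a statement that has to be proved, not inferred from symmetry.

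The paper closes precisely this step with an explicit two-coordinate exchange argument. Freezing all coordinates except $p_1<p_2$, the negative expected error of $\PF$ takes the form $f(p_1,p_2)=p_1\log(p_1)\,[a(1-p_2)+b]+p_2\log(p_2)\,[a(1-p_1)+b]-c(1-p_1)-c(1-p_2)-d(1-p_1)(1-p_2)-e$ with $a,b,c,d,e\ge 0$, and one shows $f$ can always be strictly decreased when $p_1\ne p_2$: if $p_1\log p_1<p_2\log p_2$, set $p_2\leftarrow p_1$ (every term decreases); otherwise replace both by their average and bound the two pieces separately, using convexity of $p\log p$ together with the ordering of the bracketed factors for the first piece, and $\bigl(1-\tfrac{p_1+p_2}{2}\bigr)^2>(1-p_1)(1-p_2)$ for the second. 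An argument of this kind (your integral representation $\Pr[\PF(\q)=r]=p_r\int_0^1\prod_{s\ne r}(1-up_s)\,du$ is correct and could serve as an alternative starting point) is required to turn your part (ii) for $\PF$ from a program into a proof; your treatment of the boundary limits and of multiple maxima is reasonable but secondary to this missing step.
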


\cref{fig:empf1} shows the expected error of both mechanisms using \cref{eq:emf,eq:pff} for $n=3$ and $ p \in (0, 1] $.  
The error of $\M_{PF}$ is always lower than that of $\M_{EM}$, as expected by \cref{thm:empf}. At the two extremes ($p=0$ and $p=1$), the expected error of both mechanisms is exactly $0$, because: (1)~when $p=1$, all scores are equal to the maximum, and (2)~when $p \rightarrow 0$, the total probability assigned to items with non-maximum scores vanishes. The maximum error for each mechanism occurs somewhere in the middle, typically near $p=\frac{1}{n}$. In fact, by substituting $p=\frac{1}{n}$ into \cref{eq:pff}, we obtain:

\begin{restatable}{proposition}{lowerbound} \label{prop:lower}
For $\q = (c, \ldots, c, 0) \in \mathbb{R}^n$ with $c = -\frac{2\Delta}{\epsilon} \log n$, the expected error $\EE[\E(\PF, \q)]$ of permute-and-flip is at least $\frac{\Delta}{2 \epsilon} \log{(n)}$. This implies that $\EE[\E(\EM, \q)] \geq \frac{\Delta}{2 \epsilon} \log{(n)}$ as well, and that the upper bounds of
  \cref{thm:emutility} and \cref{cor:pfutility} are within a factor of four of being tight.
\end{restatable}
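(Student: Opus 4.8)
The plan is to specialize the closed-form expression \eqref{eq:pff} from \cref{prop:worst} to the given score vector and then lower-bound the resulting elementary expression. First I would observe that with $c = -\frac{2\Delta}{\epsilon}\log n$, the quantity $p = \exp\!\big(\frac{\epsilon}{2\Delta}c\big)$ appearing in \cref{prop:worst} equals exactly $p = 1/n$, so $\log(1/p) = \log n$. Substituting $p = 1/n$ into the bracketed factor of \eqref{eq:pff} gives $1 - \frac{1 - (1-1/n)^n}{n\cdot(1/n)} = 1 - \bigl(1 - (1-1/n)^n\bigr) = (1-1/n)^n$, so that
\[ \EE[\E(\PF, \q)] = \frac{2\Delta}{\epsilon}\,(\log n)\,(1 - 1/n)^n. \]

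Next I would lower-bound the factor $(1-1/n)^n$. For $n = 1$ the claimed inequality is trivial, since $\log 1 = 0$ and the error is identically zero. For integers $n \ge 2$, the sequence $(1-1/n)^n$ is increasing in $n$ (a standard fact, provable by taking logarithms or by AM--GM), and its value at $n = 2$ is $\frac14$; hence $(1-1/n)^n \ge \frac14$ for all $n \ge 2$. This yields $\EE[\E(\PF, \q)] \ge \frac{2\Delta}{\epsilon}(\log n)\cdot\frac14 = \frac{\Delta}{2\epsilon}\log n$, as claimed. The corresponding bound for the exponential mechanism follows immediately from \cref{thm:empf}, which gives $\EE[\E(\EM, \q)] \ge \EE[\E(\PF, \q)] \ge \frac{\Delta}{2\epsilon}\log n$.

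Finally, to obtain the factor-of-four statement, I would compare this lower bound against the upper bounds $\frac{2\Delta}{\epsilon}\log n$ from \cref{thm:emutility} and \cref{cor:pfutility}: since $\frac{2\Delta}{\epsilon}\log n = 4\cdot\frac{\Delta}{2\epsilon}\log n$, and the worst-case expected error of each mechanism therefore lies in the interval $\big[\frac{\Delta}{2\epsilon}\log n,\ \frac{2\Delta}{\epsilon}\log n\big]$, those upper bounds overestimate the true worst-case expected error by at most a factor of four.

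There is essentially no hard step in this argument; it is a short calculation once \cref{prop:worst} and \cref{thm:empf} are in hand. The only points requiring a little care are the arithmetic simplification of the bracket in \eqref{eq:pff} at $p = 1/n$, the fact that $(1-1/n)^n$ attains its minimum over $n \ge 2$ at $n = 2$ (together with the trivial $n = 1$ case), and correctly using the direction of the inequality supplied by \cref{thm:empf}.
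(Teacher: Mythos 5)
Your proposal is correct and follows essentially the same route as the paper's proof: substitute $p = 1/n$ into \cref{eq:pff}, simplify the bracket to $(1-1/n)^n$, bound that below by $\tfrac14$ (the value at $n=2$, since the sequence is increasing), and invoke \cref{thm:empf} for the exponential-mechanism claim. Your explicit handling of the trivial $n=1$ case is a small point of care that the paper's proof glosses over.
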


\cref{fig:empf2} shows the ratio $\frac{\EE[\E(\M_{EM}, \q)]}{\EE[\E(\M_{PF}, \q)]}$ of expected errors of the two mechanisms for different values of $n$ for $p \in (0, 1]$. The result is independent of particular choices of $\epsilon$ and $\Delta$, as the ratio only depends on $\epsilon$ and $\Delta$ through $p$.  We observe that:

\begin{itemize}[leftmargin=*]
\item The ratio is always between one and two, and approaches two in the limit at $p \rightarrow 0$ (larger $\epsilon$).
\item 
The required $p$ to achieve a fixed ratio decreases with $n$, and the ratio converges to one for all $p>0$ as $n$ goes to infinity.
This behavior is well-explained by the algorithmic comparison earlier in this section:
as $n$ goes to infinity, the probability of sampling the same low-scoring item multiple times becomes negligible, so sampling without replacement ($\PF$) becomes essentially identical to sampling with replacement ($\EM$). 
\end{itemize}
These results are for a \emph{particular} class of (worst-case) quality-score vectors and not necessarily indicative of what will happen in applications. In our experiments with real quality-score vectors (Section~\ref{sec:experiments}) we observe ratios close to two for the values of $\epsilon$ that provide reasonable utility. We have never observed a ratio greater than two for any $\q$, and it is an open question whether this is possible. 
In practice, we can and do realize significant improvements even for large $n$.

\cref{fig:empf3} compares the worst-case expected errors of $\EM$ and $\PF$ as a function of $n$ by numerically maximizing over $p$ in \cref{eq:emf,eq:pff} for different values of $n$ and $\epsilon = \Delta = 1$.  For reference, we also plot the analytic upper and lower bounds from \cref{thm:emutility}, \cref{cor:pfutility}, and \cref{prop:lower}.
The ratio of worst-case expected error between $\EM$ and $\PF$ is largest at $n=2$, and it decays towards $1$ as $n$ increases.
Again, this is explained by the algorithmic similarities between the two mechanisms as $n \to \infty$.


\begin{figure}[t]
\begin{subfigure}[b]{0.33\textwidth} 
\centering
\includegraphics[width=\textwidth]{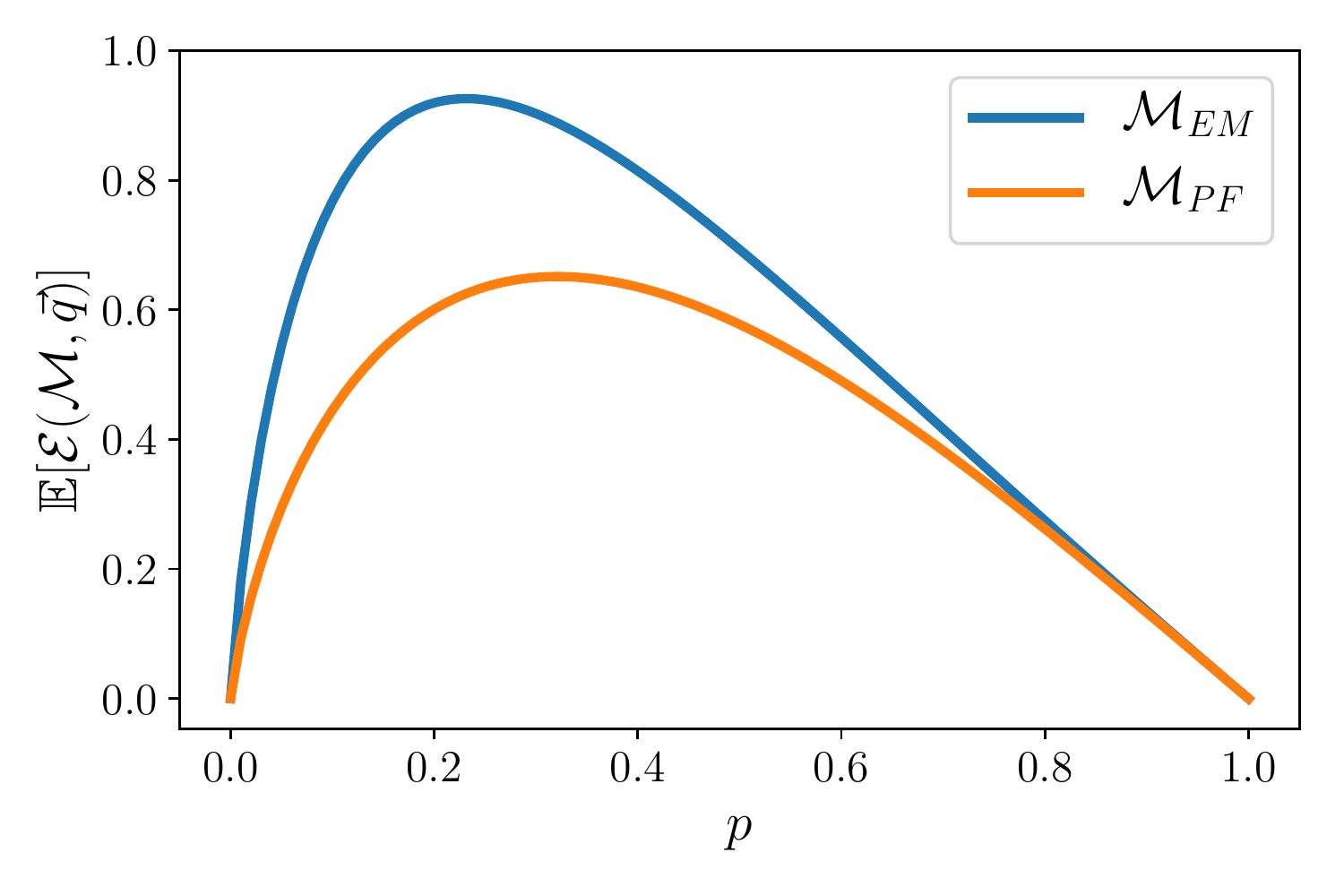} 
\caption{} \label{fig:empf1}
\end{subfigure}
\begin{subfigure}[b]{0.33\textwidth}
\centering
\includegraphics[width=\textwidth]{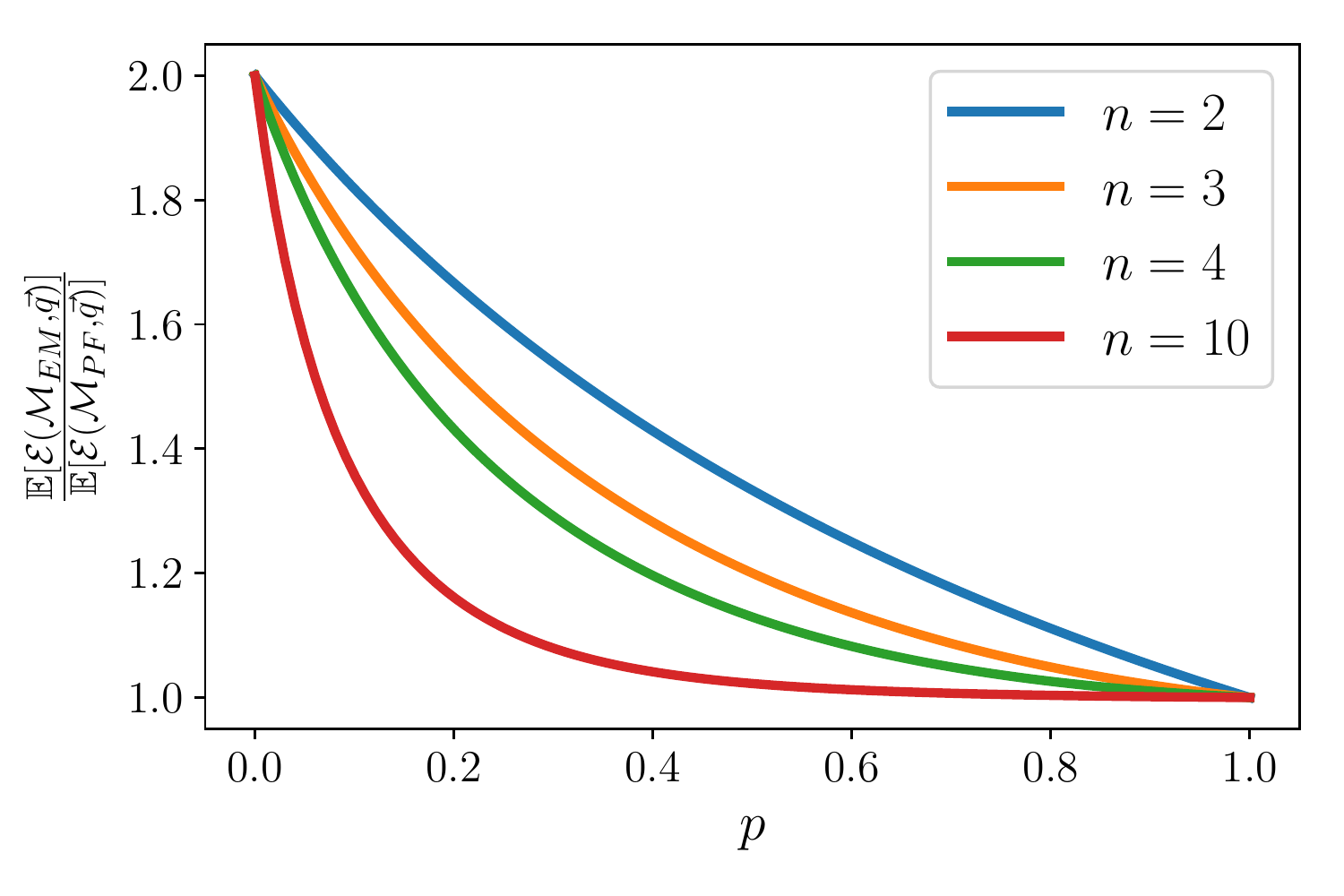}
\caption{} \label{fig:empf2}
\end{subfigure}
\begin{subfigure}[b]{0.33\textwidth}
\centering 
\includegraphics[width=\textwidth]{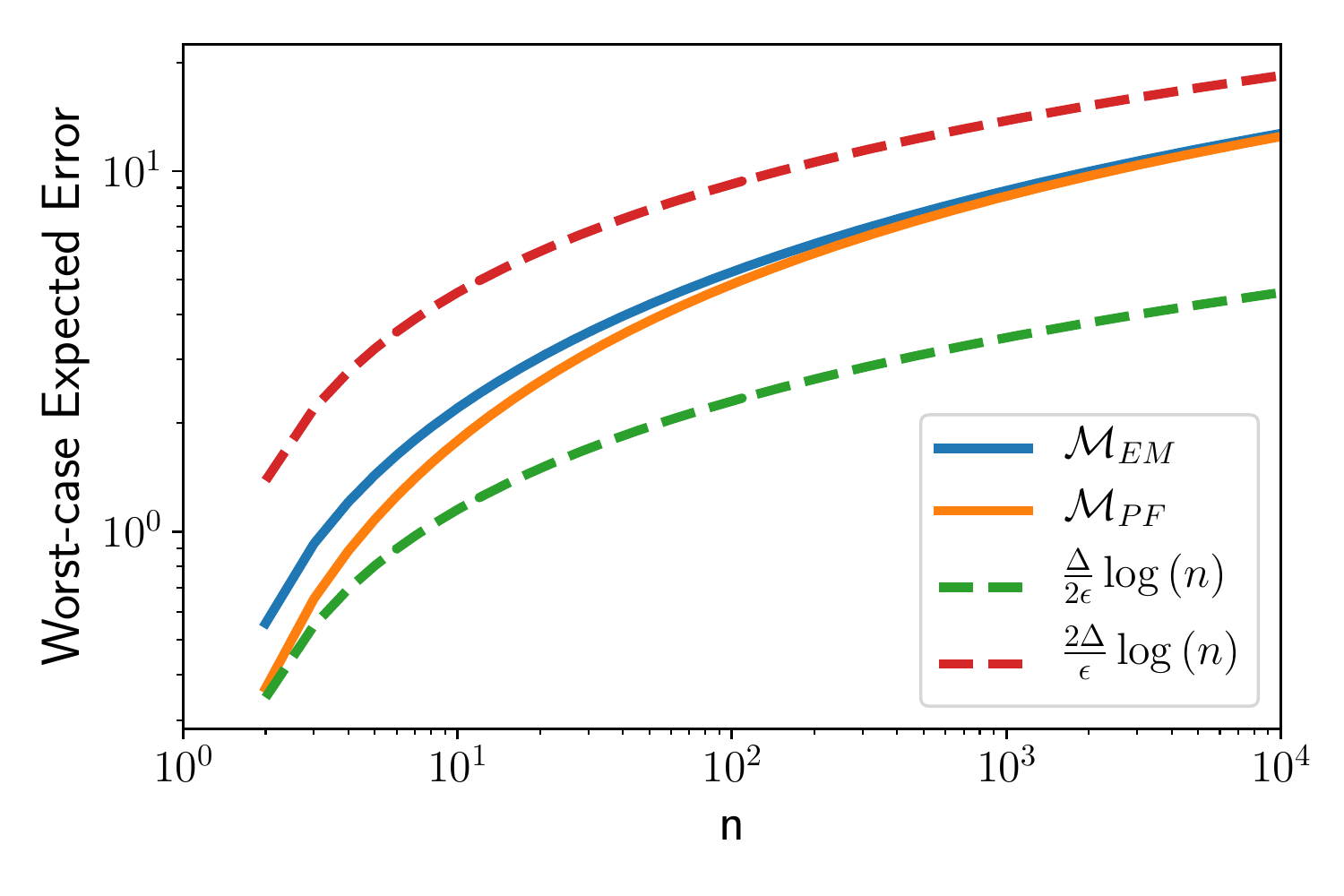}
\caption{} \label{fig:empf3}
\end{subfigure}
\caption{(a) Expected error of $\EM$ and $\PF$ as a function of $p$ for $n=3$, $\epsilon=1$, and $\Delta=1$.  (b) Ratio of expected errors between $\EM$ and $\PF$ as a function of $p$ for varying $n$.  
(c) Worst-case expected error of $\EM$ and $\PF$ as a function of $n$.
 \vspace{-1em}} \label{fig:empf}
\end{figure}


\vspace{-0.5em}
\section{Optimality of Permute-and-Flip}
\vspace{-0.5em}
In the previous section, we showed that permute-and-flip is never worse than the exponential mechanism, and is sometimes better by up to a factor of two. In other words, it \emph{Pareto dominates} the exponential mechanism.
 In \cref{prop:pareto} we show that permute-and-flip is in fact \emph{Pareto optimal} on the $2\Delta$-lattice $\mathbb{R}^{n}_{2 \Delta}$ (see \cref{sec:derivation}) with respect to the expected error.
That is, any regular mechanism that is better than permute-and-flip for some $\q \in \mathbb{R}^{n}_{2 \Delta} $ must be worse for some other $\qq \in \mathbb{R}^{n}_{2\Delta}$.  

\begin{restatable}[Pareto Optimality]{proposition}{pareto} \label{prop:pareto}
If $ \EE[\E(M_{PF}, \q)] > \EE[\E(\M, \q)] $ for some regular mechanism $\M$ and some $\q \in \mathbb{R}^n_{2\Delta}$, then there exists $\pvec{q}' \in \mathbb{R}^{n}_{2 \Delta} $ such that $ \EE[\E(M_{PF}, \pvec{q}')] < \EE[\E(\M, \pvec{q}')]$.
\end{restatable}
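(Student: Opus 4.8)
The plan is to argue by contradiction: assume some regular, $\epsilon$-differentially private mechanism $\M$ weakly Pareto-dominates $\PF$ on the $2\Delta$-lattice, i.e. $\EE[\E(\M,\q)] \le \EE[\E(\PF,\q)]$ for every $\q \in \mathbb{R}^n_{2\Delta}$ with strict inequality at some $\q_0$. Write $\delta(\q,r) = \Pr[\M(\q)=r] - \Pr[\PF(\q)=r]$, and let $T_\M(\q) = \sum_{r : q_r < q_*}\Pr[\M(\q)=r]$ be the total probability a mechanism puts on non-maximal items at $\q$, so that $\EE[\E(\M,\q)] - \EE[\E(\PF,\q)] = \sum_{r : q_r < q_*}(q_* - q_r)\,\delta(\q,r)$; this quantity is $\le 0$ for all $\q$ by assumption and $< 0$ at $\q_0$. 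Two structural facts underpin everything. (i) For non-maximal $r$ the constraint of \cref{prop:regdp} is \emph{necessary}, not merely sufficient: applying $\epsilon$-DP to the neighbouring score vector $\q + \Delta\e_r - \Delta\sum_{s\neq r}\e_s$ with outcome set $\{r\}$ and then using shift-invariance gives $\Pr[\M(\q)=r] \ge e^{-\epsilon}\Pr[\M(\q+2\Delta\e_r)=r]$; for $\PF$ this holds with \emph{equality}, since that is Case 1 of the recurrence \cref{eq:recursive2} that $\PF$ solves (\cref{prop:solution}), read as a single $2\Delta$-step on the lattice. Subtracting yields $\delta(\q,r) \ge e^{-\epsilon}\,\delta(\q+2\Delta\e_r,r)$ whenever $q_r < q_*$. (ii) By symmetry (\cref{def:regular}), both $\M$ and $\PF$ assign the common value $\tfrac{1}{n_*}(1-T)$ to every maximal item, so at any $\q$ every maximal item has the same $\delta$, equal to $-\tfrac{1}{n_*}\bigl(T_\M(\q)-T_{\PF}(\q)\bigr)$.

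The workhorse is a chain lemma: if $r$ is a non-maximal item at $\q \in \mathbb{R}^n_{2\Delta}$ with $\delta(\q,r)<0$, then iterating (i) the integer number $(q_*-q_r)/2\Delta$ of times along $\q,\ \q+2\Delta\e_r,\ \q+4\Delta\e_r,\dots$ — where $r$ stays non-maximal until its score equals $q_*$ — and multiplying by the positive constant $e^{\epsilon(q_*-q_r)/2\Delta}$ gives $\delta(\q^{\uparrow r},r)<0$, where $\q^{\uparrow r}$ is $\q$ with the $r$th coordinate raised to $q_*$. Since $r$ is maximal at $\q^{\uparrow r}$, fact (ii) reads $\delta(\q^{\uparrow r},r) = -\tfrac{1}{n_*}\bigl(T_\M(\q^{\uparrow r})-T_{\PF}(\q^{\uparrow r})\bigr)$, so $\delta(\q^{\uparrow r},r)<0$ forces $T_\M(\q^{\uparrow r}) > T_{\PF}(\q^{\uparrow r})$; moreover $\q^{\uparrow r}\in\mathbb{R}^n_{2\Delta}$ has exactly one fewer non-maximal item than $\q$. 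The same chaining of (i), applied when $r$ is the \emph{only} non-maximal item of $\q$, reaches an all-equal vector where $\delta\equiv0$ and hence shows $\delta(\q,r)\ge0$; so at any $\q$ with a single non-maximal item the error gap $(q_*-q_r)\delta(\q,r)$ is $\ge0$, and such a $\q$ can never be one at which $\M$ strictly beats $\PF$.

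The heart of the argument is the claim: \emph{if $T_\M(\q) > T_{\PF}(\q)$ for some $\q\in\mathbb{R}^n_{2\Delta}$, then $\EE[\E(\M,\q')] > \EE[\E(\PF,\q')]$ for some $\q'$} — which contradicts the standing Pareto-domination assumption. I would prove this by induction on the number $m\ge1$ of non-maximal items of $\q$. For $m=1$, the single non-maximal item $t$ carries all of the excess, so $\delta(\q,t) = T_\M(\q)-T_{\PF}(\q) > 0$ and $\EE[\E(\M,\q)]-\EE[\E(\PF,\q)] = (q_*-q_t)\delta(\q,t) > 0$. For $m\ge2$: if the error gap at $\q$ is already positive take $\q'=\q$; otherwise $\sum_{t:q_t<q_*}(q_*-q_t)\delta(\q,t)\le0$ while $\sum_{t:q_t<q_*}\delta(\q,t) = T_\M(\q)-T_{\PF}(\q) > 0$, and since every weight $q_*-q_t$ is strictly positive this forces $\delta(\q,s)<0$ for some non-maximal $s$. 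Apply the chain lemma to $(\q,s)$: it yields $\q^{\uparrow s}$ with $T_\M(\q^{\uparrow s}) > T_{\PF}(\q^{\uparrow s})$ and with $m-1\ge1$ non-maximal items, and the inductive hypothesis produces the required $\q'$. To finish: since $\EE[\E(\M,\q_0)]-\EE[\E(\PF,\q_0)] < 0$, the vector $\q_0$ has at least two non-maximal items (by the previous paragraph) and some non-maximal $r$ has $\delta(\q_0,r)<0$; the chain lemma gives $T_\M(\q_0^{\uparrow r}) > T_{\PF}(\q_0^{\uparrow r})$, and the claim above produces a $\q'$ at which $\M$ is strictly worse than $\PF$ — contradicting the assumption, and hence establishing that $\PF$ is Pareto optimal on $\mathbb{R}^n_{2\Delta}$.

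I expect the main obstacle to be less the combinatorial core than fact (i): pinning down that the single privacy constraint of \cref{prop:regdp} is \emph{necessary} for \emph{every} regular $\epsilon$-DP competitor — so that $\delta$ obeys the one-step inequality with no tightness assumption on $\M$ — together with the fact that $\PF$ attains it with equality on the lattice. The one genuinely clever step is feeding the Pareto-domination hypothesis back in: the excess total probability $\M$ places on non-maximal items at $\q$ implies it places strictly less probability than $\PF$ on some particular non-maximal item $s$ at $\q$, and this is precisely what lets the chain lemma be re-applied while strictly decreasing the number of non-maximal items, so the induction terminates.
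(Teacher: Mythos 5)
Your argument is correct and is essentially the paper's own proof in different notation: the paper likewise extracts a non-maximal item $r$ with $\Pr[\M(\q)=r]<\Pr[\PF(\q)=r]$, lifts its score to the maximum using the fact that the privacy constraint of \cref{prop:regdp} is necessary for $\M$ but tight for $\PF$, uses symmetry to convert the resulting deficit on maximal items into an excess of total mass on non-maximal items, and then inducts (the paper on the number of maximal entries $n_*$ increasing toward $n-1$, you on the number of non-maximal entries decreasing toward $1$ --- the same induction). Your explicit $\delta$/$T$ bookkeeping and the contradiction framing are cosmetic differences only.
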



Pareto optimality is a desirable property that differentiates permute-and-flip from the exponential mechanism. However, there are many Pareto optimal mechanisms, so we would like additional assurance that permute-and-flip is in some sense the ``right'' one.  To achieve this, we show that it is optimal in some reasonable ``overall'' sense.  In particular, it minimizes the expected error \emph{averaged} over a representative set of quality score vectors, as long as $\epsilon$ is sufficiently large.  


\begin{restatable}[Overall Optimality]{theorem}{overall} \label{thm:opt}
For all regular mechanisms $\M$ and all $\epsilon \geq \log{(\frac{1}{2}(3 + \sqrt{5}))} $,
$$ \sum_{\q \in Q} \EE[\E(\M_{PF}, \q)] \leq \sum_{\q \in Q} \EE[\E(\M, \q)] $$
where $Q = \set{ \q \in \mathbb{R}_{2\Delta}^{n} : q_* - q_r \leq 2 \Delta k, q_* = 0} $ for any integer constant $k \geq 0$.
\end{restatable}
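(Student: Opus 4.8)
The plan is to recognize $\M\mapsto\sum_{\q\in Q}\EE[\E(\M,\q)]$ as a \emph{linear} objective over (a polytope containing) the restriction of regular $\epsilon$-DP mechanisms to the finite set $Q$, to prove that $\PF$ is an optimal vertex of that polytope when $\epsilon$ is large enough, and to read off the threshold on $\epsilon$ from the feasibility requirement of a dual certificate.

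First I would rewrite the objective. Because $q_*=0$ for every $\q\in Q$, we have $\EE[\E(\M,\q)]=\sum_{r}(q_*-q_r)\,\Pr[\M(\q)=r]$, so $L(\M):=\sum_{\q\in Q}\EE[\E(\M,\q)]$ is linear in the quantities $\set{\Pr[\M(\q)=r] : \q\in Q,\ r\in\R}$ with nonnegative coefficients $q_*-q_r$. Next I would note that $Q$ is closed under the operation in \cref{prop:regdp}: if $\q\in Q$ and $r$ is non-maximal (so $q_*-q_r\ge 2\Delta$ on the $2\Delta$-lattice), then $\q+2\Delta\e_r\in Q$, since its maximum stays $0$ and every gap $q_*-q_s$ can only shrink. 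Hence the restriction of any regular $\epsilon$-DP mechanism to $Q$ satisfies a finite linear system: $\sum_r\Pr[\M(\q)=r]=1$ and $\Pr[\M(\q)=r]\ge0$ for $\q\in Q$; the privacy inequalities $\Pr[\M(\q)=r]\ge e^{-\epsilon}\Pr[\M(\q+2\Delta\e_r)=r]$ of \cref{prop:regdp} for $\q\in Q$ and non-maximal $r$; and the symmetry and monotonicity relations of \cref{def:regular} among vectors of $Q$. Writing $\mathcal P_Q$ for the resulting polytope, it suffices to show that $\PF$ restricted to $Q$ minimizes $L$ over $\mathcal P_Q$: by \cref{thm:dp}, $\PF$ is regular and $\epsilon$-DP, so it lies in $\mathcal P_Q$, whence $L(\PF)=\min_{\mathcal P_Q}L\le L(\M)$ for every regular $\epsilon$-DP $\M$.

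To establish optimality of $\PF$ I would use LP duality. By \cref{prop:solution}, on the $2\Delta$-lattice $\PF$ makes every privacy inequality with non-maximal $r$ tight (Case~1 of the recurrence), determines the probabilities on maximal items from the sum-to-one equation (Case~2), and assigns strictly positive probability to every item. Complementary slackness then pins down the dual equalities that a certificate must satisfy: free multipliers $\lambda_{\q}$ for the sum-to-one constraints and nonnegative multipliers $\mu_{\q,r}$ for the privacy constraints with, for every $(\q,r)$,
$$ q_*-q_r=\lambda_{\q}+\mu_{\q,r}\,\mathbf 1[r\text{ non-maximal in }\q]-e^{-\epsilon}\,\mu_{\q-2\Delta\e_r,\,r}\,\mathbf 1[\q-2\Delta\e_r\in Q,\ r\text{ non-maximal there}] $$
(and zero multipliers on the symmetry and monotonicity relations, consistent with those being slack at $\PF$); conversely, any $\lambda,\mu$ solving these equalities with $\mu\ge0$ certifies that $\PF$ is optimal. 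This is a recursion that unwinds cleanly: $\lambda_{\q}=0$ when $\q$ has a unique maximizer; $\lambda_{\q}$ is pinned consistently (by symmetry, independently of the chosen maximizer) when there are several; and each $\mu_{\q,r}$ is then obtained by descending in the coordinate $r$ until $q_*-q_r$ reaches $2\Delta k$. Solving by induction on the number of maximizers expresses $\lambda$ and $\mu$ as geometric sums in $e^{-\epsilon}$; it then remains to check that all $\mu_{\q,r}\ge0$. These conditions hold automatically for most $(\q,r)$, and for the extremal family they collapse to a quadratic condition in $e^{-\epsilon/2}$, namely $e^{-\epsilon}+e^{-\epsilon/2}\le1$, which is exactly $e^{\epsilon}\ge\tfrac12(3+\sqrt5)$.

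I expect the main obstacle to be this last step: writing the dual certificate in closed form, isolating the extremal $(\q,r)$-family, and confirming that the golden-ratio threshold is precisely what makes the worst dual coordinate nonnegative. A secondary subtlety is that the system above deliberately omits the \cref{prop:regdp} inequalities whose right-hand vector has spread exceeding $2\Delta k$ (such vectors are not in $Q$, even up to shift), so one must check that the retained constraints — in particular the monotonicity relations, which do live inside $Q$ — still pin $\PF$ as an optimum of $\mathcal P_Q$.
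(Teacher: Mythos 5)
Your overall strategy is the paper's: formulate the restriction to $Q$ as a finite LP, observe that $\PF$ satisfies the privacy and sum-to-one constraints with equality and assigns strictly positive probability everywhere, and certify optimality by exhibiting a feasible dual solution via complementary slackness, with the threshold on $\epsilon$ emerging from dual feasibility. Your setup of the primal polytope, the closure of $Q$ under $\q \mapsto \q + 2\Delta\e_r$ for non-maximal $r$, and the dual stationarity equation for the one-step privacy constraints are all sound. Your worry about the omitted constraints resolves exactly as you half-suggest: one shows $\PF$ is optimal for a \emph{relaxation} (the paper drops monotonicity and most symmetry constraints entirely), and feasibility of $\PF$ in the full LP then finishes the argument, so nothing needs to ``pin'' $\PF$ inside $\mathcal{P}_Q$ itself.

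The gap is that the step you defer is the entire content of the theorem, and the one concrete claim you make about it is wrong. Dual feasibility does not collapse to $e^{-\epsilon} + e^{-\epsilon/2} \leq 1$: nothing in the problem produces half-integer powers of $e^{-\epsilon}$ (all constraints live on the $2\Delta$-lattice with privacy loss exactly $\epsilon$ per step), so this quadratic appears to be reverse-engineered from the known threshold rather than derived. The condition that actually arises is $\sum_{t=1}^{\infty} t e^{-t\epsilon} \leq 1$, i.e.\ $e^{-\epsilon}/(1-e^{-\epsilon})^2 \leq 1$, a quadratic in $e^{\epsilon}$ equivalent to $e^{\epsilon} \geq \frac{1}{2}(3+\sqrt{5})$ via $1+z \leq z^2$ with $z = e^{\epsilon}-1$. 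To get there, the paper first chains the one-step privacy constraints so that each variable $x_r(\q)$ with $q_r < 0$ is tied directly to $x_r(\q - q_r \e_r)$; the dual constraint at a maximal coordinate then reads $n_* y_r(\q) - \sum_{t=1}^{k} y_r(\q - 2\Delta t \e_r) e^{-t\epsilon} \geq q_r$, and feasibility of the recursively defined $y$ is proved by induction on $n_*$ through the sandwich bounds $-2\Delta/n_* \leq y_r(\q) \leq 0$ when $q_r = 0$ and $0 \leq y_r(\q) \leq -q_r$ when $q_r < 0$. That double induction --- not the identification of a single ``extremal family'' with a closed-form multiplier --- is where the golden-ratio condition is actually used, and it is the part your plan would still need to supply.
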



This theorem is proved by analyzing a linear program~(LP) that describes the behavior of an optimal regular mechanism on the $2\Delta$-lattice, using the linear constraints described in \cref{sec:derivation} to enforce privacy and regularity, and the linear objective from the theorem. The result holds for the \emph{bounded} lattice with $q_*=0$ and all scores at most $k$ lattice points away from zero. Boundedness is required to have a finite number of variables and constraints. The restriction that $q_*=0$ is without loss of generality: by shift-invariance, a regular mechanism is completely defined by its behavior on vectors with $q_*=0$.


\cref{thm:opt} \emph{guarantees} that permute-and-flip is optimal if $\epsilon$ is large enough. For smaller $\epsilon$, we can empirically check how close to optimal $\PF$ is by solving the LP. This is computationally prohibitive in general, because the LP size grows quickly and becomes intractable for large $n$ and $k$, but we can make comparisons for smaller lattices. 
Figure \ref{fig:empfopt} shows the ``optimality ratio'' of permute-and-flip and the exponential mechanism for various settings of $\epsilon$, $n$, and $k$.  The optimality ratio of a mechanism 
$\M$ 
is the ratio $ \frac{\sum_{\q}\EE[\E(\M,\q)]}{\sum_{\q} \EE[\E(\M_*,\q)]}$, where $\M_*$ is the optimal mechanism \revision{on the bounded $2\Delta$-lattice} obtained by solving the linear program.

As shown in \cref{fig:opt1} and \cref{fig:opt2}, the optimality ratio for permute-and-flip is equal to one above the threshold, as expected.  Furthermore, it \revision{barely exceeds one} even when $\epsilon$ is below the threshold: the largest measured value is about $1.01$.  The ratio grows slowly with $k$ (\cref{fig:opt2}) and shows no strong dependence on $n$ (\cref{fig:opt1}). For the exponential mechanism (\cref{fig:opt3}), the optimality ratio \revision{is much more significantly larger than one}, and generally increases with $\epsilon$, approaching two for larger $k$ and $\epsilon$.
Interestingly, the optimality ratio approaches one for both mechanisms as $\epsilon \rightarrow 0$.

\begin{figure}[t]
\begin{subfigure}[b]{0.33\textwidth}
\centering
\includegraphics[width=\textwidth]{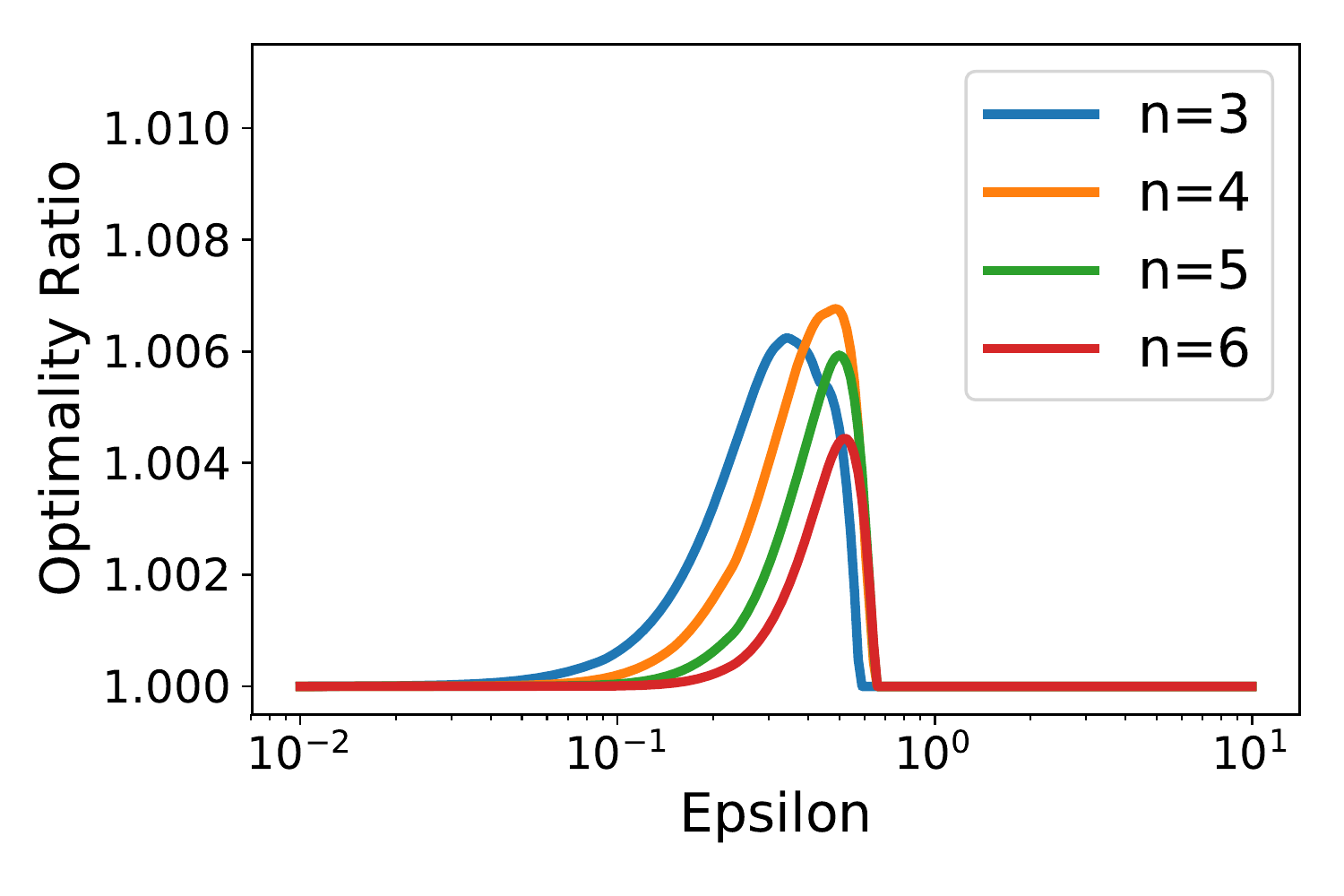}
\caption{$\M_{PF}$, $k=4$} \label{fig:opt1}
\end{subfigure}
\begin{subfigure}[b]{0.33\textwidth}
\centering
\includegraphics[width=\textwidth]{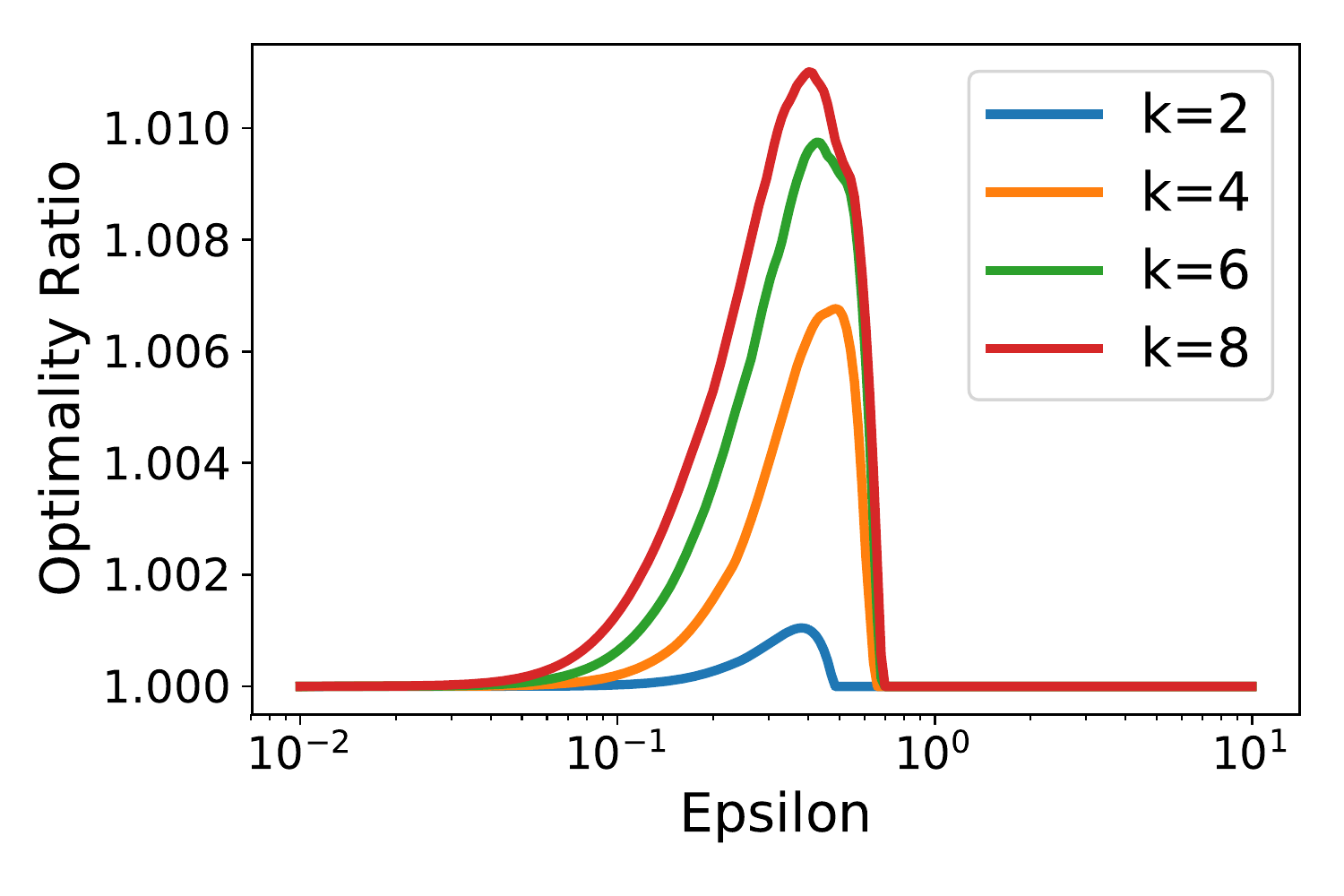}
\caption{$\M_{PF}$, $n=4$} \label{fig:opt2}
\end{subfigure}
\begin{subfigure}[b]{0.33\textwidth}
\centering
\includegraphics[width=\textwidth]{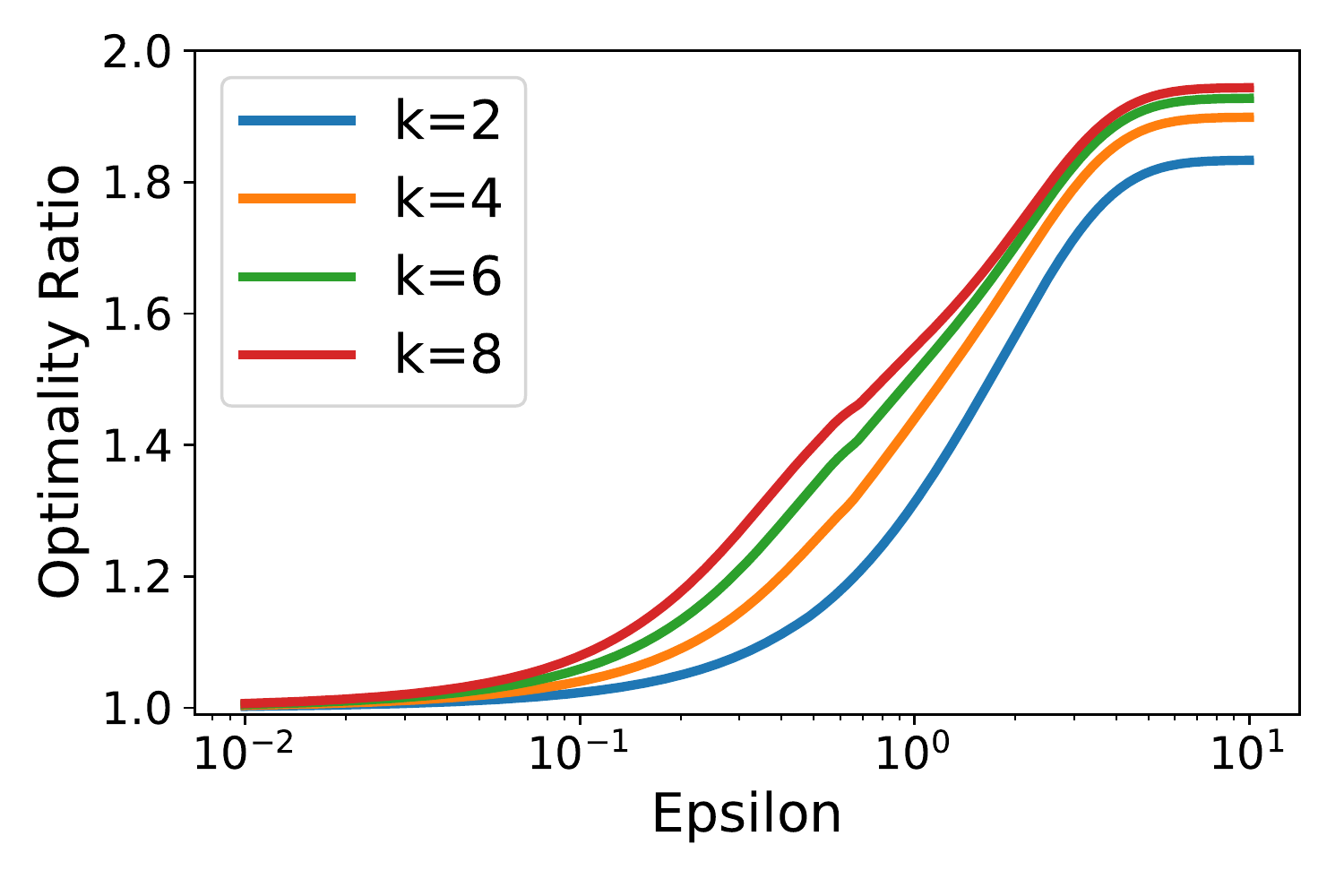}
\caption{$\M_{EM}$, $n=4$} \label{fig:opt3}
\end{subfigure}
\caption{Optimality ratio for $\PF$ and $\EM$ for various $n$ and $k$. \vspace{-1em}} \label{fig:empfopt}
\end{figure}

\vspace{-0.5em}
\section{Experiments}
\vspace{-0.5em}
\label{sec:experiments}

We now perform an empirical analysis of the permute-and-flip mechanism. Our aim is to quantify the utility improvement from permute-and-flip relative to the exponential mechanism for different values of $\epsilon$ on real-world problem instances. We use five representative data sets from the DPBench study: HEPTH, ADULTFRANK, MEDCOST, SEARCHLOGS, and PATENT \cite{hay2016principled} and consider the tasks of mode and median selection. In each case, the candidates are the 1024 bins of a discretized domain.
For each task, we construct the quality score vector and then \emph{analytically} compute the expected error for a range of different $\epsilon$ for both the permute-and-flip and exponential mechanisms using their probability mass functions. 
Below we summarize our experimental findings; additional experimental results can be found in \cref{sec:extra_experiments}.  
\vspace{-0.5em}
\paragraph{Mode.}
For mode selection, the quality function is the number of items in the bin, which has sensitivity one.
%
%
%
%
\cref{fig:mode} shows expected error as a function of $\epsilon$ for the HEPTH data set. Note that expected error is plotted on a log scale, while $\epsilon$ is plotted on a linear scale, and we truncate the plot when the expected error falls below one. The ratio of the expected error of the exponential mechanism to that of permute-and-flip ranges from one (for smaller $\epsilon$) to two (for larger $\epsilon$).  For the range of $\epsilon$ that provide reasonable utility, the improvement is closer to two.  For example, at $\epsilon=0.04$, the ratio is $1.84$.  The expected error of $\PF$ at this value of $\epsilon$ is about $5.4$, and $\EM$ would need about $1.27$ times larger privacy budget to achieve the same utility.   
\vspace{-0.5em}
\paragraph{Median.}
For median selection, the quality function is the (negated) number of individuals that must be added or removed to make a given bin become the median, which is also a sensitivity one function \cite{mcsherry2009privacy}.
%
\cref{fig:median} again shows the expected error as a function of $\epsilon$ for the HEPTH data set. Again, the ratio of expected errors ranges from one (for smaller $\epsilon$) to two (for larger $\epsilon$). For the range of $\epsilon$ that provide reasonable utility, the improvement is closer to two.  For example, at $\epsilon=0.01$, the ratio is $1.93$.  The expected error of $\PF$ at this value of $\epsilon$ is about $13.7$, and $\EM$ would need about $1.19$ times larger privacy budget to achieve the same utility.

In \cref{fig:mode,fig:median}, the expected errors of $\EM$ and $\PF$ become \emph{approximately parallel lines} as $\epsilon$ increases. Because the plots use linear scale for $\epsilon$ and logarithmic scale for expected error this means that the error of both mechanisms behaves approximately as $c \exp(-\epsilon)$ for some $c$. Additionally, $\PF$ offers an asymptotically constant \emph{multiplicative} improvement in expected error (a factor of two) and an \emph{additive} savings of $\epsilon$.  For the range of $\epsilon$ that demonstrate the most reasonable privacy-utility tradeoffs, this additive improvement is a meaningful fraction of the privacy budget.  

In \cref{fig:median2} we plot the expected error of $\EM$ and $\PF$ on all five data sets. For each dataset, we use the value of $\epsilon$ where $\EM$ gives a expected error of $50$. This allows us to plot all datasets on the same scale for some $\epsilon$ that gives a reasonable tradeoff between privacy and utility. The improvements are significant, and close to a factor of two for all data sets.

\begin{figure}[t]
\begin{subfigure}[b]{0.33\textwidth}
\centering
\includegraphics[width=\textwidth]{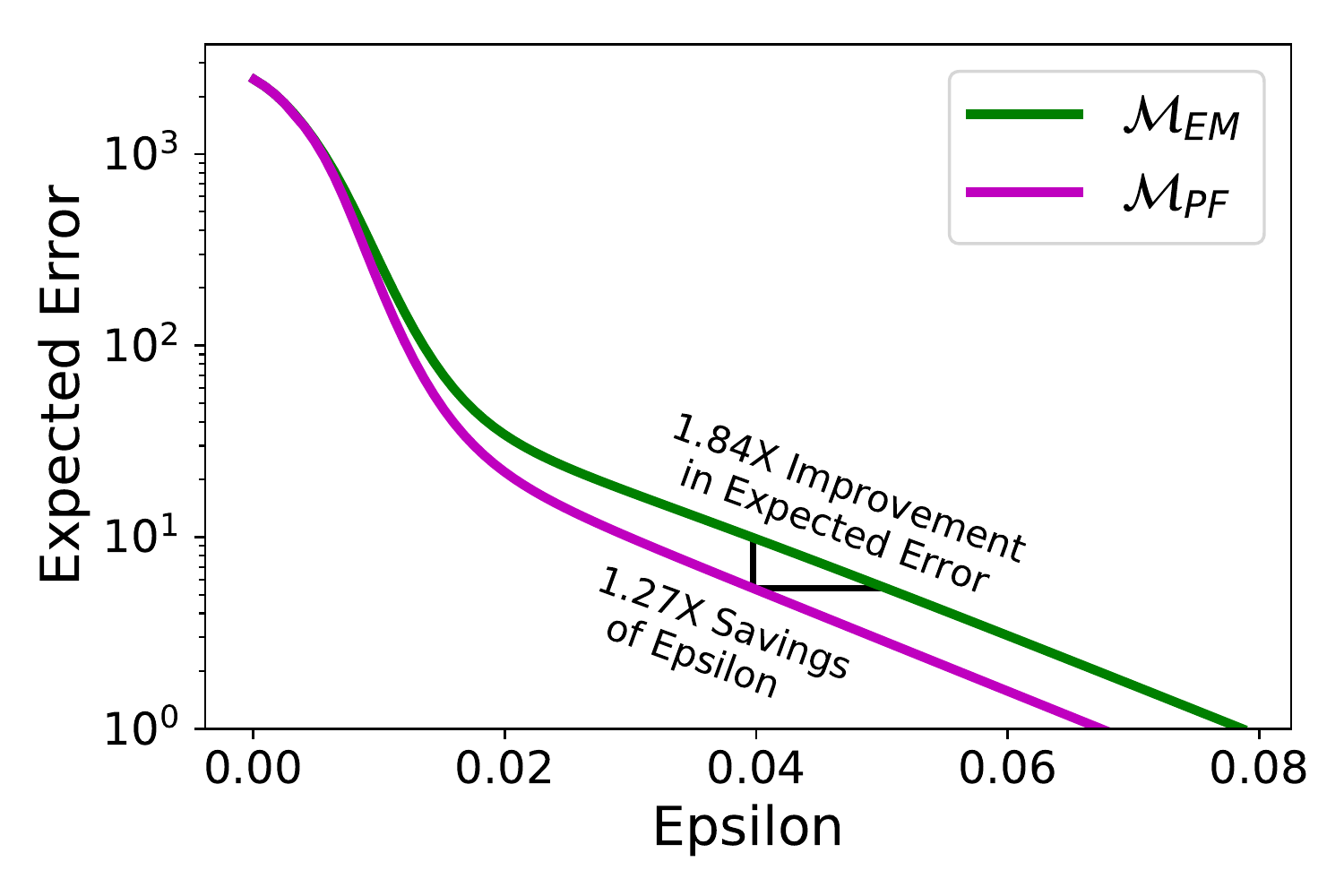}
\caption{Mode} \label{fig:mode}
\end{subfigure}
\hspace{-0.5em}
\begin{subfigure}[b]{0.33\textwidth}
\centering
\includegraphics[width=\textwidth]{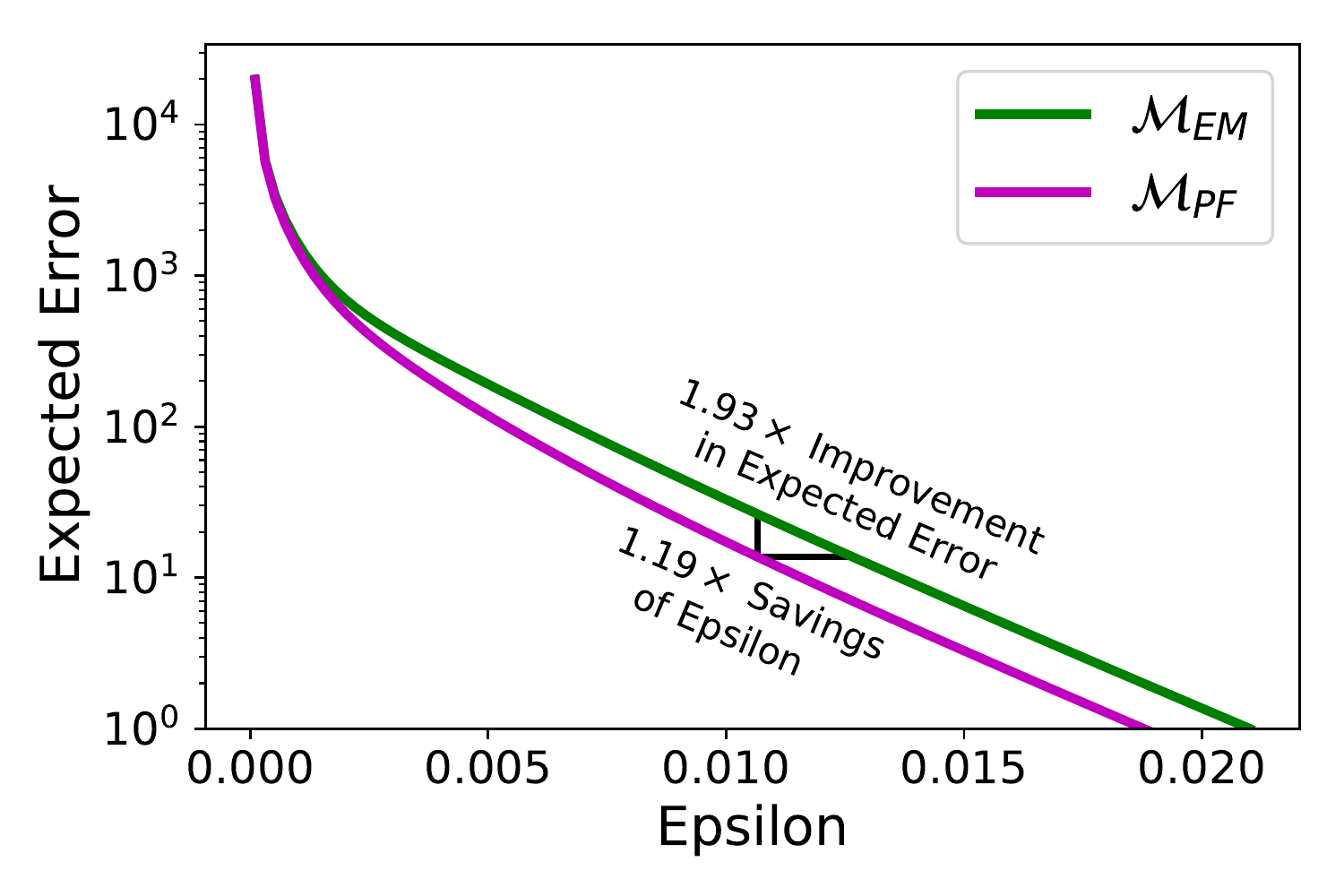}
\caption{Median} \label{fig:median}
\end{subfigure}
\hspace{-0.5em} 
\begin{subfigure}[b]{0.34\textwidth}
\centering
\includegraphics[width=\textwidth]{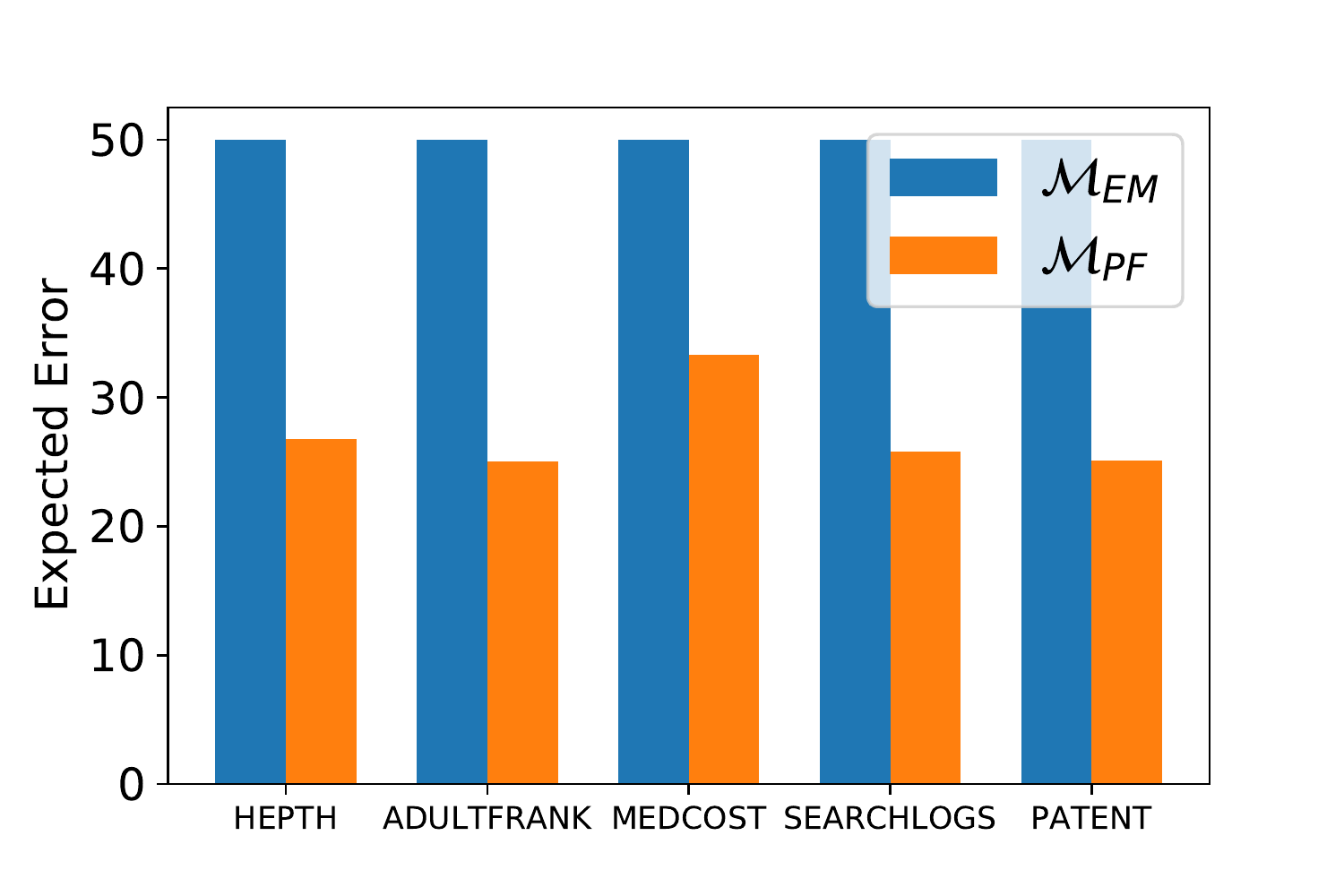}
\caption{Median \vspace{0.5em}} \label{fig:median2}
\end{subfigure}
\caption{(a) and (b) Expected error of $\EM$ and $\PF$ on the HEPTH dataset for varying $\epsilon$.  \\ (c) Expected error of $\PF$ on five datasets for $\epsilon$ where expected error of $\EM$ is $50$.\vspace{-1.5em}} \label{fig:hepth}
\end{figure}


\vspace{-1em}
\section{Related Work}
\vspace{-1em}
The exponential mechanism and the problem of differentially private selection have been studied extensively in prior work \cite{bafna2017price,awan2019benefits,alda2017optimality,steinke2017tight,chaudhuri2014large,beimel2013private,minami2016differential,thakurta2013differentially,lantz2015subsampled,raskhodnikova2016lipschitz,dong2019optimal,liu2019private,ilvento2019implementing}.  

The most common alternative to the exponential mechanism for the private selection problem is \emph{report noisy max} \cite{dwork2014algorithmic}, which adds noise to each quality score and outputs the item with the largest noisy score. While we did not compare directly to this mechanism, our initial findings (\cref{sec:noisymax}) indicate that it is competitive with the exponential mechanism, but neither mechanism Pareto dominates the other --- report noisy max is better for some quality score vectors, while the exponential mechanism is better for others. 
Several other mechanisms have been proposed for the private selection problem that may work better under different assumptions and special cases \cite{chaudhuri2014large,beimel2013private,minami2016differential,thakurta2013differentially,lantz2015subsampled,awan2019benefits}.  

A generalization of the exponential mechanism was proposed in \cite{raskhodnikova2016lipschitz} that can effectively handle quality score functions with varying sensitivity.
This technique works by defining a new quality score function that balances score and sensitivity \revision{and then running the exponential mechanism}, and is therefore also compatible with the permute-and-flip-mechanism.
The exponential mechanism was also studied in \cite{dong2019optimal}, where the focus was to improve the privacy analysis for a composition of multiple sequential executions of the exponential mechanism.
They also show that the analysis can be improved in some cases by using a measure of the \emph{range} of the score function instead of the sensitivity (though in commonly-used score functions the range and sensitivity usually coincide).
This improvement is orthogonal to our approach, and it is straightforward to extend the analysis of the permute-and-flip mechanism in a similar way.

A new mechanism for \emph{private selection from private candidates} was studied in \cite{liu2019private}.  Instead of assuming the quality functions have bounded sensitivity, it is assumed that the quality functions are themselves differentially private mechanisms.  This relaxed assumption is appealing for many problems where the traditional exponential mechanism does not apply, like hyperparameter optimization. 

The optimality of the exponential mechanism was studied in \cite{alda2017optimality}, where the authors considered linear programs for computing mechanisms that are optimal on average (similar to our \cref{thm:opt}).  They restricted attention to scenarios where the input/output universe of the mechanism is a graph, and each node is associated with a database.  They argued that the optimal mechanism should satisfy privacy constraints with equality for connected nodes in this graph, and showed that the exponential mechanism was optimal up to a constant factor of two in this setting. 

Other works have carefully analyzed privacy constraints to construct optimal mechanisms for other tasks and privacy definitions, including predicate counting queries \cite{ghosh2012universally}, information theoretic quantities \cite{kairouz2014extremal}, and generic low-sensitivity functions \cite{balle2018improving}.  


\section{Conclusions and Open Questions}

In this work we proposed permute-and-flip, a new mechanism for differentially private selection that can be seen as a replacement for the exponential mechanism.
\revision{For every set of scores, the expected error of the permute-and-flip mechanism is not higher than the expected error of exponential mechanism, and can be lower by a factor of two; we observe factors close to two in real-world settings. Furthermore, we prove that the permute-and-flip mechanism is optimal in a fairly strong sense overall. Improving the exponential mechanism by a factor between one and two has the potential for wide-reaching impact, since it is one of the most important primitives in differential privacy.}

\revision{Some remaining open questions are:}
\begin{enumerate}[leftmargin=*]
\item We focused primarily on the utility improvements offered by permute-and-flip in this work.  In some cases, permute-and-flip may also offer runtime improvement. Specifically, if $q_*$ is known a-priori, then permute-and-flip can potentially terminate early without evaluating all $n$ quality scores.  Identifying situations where this potential benefit can be realized and provide meaningful improvement is an interesting open question.   
\item We demonstrated meaninful improvement over the exponential mechanism for simple tasks like median and mode estimation.  It would be interesting to apply permute-and-flip to more advanced mechanisms that use the exponential mechanism, and quantify the improvement there.  
\item \revision{Our overall optimality result restricts to score vectors on the bounded $2\Delta$-lattice. It would be interesting to understand more fully the nature of optimal mechanisms on more general domains or with other ways of averaging or aggregating over score vectors.}
\end{enumerate}

\section*{Broader Impact}
\revision{Our work fits in the established research area of differential privacy, which enables the positive societal benefits of gleaning insight and utility from data sets about people while offering formal guarantees of privacy to individuals who contribute data. While these benefits are largely positive, unintended harms could arise due to misapplication of differential privacy or misconceptions about its guarantees. Additionally, difficult social choices are faced when deciding how to balance privacy and utility. Our work addresses a foundational differential privacy task and enables better utility-privacy tradeoffs within this broader context.}

\section*{Acknowledgements}
  
We would like to thank Gerome Miklau and the anonymous reviewers for their helpful comments to improve the paper.  This work was supported by the National Science Foundation under grants CNS-1409143, IIS-1749854, IIS-1617533, and by DARPA and SPAWAR under contract N66001-15-C-4067.

\bibliography{refs}
\bibliographystyle{abbrv}

\appendix

\section{Probability Mass Function of $\PF$}
We begin by deriving two different expressions for the probability mass function of $\PF$, which we will reference in other proofs throughout the supplement.

\begin{restatable}{lemma}{pmf} \label{thm:pmf}
The probability mass function (pmf) of $\M_{PF}$ can be expressed as:
$$ \Pr[\M_{PF}(\q) = r] = p_r \sum_{\pi} \frac{1}{n!} \prod_{s : \pi(s) < \pi(r)} (1 - p_s) $$
where $\pi$ is a permutation and $p_r = \exp{\big( \frac{\epsilon}{2 \Delta} (q_r - q_*)\big)} $.
\end{restatable}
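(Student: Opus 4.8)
The plan is to prove the formula by conditioning on the random permutation used in Algorithm~\ref{alg:mech} and exploiting independence of the coin flips. It is convenient to first restate $\PF$ in a form that pre-samples all of its randomness: draw a uniformly random permutation $\pi$ of $\R$, so that $\pi(s)$ is the position at which item $s$ would be examined, and independently draw $B_s \sim \text{Bernoulli}(p_s)$ for every $s \in \R$, where $p_s = \exp\big(\frac{\epsilon}{2\Delta}(q_s - q_*)\big) \in (0,1]$ (note $p_s \le 1$ since $q_s \le q_*$, so these are valid Bernoulli parameters). Then $\PF(\q)$ returns the item $r$ with smallest $\pi(r)$ among those with $B_r = 1$. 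This matches Algorithm~\ref{alg:mech}, since scanning items in the order given by $\pi$ and returning the first whose coin is heads yields exactly this item, and the coins of items never examined are irrelevant. The algorithm always terminates with a unique output: because $q_* = \max_s q_s$, some $s$ has $p_s = 1$, hence $B_s = 1$ almost surely, so the set $\{r : B_r = 1\}$ is nonempty.

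Next I would compute the conditional probability given $\pi$. Fix a permutation $\pi$. The event $\{\PF(\q) = r\}$ occurs if and only if $B_r = 1$ and $B_s = 0$ for every $s$ with $\pi(s) < \pi(r)$, i.e.\ every item examined before $r$ comes up tails. Since the $B_s$ are mutually independent and independent of $\pi$,
\[ \Pr[\PF(\q) = r \mid \pi] = \Pr[B_r = 1] \prod_{s : \pi(s) < \pi(r)} \Pr[B_s = 0] = p_r \prod_{s : \pi(s) < \pi(r)} (1 - p_s). \]
Averaging over the uniformly random permutation then gives
\[ \Pr[\PF(\q) = r] = \frac{1}{n!} \sum_{\pi} \Pr[\PF(\q) = r \mid \pi] = \frac{1}{n!} \sum_{\pi} p_r \prod_{s : \pi(s) < \pi(r)} (1 - p_s) = p_r \sum_{\pi} \frac{1}{n!} \prod_{s : \pi(s) < \pi(r)} (1 - p_s), \]
which is the claimed expression, after pulling the $\pi$-independent factor $p_r$ out of the sum.

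I do not expect a genuine obstacle in this proof; the only point that needs a little care is the reduction to the pre-sampled formulation, which makes the independence of the coin flips from the permutation manifest and lets the conditioning argument go through cleanly, together with the remark that $p_s \le 1$ for all $s$ so that Algorithm~\ref{alg:mech} is well defined and terminates. As a sanity check, summing the resulting formula over $r \in \R$ should give $1$; this follows from a telescoping identity over the positions $1, \dots, n$ of $\pi$ and simultaneously re-derives that $\PF$ defines a genuine probability distribution over $\R$.
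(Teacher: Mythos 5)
Your proof is correct and follows essentially the same route as the paper's: condition on the random permutation, use independence of the coin flips to write the conditional probability as $p_r \prod_{s:\pi(s)<\pi(r)}(1-p_s)$, and average over the $n!$ equally likely permutations. Your version is slightly more explicit about the pre-sampled formulation, the validity of the Bernoulli parameters, and termination, but these are elaborations of the same argument rather than a different approach.
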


\begin{proof}
Let $X_s$ be the event that the $s$th coin is heads, and let $\pi$ be a random permutation. The events $X_s$ are independent. The $r$th item is selected if $X_r$ is true, and $X_s$ is false for all $s$ that come before $r$ in the permutation $\pi$, that is:
\begin{align*}
\Pr[\PF(\q) = r] &= \Pr\Bigg[X_r \cap \Bigg(\bigcap_{s : \pi(s) < \pi(r)} \! \!\neg X_s\Bigg) \Bigg] \\
&= \frac{\Pr[X_r]}{n!} \sum_{\pi} \prod_{s : \pi(s) < \pi(r)} (1 - \Pr[X_s]) \\
&= \frac{p_r}{n!} \sum_{\pi} \prod_{s : \pi(s) < \pi(r)} (1 - p_s).
\end{align*}
\end{proof}

\begin{lemma} \label{thm:pmf2}
An equivalent expression for the probability mass function of $\PF$ is:
\begin{align*}
Pr[\M_{PF}(\q) = r] &= p_r \sum_{\substack{S \subseteq \R \\ r \notin S}} \frac{(-1)^{|S|}}{|S|+1} \prod_{s \in S} p_s.
\end{align*}%
\end{lemma}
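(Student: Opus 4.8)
The plan is to obtain the subset-indexed formula directly from the permutation formula of \cref{thm:pmf} by expanding the product and interchanging the two (finite) sums. Write $B_\pi = \set{s : \pi(s) < \pi(r)}$ for the set of candidates preceding $r$ in $\pi$, and note $r \notin B_\pi$. Starting from
$$ \Pr[\M_{PF}(\q) = r] = \frac{p_r}{n!} \sum_{\pi} \prod_{s \in B_\pi} (1 - p_s), $$
I would expand each inner product as $\prod_{s \in B_\pi}(1-p_s) = \sum_{S \subseteq B_\pi} (-1)^{|S|} \prod_{s \in S} p_s$ and swap the order of summation. A subset $S \subseteq \R$ (necessarily with $r \notin S$) contributes the term $(-1)^{|S|}\prod_{s\in S}p_s$ for exactly those permutations $\pi$ with $S \subseteq B_\pi$, so
$$ \Pr[\M_{PF}(\q) = r] = \frac{p_r}{n!} \sum_{\substack{S \subseteq \R \\ r \notin S}} (-1)^{|S|} \Big( \prod_{s \in S} p_s \Big) N_S, \qquad N_S := \big| \set{\pi : S \subseteq B_\pi} \big|. $$

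The one substantive step is evaluating $N_S$, the number of permutations of $\R$ in which every element of $S$ precedes $r$. This follows from a symmetry argument: among all $n!$ permutations, the relative order of the $|S|+1$ elements of $S \cup \set{r}$ is uniform over the $(|S|+1)!$ possibilities, and the event $S \subseteq B_\pi$ is precisely the event that $r$ comes last among these $|S|+1$ elements, which has probability $\tfrac{1}{|S|+1}$. Hence $N_S = \tfrac{n!}{|S|+1}$. Substituting, the factor $n!$ cancels and we arrive at
$$ \Pr[\M_{PF}(\q) = r] = p_r \sum_{\substack{S \subseteq \R \\ r \notin S}} \frac{(-1)^{|S|}}{|S|+1} \prod_{s \in S} p_s, $$
which is the claimed identity.

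This argument is essentially bookkeeping, so I do not anticipate a genuine obstacle; the only point to handle with a little care is the count of $N_S$ (and noting that the interchange of the finite sums over $\pi$ and over subsets is unconditionally valid). An alternative derivation, if one prefers to avoid the explicit count, is to model a uniform random permutation by i.i.d.\ $\mathrm{Uniform}[0,1]$ ranks $U_1,\dots,U_n$, so that $B_\pi = \set{s : U_s < U_r}$; conditioning on $U_r = x$ makes the events $\set{U_s < x}$ independent with probability $x$, giving $\EE\big[\prod_{s \in B_\pi}(1-p_s) \mid U_r = x\big] = \prod_{s \neq r}(1 - x p_s)$ and hence $\Pr[\M_{PF}(\q)=r] = p_r \int_0^1 \prod_{s\neq r}(1-xp_s)\,dx$. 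Expanding the integrand and using $\int_0^1 x^{|S|}\,dx = \tfrac{1}{|S|+1}$ then yields the same formula.
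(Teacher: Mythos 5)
Your proposal is correct, and it is essentially the paper's own argument in a different order of operations: the paper derives the formula directly from the event decomposition $X_r \cap \neg\bigcup_{s\neq r} Y_s$ via probabilistic inclusion--exclusion, whereas you start from the permutation formula of \cref{thm:pmf} and expand $\prod_{s\in B_\pi}(1-p_s)$ algebraically before swapping sums, but both hinge on exactly the same two ingredients --- the signed sum over subsets $S$ and the fact that all of $S$ precedes $r$ in a $\tfrac{1}{|S|+1}$ fraction of permutations. Your counting of $N_S=\tfrac{n!}{|S|+1}$ and the interchange of finite sums are both sound, so there is no gap.
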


\begin{proof}

Let $X_s$ again denote the event that the $s$th coin is heads. Let $\pi$ be a random permutation and let $Y_s$ be the event $X_s \cap (\pi(s) < \pi(r))$, or ``the $s$th coin is heads and appears before the $r$th coin in the random permuation''. Note that the events $X_r$ and $Y_s$ are independent for $r \neq s$.

By independence and the inclusion-exclusion principle:
\begin{align*}
\Pr[\PF(\q) = r] &= \Pr\Big[X_r \cap \Big(\neg \bigcup_{s \neq r} Y_s\Big)\Big] \\
&= \Pr[X_r] \Big(1 - \Pr\Big[\bigcup_{s \neq r} Y_s \Big]\Big) \\
&= \Pr[X_r] \Big(1 - \sum_{\substack{S \subseteq \R \\ r \notin S \\ |S| \geq 1}} (-1)^{|S|} \Pr\Big[\bigcap_{s \in S} Y_s \Big]\Big)
\end{align*}

We now split the event $\bigcap_{s \in S} Y_S$, or ``all coins in $S$ appear before $r$ and are heads'', into the conjunction of the events ``all coins in $S$ appear before $r$'' and ``all coins in $S$ are heads'', and continue as:

\begin{align*}
\Pr[\PF(\q) = r] &= \Pr[X_r] \Bigg(1 - \sum_{\substack{S \subseteq \R \\ r \notin S \\ |S| \geq 1}} (-1)^{|S|} \cdot \Pr\Big[\bigcap_{s \in S} \big(\pi(s) < \pi(r)\big)\Big] \cdot \Pr\Big[\bigcap_{s \in S} X_s\Big] \Bigg) \\
&= \Pr[X_r] \Big(1 - \sum_{\substack{S \subseteq \R \\ r \notin S \\ |S| \geq 1}} (-1)^{|S|} \frac{1}{|S|+1} \prod_{s \in S} \Pr[X_s] \Big) \\
&= \Pr[X_r] \sum_{\substack{S \subseteq \R \\ r \notin S}} \frac{(-1)^{|S|}}{|S|+1} \prod_{s \in S} \Pr[X_s] \\
&= p_r \sum_{\substack{S \subseteq \R \\ r \notin S}} \frac{(-1)^{|S|}}{|S|+1} \prod_{s \in S} p_s \\
\end{align*}
\end{proof}

\section{Proofs for Section 3: Permute-and-Flip Mechanism}
\revision{In this section, we first prove \cref{prop:regdp}, which gives simplifed sufficient conditions for privacy for a regular mechanism. We then use \cref{prop:regdp} to prove \cref{thm:dp}, which establishes the privacy of permute-and-flip. Finally, we prove \cref{prop:solution}, which shows that permute-and-flip satisfies the recurrence used in the derivation. }

\regdp*
\begin{proof}

Let $\M$ be a regular mechanism satisfying:
\begin{equation} \label{eq:regdpassumption}
\Pr[\M(\q) = r] \geq \exp{(-\epsilon)} \Pr[\M(\q + 2\Delta\e_r) = r] 
\end{equation}

Our goal is to show that $\M$ is differentially private, i.e., if for all $\q \in \mathbb{R}^{n}$, $r\in\R$, and $\vec{z} \in [-\Delta, \Delta]^{n}$,
$$ \Pr[\M(\q) = r] \leq \exp{(\epsilon)} \Pr[\M(\q + \vec{z}) = r] $$

Using this assumption together with the regularity of $\M$, we obtain:

\begin{align*}
\Pr[\M(\q) = r] &\leq \exp{(\epsilon)} \Pr[\M(\q - 2\Delta\e_r) = r] && \text{by \cref{eq:regdpassumption}} \\
&= \exp{(\epsilon)} \Pr[\M(\q + \Delta \vec{1} - 2\Delta\e_r) = r]  && \text{by shift-invariance} \\
&\leq \exp{(\epsilon)} \Pr[\M(\q + \vec{z}) = r] && \text{by monotonicity} \\
\end{align*}
Thus, we conclude that $\M$ is differentially-private, as desired.  This completes the proof.
\end{proof}


\revision{
Before proving \cref{thm:dp}, we will argue regularity.

\begin{lemma} $\PF$ is regular. \label{lem:reg} \end{lemma}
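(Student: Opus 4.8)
The plan is to verify the three requirements of \cref{def:regular} for $\PF$ directly from the probability mass function of \cref{thm:pmf}, $\Pr[\PF(\q)=r]=p_r\sum_{\pi}\frac{1}{n!}\prod_{s:\pi(s)<\pi(r)}(1-p_s)$ with $p_r=\exp\!\big(\tfrac{\epsilon}{2\Delta}(q_r-q_*)\big)$. \emph{Symmetry} and \emph{shift-invariance} are essentially immediate: applying a permutation matrix $\Pi$ to $\q$ only relabels the coin probabilities, and the sum over $\pi$ is invariant under relabeling, which gives $\Pr[\PF(\Pi\q)=\pi(r)]=\Pr[\PF(\q)=r]$ (this is also clear from the algorithm, which iterates through a uniformly random permutation); and $q_*=\max_r q_r$ shifts by $c$ under $\q\mapsto\q+c\vec{1}$, so every $p_r$, hence the whole distribution of $\PF$, is unchanged.

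For \emph{monotonicity}, it is enough to treat single-coordinate changes: given $\q,\pvec{q}'$ with $q_r\le q'_r$ and $q_s\ge q'_s$ for all $s\ne r$, move from $\q$ to $\pvec{q}'$ by first raising $q_r$ and then lowering the $q_s$ one coordinate at a time. So I need (A)~$\Pr[\PF(\q)=r]$ is non-decreasing in $q_r$ with the other coordinates fixed, and (B)~$\Pr[\PF(\q)=r]$ is non-increasing in $q_s$ (for $s\ne r$) with the other coordinates fixed. For (A), split $\mathbb{R}$ at $q_r=M:=\max_{s\ne r}q_s$: when $q_r<M$, the factor $p_r$ is increasing in $q_r$ while the non-negative sum $\sum_{\pi}\frac{1}{n!}\prod_{s:\pi(s)<\pi(r)}(1-p_s)$ does not depend on $p_r$; when $q_r\ge M$, we have $p_r=1$ and each $p_s=\exp(\tfrac{\epsilon}{2\Delta}(q_s-q_r))\in(0,1]$ is non-increasing in $q_r$, so each $1-p_s\ge0$ is non-decreasing in $q_r$ and $\Pr[\PF(\q)=r]$ is a sum of products of non-negative non-decreasing functions. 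Continuity of the pmf in $\q$ glues the two regimes. The regime $q_s<\max_{t\ne s}q_t$ of (B) is analogous: there $q_*$ and $p_r$ are fixed, only $p_s$ varies, and $1-p_s\ge0$ decreases in $q_s$.

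The crux is (B) in the regime $q_s\ge\max_{t\ne s}q_t$, where $q_*=q_s$ and $p_s=1$. Permutations with $\pi(s)<\pi(r)$ then contribute $0$, and grouping the rest by $A=\{t:\pi(t)<\pi(r)\}\subseteq\R\setminus\{r,s\}$ gives
\[
\Pr[\PF(\q)=r]=p_r\!\!\sum_{A\subseteq\R\setminus\{r,s\}}\!\! c_{|A|}\prod_{t\in A}(1-p_t),\qquad c_j:=\frac{j!\,(n-1-j)!}{n!}.
\]
Differentiating in $q_s$ (with $\partial p_t/\partial q_s=-\tfrac{\epsilon}{2\Delta}\,p_t$ for $t\ne s$) factors out $-\tfrac{\epsilon}{2\Delta}p_r\le0$ and reduces the claim to $\Psi:=\sum_A c_{|A|}\big[g(A)-\sum_{u\in A}p_u\,g(A\setminus u)\big]\ge0$ with $g(A):=\prod_{t\in A}(1-p_t)$. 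Since $p_u\,g(A\setminus u)=g(A\setminus u)-g(A)$, the bracket equals $(|A|+1)g(A)-\sum_{u\in A}g(A\setminus u)$; re-indexing the double sum via $B=A\setminus u$ turns $\Psi$ into $\sum_{B}\big[(|B|+1)c_{|B|}-(n-2-|B|)c_{|B|+1}\big]g(B)$, and the identity $(j+1)c_j=(n-1-j)c_{j+1}$ collapses the coefficient to $c_{|B|+1}>0$, so $\Psi=\sum_B c_{|B|+1}g(B)\ge0$. Hence $\Pr[\PF(\q)=r]$ is non-increasing in $q_s$ on this regime too, and continuity joins the pieces.

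The only real obstacle is this last step — the case $q_s\ge\max_{t\ne s}q_t$ of (B), where a coin previously pinned to $1$ is ``unpinned'' as $q_s$ decreases; everything else is a direct reading of the pmf. The one recurring subtlety throughout is that one must split on whether coordinate $r$ (resp.\ $s$) currently attains the maximum, since that is what determines which $p$'s equal $1$.
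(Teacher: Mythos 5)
Your proof is correct, and for symmetry and shift-invariance it coincides with the paper's (read the probabilities off the algorithm/pmf). The real divergence is in monotonicity. The paper's proof is a one-liner: normalize $q_*=0$ and observe that the pmf $p_r\sum_\pi \frac{1}{n!}\prod_{s:\pi(s)<\pi(r)}(1-p_s)$ is increasing in $p_r$ and decreasing in each $p_s$. That argument is silently incomplete, because the coin probabilities are $p_t=\exp(\frac{\epsilon}{2\Delta}(q_t-q_*))$ and $q_*$ itself moves when the perturbed coordinate attains or loses the maximum; e.g.\ going from $\q=(0,10,5)$ to $\pvec{q}'=(1,2,5)$ with $r=1$ \emph{raises} $p_3$ from $e^{-5\epsilon/2\Delta}$ to $1$, so "decreasing in $p_s$" does not directly apply. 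You identify exactly this gap, reduce to single-coordinate moves, split each move at the point where the maximum changes, and in the one genuinely hard regime ($q_s$ decreasing from above the other scores) you carry out a correct combinatorial computation: grouping permutations by the set $A$ of items preceding $r$, using $p_u\,g(A\setminus u)=g(A\setminus u)-g(A)$, re-indexing by $B=A\setminus u$, and collapsing the coefficient via $(j+1)c_j=(n-1-j)c_{j+1}$ to get $\Psi=\sum_B c_{|B|+1}g(B)\ge 0$; I verified the identity and the sign of the derivative. What your approach buys is a complete proof of monotonicity that actually tracks the renormalization by $q_*$; what the paper's buys is brevity, at the cost of leaving the maximum-changing case to the reader. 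An alternative, slightly softer route to the same hard case is to note that lowering the unique maximum $q_s$ is, by shift-invariance, equivalent to simultaneously raising all $q_t$, $t\neq s$, by the same amount with $q_*$ held fixed, and then argue term-by-term in the pmf --- but your direct derivative computation is perfectly sound as written.
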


\begin{proof}
We will establish the three conditions: symmetry, shift-invariance, and monotonicity.

\begin{itemize}[leftmargin=*]
\item \textbf{Symmetry}:
Consider $p_r$ as defined in the definition of $\PF$, and let $\pvec{p}' = \Pi \vec{p}$ denote the same vector on the permuted quality scores.  Now note that every permutation is equally likely for both $\vec{p}$ and $\pvec{p}'$, and that the only difference is that $p_r = p'_{\pi(r)}$.  Hence $\Pr[\M(\q) = r] = \Pr[\M(\Pi \q) = \pi(r)]$, which implies $\M$ is symmetric as desired.

\item \textbf{Shift-invariance}:
$\PF$ is shift-invariant because on only depends on $\q$ through $ q_r - q_*$.  Adding a constant to $\q$ does not change $q_r - q_*$. 

\item \textbf{Monotonicity}:
Monotonicity follows from the pmf of the mechanism:
$$ \Pr[\PF(\q) = r] = p_r \sum_{\pi} \frac{1}{n!} \prod_{s : \pi(s) < \pi(s)} (1 - p_s) $$.

Assume without loss of generality that $q_* = 0$ and note that $p_r = \exp{(\frac{\epsilon}{2 \Delta} q_r)}$.  Clearly, the expression above is monotonically increasing in $p_r$ (and hence $q_r$) and monotonically decreasing in $p_s$ (and hence $q_s$).  Hence $\PF$ satisfies the monotonicity property.

\end{itemize}

Because $\PF$ is symmetric, shift-invariant, and monotonic, it is regular.
\end{proof}
}

\pfdp*
\begin{proof}
\revision{\cref{lem:reg} established regularity. It remains to argue} that $\PF$ is differentially-private. Let $\q$ and $r$ be arbitrary. By \cref{prop:regdp}, it suffices to show that 
$$ \Pr[\PF(\q) = r] \geq \exp{(-\epsilon)} \Pr[\PF(\q + 2 \Delta \e_r) = r] $$
or equivalently,
$$ \log \Pr[\PF(\q  + 2 \Delta \e_r) = r] - \log \Pr[\PF(\q) = r] \leq \epsilon. $$

Assume without loss of generality that $\max_{s \neq r} q_s = 0$, so that $q_r$ is a maximum score if and only if $q_r \geq 0$. Let $ f_r(\q) = \log \Pr[\PF(\q) = r] $. Then is enough to show that $\frac{\partial}{\partial q_r}f_r(\q) \leq \frac{\epsilon}{2 \Delta} $ for all $\q$, since
\begin{align*}
  \log \Pr[\PF(\q  + 2 \Delta \e_r) = r] - \log \Pr[\PF(\q) =r] &= f_r(\q +  2 \Delta \e_r) - f_r(\q) \\
  &= \int_{q_r}^{q_r + 2\Delta} \frac{\partial}{\partial q_r}f_r(\q)\Big\vert_{q_r=t} \, dt
\end{align*}
The final equality is justified because, by the definition of the pmf for $\PF$, the function $f_r(\q)$ is continuous. Furthermore, there is at most one point of non-differentiability of the partial derivative (at $t=0$, when the $r$th score becomes equal to the maximum), so, if needed, the integral can be split into two parts about $t=0$. This integral is bounded by $\epsilon$ as long the partial derivative $\frac{\partial f_r}{\partial q_r}$ is bounded by $\frac{\epsilon}{2\Delta}$. 

Using the expression for the probability mass function of $\PF$ from \cref{thm:pmf2}, we have:

$$
f_r(\q) = \log \Pr[\PF(\q) = r]=  \log \Bigg(p_r \sum_{\substack{S \subseteq \R \\ r \notin S}} \frac{(-1)^{|S|}}{|S|+1} \prod_{s \in S} p_s \Bigg)
$$

We will show using this formula that $\frac{\partial f_r}{\partial q_r}$ is always bounded by $\frac{\epsilon}{2\Delta}$. We examine the cases when $q_r < 0$ and $q_r \geq 0$ separately.

\textbf{Case 1:} $q_r < 0$. In this case, observe that $p_s = \exp\big(\frac{\epsilon}{2\Delta} q_s \big)$ does not depend on $q_r$ for $s \neq r$. Therefore, differentiating the formula for $f_r(\q)$ gives

\begin{align*}
\frac{\partial f_r}{\partial q_r} &= \frac{\partial f_r}{\partial p_r} \frac{\partial p_r}{\partial q_r} \\
&= \Big[ \frac{1}{\Pr[\M(\q) = r]} 
\sum_{\substack{S \subseteq \R \\ r \notin S}} \frac{(-1)^{|S|}}{|S|+1} \prod_{s \in S} p_s \Big] \Big[ p_r \frac{\epsilon}{2 \Delta} \Big] \\
&= \frac{\epsilon}{2 \Delta} \frac{\Pr[\M(\q) = r]}{\Pr[\M(\q) = r]} 
= \frac{\epsilon}{2\Delta} 
\end{align*}

\textbf{Case 2:} $q_r \geq 0$. In this case, because $q_r$ is the maximum score, we have $p_s = \exp \big( \frac{\epsilon}{2 \Delta} (q_s - q_r)\big)$ for all $r$. We therefore proceed by differentiating $f_r(\q)$ using this expression for $p_s$:

\begin{align*}
\frac{\partial f_r}{\partial q_r} &= \frac{1}{\Pr[\PF(\q) = r]} \frac{\partial}{\partial q_r} \Pr[\PF(\q) = r] \\
&= \frac{1}{\Pr[\PF(\q) = r]} \frac{\partial}{\partial q_r} \sum_{\substack{S \subseteq \R \\ r \notin S}} \frac{(-1)^{|S|}}{|S|+1} \prod_{s \in S}  \exp{\Big(\frac{\epsilon}{2 \Delta} (q_s - q_r)\Big)} \\
&= \frac{1}{\Pr[\PF(\q) = r]} \frac{\partial}{\partial q_r} \sum_{\substack{S \subseteq \R \\ r \notin S}} \frac{(-1)^{|S|}}{|S|+1} \exp{\Big(-|S| \frac{\epsilon}{2 \Delta} q_r \Big)} \prod_{s \in S}  \exp{\Big(\frac{\epsilon}{2 \Delta} q_s \Big)} \\
&= \frac{1}{\Pr[\PF(\q) = r]} \sum_{\substack{S \subseteq \R \\ r \notin S}} \frac{(-1)^{|S|}}{|S|+1} \exp{\Big(-|S| \frac{\epsilon}{2 \Delta} q_r \Big)} \Big[-|S| \frac{\epsilon}{2 \Delta} \Big] \prod_{s \in S}  \exp{\Big(\frac{\epsilon}{2 \Delta} q_s \Big)} \\
&= \frac{1}{\Pr[\PF(\q) = r]} \sum_{\substack{S \subseteq \R \\ r \notin S}} \frac{(-1)^{|S|}}{|S|+1}\Big[-|S| \frac{\epsilon}{2 \Delta} \Big] \prod_{s \in S}  p_s \\
&= \Big[ \frac{\epsilon}{2 \Delta} \Big] \frac{-1}{\Pr[\PF(\q) = r]} \sum_{\substack{S \subseteq \R \\ r \notin S}} \frac{(-1)^{|S|}}{|S|+1} |S| \prod_{s \in S}  p_s \\
\end{align*}

We now seek to show that 

$$ \frac{-1}{\Pr[\PF(\q) = r]} \sum_{\substack{S \subseteq \R \\ r \notin S}} \frac{(-1)^{|S|}}{|S|+1} |S| \prod_{s \in S}  p_s \leq 1 $$

Equivalently, by multiplying both sides by $\Pr[\PF(\q) = r]$ and rearranging, we would like to show:

$$ \Pr[\PF(\q) = r] + \sum_{\substack{S \subseteq \R \\ r \notin S}} \frac{(-1)^{|S|}}{|S|+1} |S| \prod_{s \in S}  p_s \geq 0 $$

Substituting the expression for $\Pr[\PF(\q) = r]$ and simplifying, the expression on the left-hand side above becomes:

\begin{align*}
\sum_{\substack{S \subseteq \R \\ r \notin S}} \frac{(-1)^{|S|}}{|S|+1} \prod_{s \in S} p_s
+ \sum_{\substack{S \subseteq \R \\ r \notin S}} \frac{(-1)^{|S|}}{|S|+1} |S| \prod_{s \in S}  p_s
=& \sum_{\substack{S \subseteq \R \\ r \notin S}} \frac{(-1)^{|S|}}{|S|+1} \prod_{s \in S} p_s \Big( 1 + |S|\Big) \\
=& \sum_{\substack{S \subseteq \R \\ r \notin S}} (-1)^{|S|} \prod_{s \in S} p_s \\
=& \prod_{s \in \R \setminus \{r\}} (1 - p_s)
\end{align*}

The final equality can be seen directly by multiplying out $\prod_{s \in \R \setminus \{r\}} (1-p_s)$ or (equivalently) via the inclusion-exclusion formula. 
The final expression is the probability that the coins for all $s \in \R \setminus \set{r} $ are ``tails'', and is clearly non-negative, as desired.

This completes the proof.

\end{proof}

\begin{remark}
When the quality function is \emph{monotonic} in the sense that adding an individual to the dataset can only increase $q_r$ (and not decrease it), $\PF$ offers $\frac{\epsilon}{2}$-differential privacy.  The proof is largely the same, but the worst-case neighbor from \cref{prop:regdp} now occurs when $ \qq = \q + \Delta \e_r$.  
\end{remark}
~

\solution*
\begin{proof} We proceed in cases:

\textbf{Case 1:} $q_r = q_*$

Because $\PF(\q)$ is a valid probability distribution for all $\q$, and it is symmetric, it must satisfy case $2$ of the recurrence relation.

\textbf{Case 2:} $ q_r < q_* $ 

Note that the pmf of $\PF$ is:
\begin{align*}
\Pr[\M_{PF}(\q) = r] &= p_r \sum_{\pi} \frac{1}{n!} \prod_{s : \pi(s) < \pi(r)} (1 - p_s) \\
\Pr[\M_{PF}(\q + (q_* - q_r) \vec{e}_r) = r] &= p_r' \sum_{\pi} \frac{1}{n!} \prod_{s : \pi(s) < \pi(r)} (1 - p_s)
\end{align*}

where $p_r = \exp{(\frac{\epsilon}{2 \Delta} (q_r - q_*))}$ and $p'_r = \exp{(\frac{\epsilon}{2 \Delta} (q_* - q_*))} = 1$.  By comparing terms, it is clear that 
$$\Pr[\M_{PF}(\q) = r] = \exp{(\frac{\epsilon}{2 \Delta} (q_r - q_*))} \Pr[\M_{PF}(\q + (q_* - q_r) \vec{e}_r) = r] $$

Hence, $\PF$ solves case $1$ of the recurrence relation.  This completes the proof.

\end{proof}

\section{Proofs for Section 4: Comparison with Exponential Mechanism}

\revision{In this section, we first prove \cref{thm:empf}, which shows that the permute-and-flip error is no worse than the exponential mechanism for any score vector. We then prove \cref{prop:worst} and \cref{prop:lower}, which analyze the worst-case expected errors of the two mechanisms and give tight lower bounds on expected error as the number of items $n$ increases.}

\subsection{Proof of \cref{thm:empf}}


\revision{We first prove two lemmas. The first lemma establishes a monotonicity property for the factor of the pmf from \cref{thm:pmf} \emph{excluding} $p_r$, i.e., the function $g_r(\q)$ such that $\Pr[\M_{PF}(\q) = r] = p_r \cdot g_r(\q)$. The second lemma gives a useful fact about partial sums of a non-decreasing sequence.}

\begin{lemma} \label{lemma:step1a}
If $q_r \leq q_s$, then $g_r(\q) \leq g_s(\q)$, where

$$ g_r(\q) = \frac{1}{n!} \sum_{\pi} \prod_{t : \pi(t) < \pi(r)} (1 - p_t) $$
\end{lemma}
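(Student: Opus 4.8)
I would begin from the observation that $g_r(\q)$ does not actually depend on $p_r$: by \cref{thm:pmf} it equals $\EE_\pi\big[\prod_{t:\pi(t)<\pi(r)}(1-p_t)\big]$ over a uniformly random permutation $\pi$, and the factor $(1-p_r)$ never appears, since $r$ cannot precede itself. So the inequality $g_r(\q)\le g_s(\q)$ is really a statement about how this expectation behaves when one omits the factor $(1-p_s)$ (as in $g_r$) rather than $(1-p_r)$ (as in $g_s$); since $q_r\le q_s$ gives $p_r\le p_s$, hence $1-p_r\ge 1-p_s$, the claim is intuitively plausible. The plan is to make it precise by deriving a closed form for $g_r(\q)$.

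The key step is to generate the random permutation by drawing i.i.d.\ $U_1,\dots,U_n\sim\mathrm{Unif}[0,1]$ and sorting, so that ``$t$ precedes $r$'' becomes the event $\{U_t<U_r\}$. Then $g_r(\q)=\EE\big[\prod_{t\ne r}(1-p_t\,\mathbf{1}[U_t<U_r])\big]$, and conditioning on $U_r=u$ makes the events $\{U_t<u\}$, $t\ne r$, independent with probability $u$, so that integrating over $u$ yields
\[
  g_r(\q)=\int_0^1 \prod_{t\ne r}(1-u p_t)\,du .
\]
From here the comparison is a one-liner. Writing $P(u):=\prod_{t\ne r,s}(1-u p_t)$,
\[
  g_s(\q)-g_r(\q)=\int_0^1 P(u)\big[(1-u p_r)-(1-u p_s)\big]\,du=\int_0^1 P(u)\,u\,(p_s-p_r)\,du\;\ge\;0 ,
\]
because each $p_t\in(0,1]$ forces $1-u p_t\ge 1-u\ge 0$ (so $P(u)\ge 0$), $u\ge 0$, and $p_s\ge p_r$ since $p_t=\exp\!\big(\tfrac{\epsilon}{2\Delta}(q_t-q_*)\big)$ is increasing in $q_t$ and $q_r\le q_s$.

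I expect the only real work to be justifying the integral identity — the ``sort i.i.d.\ uniforms'' model of a uniform random permutation together with the conditioning step — which is standard but is the one place requiring care; everything after it is elementary. If one prefers to avoid the integral, the same conclusion follows by a direct pairing argument: reindex the sum defining $g_s(\q)$ by composing each permutation with the transposition swapping $r$ and $s$, then compare term-by-term with $g_r(\q)$. The two sums agree on every permutation in which $r$ precedes $s$, and on each permutation where $s$ precedes $r$ their contributions differ by $\big(\prod_{t\in A}(1-p_t)\big)(p_s-p_r)\ge 0$, where $A$ is the set of items other than $r,s$ lying before $r$. Either route gives $g_r(\q)\le g_s(\q)$.
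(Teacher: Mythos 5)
Your proposal is correct, and your primary route is genuinely different from the paper's. The paper proves the lemma by direct combinatorial manipulation of the permutation sum: it writes out $g_s(\q)-g_r(\q)$, splits the sum according to whether $r$ precedes $s$, cancels the paired terms that involve neither $1-p_r$ nor $1-p_s$ (via the transposition bijection), factors $(1-p_r)$ and $(1-p_s)$ out of what remains, and is left with $(p_s-p_r)$ times a manifestly nonnegative sum. Your fallback pairing argument is essentially this same proof, packaged as a single reindexing by the transposition swapping $r$ and $s$; it is valid as you state it. Your main route --- generating the permutation by sorting i.i.d.\ uniforms and conditioning on $U_r=u$ to obtain the closed form $g_r(\q)=\int_0^1\prod_{t\ne r}(1-u\,p_t)\,dt$ evaluated at $u$, i.e.\ $\int_0^1\prod_{t\ne r}(1-u p_t)\,du$ --- does not appear in the paper and is arguably cleaner: the conditioning step is sound because, given $U_r=u$, the indicators $\mathbf{1}[U_t<u]$ for $t\ne r$ are independent Bernoulli$(u)$, so the expectation of the product factorizes, and ties among the uniforms occur with probability zero. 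Once that identity is in hand the comparison reduces to a pointwise inequality of integrands, using $p_t\in(0,1]$ so that each factor satisfies $1-up_t\ge 1-u\ge 0$ and hence $P(u)\ge 0$. Both routes prove the lemma; the paper's is self-contained and purely combinatorial, while yours trades a small probabilistic setup for a one-line comparison and yields a reusable closed-form expression for the pmf of $\PF$ that the paper does not record.
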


\begin{proof}

\revision{Recall that $p_r = \exp{\big( \frac{\epsilon}{2 \Delta} (q_r - q_*)\big)} $.} Note that if $ q_r \leq q_s$ then $1 - p_r \geq 1 - p_s$.  We will show that $g_s(\q) - g_r(\q) \geq 0$.

\begin{align*}
g_s(\q) - g_r(\q) &= \frac{1}{n!} \sum_{\pi} \Big[\prod_{t : \pi(t) < \pi(s)} (1-p_t) - \prod_{t : \pi(t) < \pi(r)} (1-p_t) \Big] \\
&\stackrel{(a)}{=} \frac{1}{n!} \sum_{\pi : \pi(r) < \pi(s)} \Big[\prod_{t : \pi(t) < \pi(s)} (1-p_t) - \prod_{k : \pi(t) < \pi(r)} (1-p_t) \Big]  \\
&+ \frac{1}{n!} \sum_{\pi : \pi(r) > \pi(s)} \Big[\prod_{t : \pi(t) < \pi(s)} (1-p_t) - \prod_{t : \pi(t) < \pi(r)} (1-p_t) \Big] \\
&\stackrel{(b)}{=} \frac{1}{n!} \sum_{\pi : \pi(r) < \pi(s)} \prod_{t : \pi(t) < \pi(s)} (1-p_t)
- \frac{1}{n!} \sum_{\pi : \pi(r) > \pi(s)} \prod_{t : \pi(t) < \pi(r)} (1-p_t) \\
&\stackrel{(c)}{=} \frac{1}{n!} \sum_{\pi : \pi(r) < \pi(s)} (1-p_r) \prod_{\substack{t : \pi(t) < \pi(s) \\ t \neq r}} (1-p_t)
- \frac{1}{n!} \sum_{\pi : \pi(r) > \pi(s)} (1-p_s) \prod_{\substack{t : \pi(t) < \pi(r) \\ t \neq s}} (1-p_t) \\
&\stackrel{(d)}{=} (p_s - p_r) \Big[\frac{1}{n!} \sum_{\pi : \pi(r) < \pi(s)} \prod_{\substack{t : \pi(t) < \pi(s) \\ t \neq s}} (1-p_t) \Big]\\
&\stackrel{(e)}{\geq}  0
\end{align*}

Above, (a) breaks the sum up into permutations where $r$ precedes $s$ and vice versa. Step (b) cancels common terms (those that do not contain $1-p_r$ or $1-p_s$).  Step (c) makes the dependence on $1-p_r$ and $1-p_s$ explicit.  Step (d) rearranges terms and uses a variable replacement on the second sum (replacing $r$ with $s$).  Step (e) uses the fact that both terms are non-negative.  
\end{proof}

\begin{lemma} \label{lemma:step2}
Let $\vec{f} \in \mathbb{R}^n$ be an arbitrary vector satisfying:
\begin{enumerate}
\item $f_1 \leq f_2 \leq \dots \leq f_n$ 
\item $\sum_{r=1}^n f_r = 0$
\end{enumerate}
Then for all $s = \set{1, \dots, n}$, the following holds
$$\sum_{r = 1}^s f_r \leq 0 $$
\end{lemma}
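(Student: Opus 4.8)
The plan is to prove this elementary fact about partial sums of a non-decreasing sequence that averages to zero. Intuitively, since the sequence is non-decreasing, the ``early'' entries $f_1, \dots, f_s$ are the smallest ones, so their average cannot exceed the overall average, which is zero. I would make this precise by comparing the partial average $\frac{1}{s}\sum_{r=1}^s f_r$ to the full average $\frac{1}{n}\sum_{r=1}^n f_r = 0$.

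Here is the argument I would write out. Fix $s \in \{1, \dots, n\}$. If $s = n$ the claim is immediate from condition (2). Otherwise, since $f_1 \le f_2 \le \dots \le f_n$, every one of $f_1, \dots, f_s$ is at most $f_s$, and every one of $f_{s+1}, \dots, f_n$ is at least $f_s$. Hence
\begin{equation*}
\frac{1}{s} \sum_{r=1}^s f_r \le f_s \le \frac{1}{n-s} \sum_{r=s+1}^n f_r.
\end{equation*}
Using $\sum_{r=s+1}^n f_r = -\sum_{r=1}^s f_r$ by condition (2), the right-hand inequality reads $\frac{1}{s}\sum_{r=1}^s f_r \le -\frac{1}{n-s}\sum_{r=1}^s f_r$. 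Writing $A = \sum_{r=1}^s f_r$, this is $\frac{A}{s} \le -\frac{A}{n-s}$, i.e. $A\left(\frac{1}{s} + \frac{1}{n-s}\right) \le 0$; since the parenthesized quantity is strictly positive, we conclude $A \le 0$, which is exactly $\sum_{r=1}^s f_r \le 0$.

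There is essentially no obstacle here — the only thing to be slightly careful about is the degenerate case $s = n$ (where $n - s = 0$ and the division is undefined), which is why I handle it separately at the start using condition (2) directly. One could alternatively phrase the whole thing in one line without case analysis by just noting $s f_s \ge \sum_{r=1}^s f_r$ and $(n-s) f_s \le \sum_{r=s+1}^n f_r = -\sum_{r=1}^s f_r$, then adding a suitable combination, but the average-comparison phrasing above is the cleanest to present.
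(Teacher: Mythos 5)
Your proof is correct, but it takes a different route from the paper's. The paper picks a ``sign-crossing'' index $m$ with $f_m \le 0$ and $f_{m+1} \ge 0$ (which exists by monotonicity and the zero-sum condition) and argues in two cases: for $s \le m$ the partial sum consists entirely of non-positive terms, and for $s > m$ one has $\sum_{r=1}^s f_r = -\sum_{r=s+1}^n f_r \le 0$ since the tail terms are all non-negative. Your argument instead sandwiches $f_s$ between the average of the first $s$ terms and the average of the last $n-s$ terms, then uses the zero-sum condition to deduce $A\bigl(\tfrac{1}{s} + \tfrac{1}{n-s}\bigr) \le 0$. Both are elementary and complete; the paper's case split is slightly more direct and avoids the degenerate $s=n$ division you must handle separately, while your averaging formulation makes the underlying intuition (``the smallest entries cannot average above the overall average'') more explicit and actually proves the slightly stronger statement that the partial average is at most the full average. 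Either proof is acceptable.
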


\begin{proof}
Let $m$ be any index satisfying $ f_m \leq 0 $ and $f_{m+1} \geq 0$.  If $t \leq m$, the claim is clearly true, as it is a sum of non-positive terms.  If $t > m$, we have $\sum_{r=1}^s f_r \leq \sum_{r=1}^n f_r = 0 $.  In either case the partial sum is non-positive, and the claimed bound holds.  
\end{proof}

\empf*

\begin{proof}

\revision{We will prove the probability statement first, after which the expected error result will follow easily.} Assume without loss of generality (by symmetry) that $q_1 \leq q_2 \leq \dots \leq q_n$.  Let $f_r(\q) = \Pr[\PF(\q) = r] - \Pr[\EM(\q) = r]$ and let $s$ denote the largest index satisfying $q_s \leq q_* - t$.  Then $\Pr[\E(\PF,\q) \geq t] - \Pr[\E(\EM,\q) \geq t] = \sum_{r=1}^s f_r(\q)$ and our goal is to show:
$$ \sum_{r=1}^s f_r(\q) \leq 0$$
for all $s \in \set{1, \dots, n}$.  We first argue that $f_r$ monotonically increases with $q_r$, i.e., $f_1 \leq f_2 \leq \dots \leq f_n$.

Note that $f_r(\q)$ can be expressed as $ p_r [g_r(\q) - h_r(\q)] $, where

\begin{align*}
g_r(\q) &= \frac{1}{n!} \sum_{\pi} \prod_{t : \pi(t) < \pi(r)} (1 - p_t) \\
h_r(\q) &= \frac{1}{\sum_{t \in \R} p_t}
\end{align*}

Further, notice that \revision{the sequence $h_r$ (as $r$ ranges from 1 to $n$) is constant-valued, while, from \cref{lemma:step1a}, we know that $g_r$ is also non-decreasing. Thus the sequence $g_r - h_r$ is also non-decreasing.  This, together with the fact that $p_r$ is non-negative and also non-decreasing, we know that $f_r$ is non-decreasing. This fact together with \cref{lemma:step2} shows $\sum_{r=1}^s f_r(\q) \leq 0$, as desired.}

\revision{The ordering of expected errors now follows directly. Specifically, the expected error can be expressed in terms of the (complementary) cumulative distribution function as:}
\begin{align*}
\EE[\E(\M, \q)] &= \int_{0}^{\infty} \Pr[\E(\M, \q) \geq t] dt.
\end{align*}
\revision{We have already shown that $\Pr[\E(\PF, \q) \geq t] \leq \Pr[\E(\EM, \q) \geq t]$. Thus:}
\begin{align*}
 \EE[\E(\PF, \q)] - \EE[\E(\EM, \q)] &= \int_{0}^{\infty} \Pr[\E(\PF,\q) \geq t] - \Pr[\E(\EM,\q) \geq t] dt \, \leq 0
 \end{align*}
Thus, we conclude $\EE[\E(\PF,\q)] \leq \EE[\E(\EM,\q)]$, as desired.

\end{proof}

\subsection{Proofs for Worst-Case Error Analysis}


\worst*

\begin{proof}

Assume without loss of generality that $ q_* = 0 $ and note that $ p_r = \exp{(\frac{\epsilon}{2 \Delta} q_r)}$.

\textbf{Part 1:} $\EM$

The (negative) expected error of $\EM$ can be expressed as:

\begin{align*}
-\EE[\E(\EM, \q)] &= -q_* + \sum_r q_r \frac{p_r}{\sum_s p_s} \\
&= \frac{2 \Delta}{\epsilon} \frac{1}{\sum_s p_s} \sum_r p_r \log{(p_r)} 
\end{align*}

Our goal is to show this is minimized when $p_1 = \cdots = p_{n-1} $.  We procede by way of contradiction.  Assume WLOG $p_1 < p_2$.  We will argue that we can replace $p_1$ and $p_2$ with new values that decrease the objective.  First write the negative expected error as a function of $p_1$ and $p_2$, treating everything else as a constant.  

$$ f(p_1, p_2) = \frac{1}{p_1 + p_2 + a} \big[p_1 \log{(p_1)} + p_2 \log{(p_2)} + b \big] $$

We will show that $f(\frac{p_1+p_2}{2}, \frac{p_1+p_2}{2}) < f(p_1, p_2)$.  

\begin{align*}
f\Big(\frac{p_1+p_2}{2}, \frac{p_1+p_2}{2}\Big) &= \frac{1}{2 \frac{p_1+p_2}{2}+a} \bigg[2 \frac{(p_1+p_2)}{2} \log{(\frac{p_1+p_2}{2})}+b\bigg] \\
&= \frac{1}{p_1+p_2+a} \bigg[2 \frac{(p_1+p_2)}{2} \log{(\frac{p_1+p_2}{2})}+b\bigg] \\
&< \frac{1}{p_1+p_2+a} \big[p_1 \log{(p_1)} + p_2 \log{(p_2)} + b\big] \\
&= f(p_1, p_2) \\
\end{align*}
Above, the inequality follows from the strict convexity of $p \log{(p)}$.  Thus, $f(p_1,p_2)$ is not a minimum, which is a contradiction.  

Plugging in $p_n = 1$ and $p_r = p$ for $ r < n$, we obtain:

\begin{align*}
\EE[\E(\EM,\q)] &= -\frac{2 \Delta}{\epsilon} \frac{(n-1) p \log{p}}{1 + (n-1) p}  \\
&= -\frac{2 \Delta}{\epsilon} \log{(p)} \frac{(n-1) p}{1 + (n-1) p}  \\
&= -\frac{2 \Delta}{\epsilon} \log{(p)} \Big[ 1 - \frac{1}{1 + (n-1) p} \Big] \\
&= \frac{2 \Delta}{\epsilon} \log{\Big(\frac{1}{p}\Big)} \Big[ 1 - \frac{1}{1 + (n-1) p} \Big] \\
\end{align*}

\textbf{Part 2:} $\PF$

The (negative) expected error of $\PF$ can be expressed as:

\begin{align*}
-\EE[\E(\PF, \q)] &= -q_* + \sum_r q_r p_r \sum_{\pi} \frac{1}{n!} \prod_{s : \pi(s) < \pi(r)} (1-p_s) \\
&= \frac{2 \Delta}{\epsilon} \sum_r p_r \log{(p_r)} \sum_{\pi} \frac{1}{n!} \prod_{s : \pi(s) < \pi(r)} (1-p_s) \\
\end{align*}

We wish to show that this is minimized when $p_1 = \dots = p_{n-1} = c$ for some $c \in (0,1)$.  We proceed by way of contradiction.  Assume without loss of generality $p_1 < p_2$ and let $f(p_1,p_2) = -\EE[\E(\PF,\q)]$ be the negative expected error when treating everything constant except $p_1$ and $p_2$.  Note that $f$ can be expressed as:
\begin{align*}
f(p_1, p_2) &= p_1 \log{(p_1)} [a (1-p_2) + b] + p_2 \log{(p_2)} [a (1 - p_1) + b] \\
&- c (1-p_1) - c (1-p_2) - d (1-p_1) (1-p_2) - e
\end{align*}

where $a,b,c,d,e \geq 0$.
We proceed in cases, by showing that we can always find new values for $p_1$ and $p_2$ that reduces $f$

\textbf{Case 1:} $p_1 \log{(p_1)} < p_2 \log{(p_2)}$

Set $p_2 \leftarrow p_1$. 

The second term in the sum is (strictly) less by the assumption of case 1.  Every other term is strictly less because $p_1 < p_2$, which implies $(1-p_1) > (1-p_2)$ or equivalently $-(1-p_1) < -(1-p_2)$.

\textbf{Case 2:} $p_1 \log{(p_1)} \geq p_2 \log{(p_2)}$

Set $p_1 = p_2 \leftarrow \frac{p_1 + p_2}{2} $. 

Consider breaking up the sum into two pieces; i.e., $ f(p_1, p_2) = f_A(p_1, p_2) + f_B(p_1, p_2) $ where:

\begin{align*}
f_A(p_1, p_2) &= p_1 \log{(p_1)} [a (1-p_2) + b] + p_2 \log{(p_2)} [a (1 - p_1) + b] \\
f_B(p_1, p_2) &= - c (1-p_1) - c (1-p_2) - d (1-p_1) (1-p_2) - e
\end{align*}

We have:

\begin{align*}
f_A\Big(\frac{p_1+p_2}{2}, \frac{p_1+p_2}{2}\Big) &= 2 \frac{p_1+p_2}{2} \log{\Big(\frac{p_1+p_2}{2}\Big)} \bigg[a \Big(1 - \frac{p_1+p_2}{2}\Big) + b\bigg] \\
&= (p_1 + p_2) \log{\Big(\frac{p_1+p_2}{2}\Big)} \big[ a (1-p_1) + b + a (1-p_2) + b\big] \\
&< \frac{1}{2} \cdot \big[p_1 \log{(p_1)} + p_2 \log{(p_2)}\big] \big[a (1-p_1) + b + a(1-p_2) + b\big] \\
&= \frac{1}{2} \Big[ p_1 \log{(p_1)} [a (1-p_1) + b] + p_1 \log{(p_1)} [a (1-p_2) + b] \\ 
&\quad + \:\: p_2 \log{(p_2)} [ a (1-p_1) + b] + p_2 \log{(p_2)} [a (1-p_2) + b] \Big] \\
&< p_1 \log{(p_1)} [a (1-p_2) + b] + p_2 \log{(p_2)} [a (1-p_1) + b] \\
&= f_A(p_1, p_2) \\
\end{align*}

Above, the first step follows from linearity, and the second step follows from the convexity of $p \log{(p)}$ and non-negativeness of the linear term.  The fourth step uses the assumption that $p_1 \log{(p_1)} \geq p_2 \log{(\p_2)}$ (Case 2), and the fact that $a(1-p_1) + b > a(1-p_2) + b$ and $\log{(p_2)} < 0$.    

\begin{align*}
f_B\Big(\frac{p_1+p_2}{2}\Big) &= -2c \Big(1-\frac{p_1+p_2}{2}\Big) -d \Big(1 - \frac{p_1+p_2}{2}\Big)^2 - e \\
&= -c (1-p_1) - c(1-p_2) - d \Big(1 - \frac{p_1+p_2}{2}\Big)^2 - e \\
&< -c (1-p_1) - c(1-p_2) - d (1-p_1) (1-p_2) - e \\
&= f_B(p_1, p_2) \\
\end{align*}
Above, the first step follows from linearity, and the second step follows from the fact that the area of a square is always larger than the area of a rectangle with the same perimeter.  

We have shown that $f_A$ and $f_B$ are both reduced, so $f$ as a whole is also reduced.  

To derive the expected error for a quality score vector of this form, we use a simple probabilistic argument.  There are $n-1$ items with probability $p$ coins, and one item with a probability $1$ coin.  The probability of selecting an item corresponding to a probability $p$ coin is $\sum_{i=1}^n \frac{1}{n} (1 - (1-p)^{i-1})$ where the index of the sum represents the location of the probability $1$ item in the permutation and $1 - (1-p)^{i-1}$ is the probability that at least one of the probability $p$ coins before position $i$ comes up heads.  Using the formula for a geometic series, this simplifies to $ 1 - \frac{1 - (1-p)^n}{n p} $.  Thus, recalling that $c = \frac{2 \Delta}{\epsilon} \log{(p)}$, the expected error can be expressed as:

\begin{align*}
\EE[\E(\PF,\q)] &= c \Big[1 - \frac{1 - (1-p)^n}{n p}\Big] \\
&= -\frac{2 \Delta}{\epsilon} \log{(p)} \Big[1 - \frac{1 - (1-p)^n}{n p} \Big] \\
&= \frac{2 \Delta}{\epsilon} \log{\Big(\frac{1}{p}\Big)} \Big[1 - \frac{1 - (1-p)^n}{n p}\Big].
\end{align*}

This completes the proof.
\end{proof}


\lowerbound*

Let $c = -\frac{2 \Delta}{\epsilon} \log{(n)}$ and note that $p = \frac{1}{n}$ in \cref{eq:pff}.  Plugging in $p$ to \cref{eq:pff} and simplifying, we obtain:

\begin{align*}
\EE[\E(\M_{PF}, \q)] &= \frac{2 \Delta}{\epsilon} \log{(n)} \Big[ 1 - \frac{1 - (1-\frac{1}{n})^n}{n \frac{1}{n}} \Big] \\
&= \frac{2 \Delta}{\epsilon} \log{(n)} \Big(1-\frac{1}{n} \Big)^n \\
&\geq \frac{2 \Delta}{\epsilon} \log{(n)} \frac{1}{4} \\
&= \frac{\Delta}{2 \epsilon} \log{(n)}
\end{align*}

This completes the proof. 

\section{Proofs for Section 5: Optimailty of Permute-and-Flip}

\revision{In this section we prove \cref{prop:pareto}, about Pareto optimality of permute-and-flip, and \cref{thm:opt}, about ``overall'' optimality.}


\pareto*
\begin{proof}
Note that the expected error of the mechanism can be expressed as:
\begin{align*}
\EE[\E(\M, \q)] 
&= \sum_{\substack{r \in \R \\ q_r < q_*}} \Pr[\M(\q) = r] (q_* - q_r)
\end{align*}
Since $\EE[\E(\PF, \q)] > \EE[\E(\M, \q)]$, then $\Pr[\PF(\q) = r] > \Pr[\M(\q) = r]$ for some $r$ where $ q_r < q_* $.  By \cref{lem:pareto}, there must be some $\qq$ where $\EE[\E(\M, \qq)] > \EE[\E(\PF, \qq)]$.  This completes the proof.
\end{proof}

\begin{lemma} \label{lem:pareto}
If $\Pr[\M(\q) = r] < \Pr[\PF(\q) = r]$ for some $r$ where $q_r < q_*$, then there exists a $\qq$ such that $\EE[\E(\M, \qq)] > \EE[\E(\PF, \qq)]$.  
\end{lemma}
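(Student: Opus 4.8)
The plan is to work on the $2\Delta$-lattice $\mathbb{R}^n_{2\Delta}$ (the only case needed for \cref{prop:pareto}, and the $\qq$ we build will also lie there), to normalize $q_* = 0$ by shift-invariance, and to track the signed gap $D_s(\q) := \Pr[\M(\q)=s] - \Pr[\PF(\q)=s]$. Since $\EE[\E(\M,\q)] - \EE[\E(\PF,\q)] = \sum_{s : q_s < 0} D_s(\q)\,(-q_s)$, it suffices to exhibit one lattice vector $\qq$ on which this sum is positive. The hypothesis says we start with a lattice vector $\q$ and a coordinate $r$ with $q_r < 0$ and $D_r(\q) < 0$.

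\textbf{Step 1 (push the deficit onto a maximal coordinate).} I would first show that if $q_r < 0$ and $D_r(\q) < 0$, then $D_r(\q^{(1)}) < 0$ where $\q^{(1)} = \q + (-q_r)\e_r$ raises coordinate $r$ to the maximum. Writing $-q_r = 2\Delta m$ and iterating the constraint of \cref{prop:regdp} (which holds for every regular $\epsilon$-DP mechanism) gives $\Pr[\M(\q)=r] \ge e^{-m\epsilon}\Pr[\M(\q^{(1)})=r]$, while \cref{prop:solution} (Case~1 of the recurrence, satisfied by $\PF$ with equality) gives $\Pr[\PF(\q)=r] = e^{-m\epsilon}\Pr[\PF(\q^{(1)})=r]$. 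Subtracting and dividing by $e^{-m\epsilon}>0$ yields $D_r(\q^{(1)}) < 0$.

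\textbf{Step 2 (convert a deficit at a maximal coordinate into a surplus at a lower one, and recurse).} At $\q^{(1)}$ the item $r$ has maximal score, so by symmetry every maximal-score item has the same $D$-value, namely $D_r(\q^{(1)}) < 0$; since $\sum_s D_s(\q^{(1)}) = 0$, the set $L = \{s : q^{(1)}_s < 0\}$ of below-maximum items satisfies $\sum_{s\in L} D_s(\q^{(1)}) = -n_* D_r(\q^{(1)}) > 0$ (so $L\neq\emptyset$, else $\q^{(1)}=\vec 0$ and both mechanisms are uniform, contradicting $D_r(\q^{(1)})<0$). Now split: if $D_s(\q^{(1)}) \ge 0$ for all $s\in L$, then $\sum_{s\in L} D_s(\q^{(1)})(-q^{(1)}_s) > 0$ (nonnegative terms, strictly positive since at least one $D_s$ is), so $\qq = \q^{(1)}$ works; otherwise some $t\in L$ has $D_t(\q^{(1)}) < 0$, i.e. $(\q^{(1)},t)$ again satisfies the hypothesis, but now with \emph{strictly fewer} below-maximum coordinates than $\q$ had. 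Recurse on $(\q^{(1)}, t)$. The number of below-maximum coordinates decreases each round, so the process halts; and it cannot reach an instance with a single below-maximum coordinate, because if $\q$ has exactly one low item $r$ (so $q_s=0$ for $s\neq r$, $q_r=-2\Delta m$), then $\Pr[\PF(\q)=r] = p_r/n$ (every off-$r$ coin is heads, so $r$ must appear first) while the constraint chain gives $\Pr[\M(\q)=r] \ge e^{-m\epsilon}\Pr[\M(\vec 0)=r] = p_r/n$, hence $D_r(\q)\ge 0$ and no such hypothesis pair exists — equivalently, once the recursion reaches an instance with two low items, Step~2 necessarily lands in the ``all $D_s \ge 0$'' branch and terminates. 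In every case we obtain a lattice vector $\qq$ with $\EE[\E(\M,\qq)] > \EE[\E(\PF,\qq)]$.

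\textbf{Main obstacle.} The crux is that the ``symmetric'' regularity tools (monotonicity, shift-invariance) move $\M$ and $\PF$ in the same direction and therefore cannot be used to transfer a strict inequality between them; everything is instead driven by the single asymmetric fact that $\PF$ meets the privacy recurrence with equality, exploited through the ``raise to the maximum'' operation of Step~1, together with the symmetry-plus-sum-to-one bookkeeping of Step~2 that turns a deficit at a maximal coordinate into a surplus at a strictly lower one while shrinking the number of below-maximum coordinates (the monovariant that guarantees termination). Getting that transfer right — and verifying the one-low-item base case — is where the real work lies; the rest is routine.
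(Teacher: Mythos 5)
Your proposal is correct and follows essentially the same route as the paper's proof: raise the deficient coordinate to the maximum (privacy inequality for $\M$ versus the exact recurrence for $\PF$), then use symmetry and sum-to-one to either find that all below-maximum coordinates carry a surplus (done) or recurse on a new deficient coordinate, with the number of maximal entries as the terminating monovariant. Your explicit check of the single-low-item configuration corresponds to the paper's base case $n'_*=n-1$, and your restriction to the $2\Delta$-lattice is all that \cref{prop:pareto} requires.
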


\begin{proof}
Let $\pvec{q}' = \q + (q_* - q_r) \e_r$.  
By the differential privacy and regularity of $\M$ and the recursive construction of $\PF$, we know:

\begin{align*}
\Pr[\M(\q) = r] &\geq \exp{\Big(\frac{\epsilon}{2 \Delta} (q_r - q_*)\Big)} \Pr[\M(\pvec{q}') = r] \\
\Pr[\PF(\q) = r] &= \exp{\Big(\frac{\epsilon}{2 \Delta} (q_r - q_*)\Big)} \Pr[\PF(\pvec{q}') = r] \\
\end{align*}
Combining the above with the assumption of the Lemma, we obtain:
$$ \Pr[\M(\pvec{q}) = r] < \Pr[\PF(\pvec{q}) = r] $$
Note that $\qq_r = \qq_*$.  We proceed by way of induction:

\textbf{Base Case:} $n'_* = n-1$.

There is a single $s$ such that $q'_s < q'_*$, and it must be the case that $\Pr[\M(\qq) = s] > \Pr[\PF(\qq) = s]$ by the symmetry and sum-to-one constraint on $\M$ and $\PF$.  Thus, it follows immediately that $ \EE[\E(\PF,\qq)] < \EE[\E(\M,\qq)]$ because $\M$ places more probability mass on the candidate $s$ that increases the expected error (i.e., $q'_* - q'_s > 0$). 

\textbf{Induction Step:} Assume \cref{lem:pareto} holds when $n'_* = k+1$.  We will show that \cref{lem:pareto} holds for $n'_* = k$.

We proceed in two cases:

\textbf{Case 1:} $ \Pr[\M(\qq) = s] \geq \Pr[\PF(\qq) = s]$ for all $s$ such that $ q'_s < q'_*$.   

The inequality must be strict for some $s$, because the inequality is strict for all $r$ where $q_r = q_*$ by the regularity/symmetry of $\M$ and $\PF$.  Thus, it follows immediately that $ \EE[\E(\PF,\qq)] < \EE[\E(\M,\qq)]$ because $\M$ places more probability mass on the candidates $s$ that increase the expected error (i.e., $q'_* - q'_s > 0$). 

\textbf{Case 2:} $\Pr[\M(\qq) = s] < \Pr[\PF(\qq) = s]$ for some $s$ such that $ q'_s < q_*$.  

Applying the induction hypothesis \cref{lem:pareto} using $\qq$ (now with $n'_* = k + 1$), we see that the claim must be true for $n'_* = k$, as desired.

\end{proof}


\overall*

For the above optimality criteria, the best mechanism can be obtained by solving a simple linear program.  The variables of the linear program correspond to the probabilities the mechanism assigns to different $(\q, r)$ pairs, and the constraints are those required for differential privacy and regularity (which are all linear).

Denote the optimization variables as $x_r(\q) := \Pr[\M(\q) = r]$ for all $\q \in Q$ and all $r \in \R$. Then the linear program for the optimal regular mechanism can be expressed as: 

\begin{equation*}
\begin{aligned}
& \underset{x}{\text{maximize}}
& & \sum_{\q \in Q} \sum_r x_r(\q) q_r \\
& \text{subject to}
& & x_r(\q) \geq \exp{(-\epsilon)} x_r(\qq) & \forall \q, r  &&\text{(privacy)}\\
& & & x_r(\q) = x_{\pi(r)}(\Pi \q) & \forall \q, r, \pi  &&\text{(symmetry)} \\
& & & \sum_r x_r(\q) = 1 & \forall \q && \text{(sum-to-one)}\\
& & & x_r(\q) \geq 0 & \forall \q,r \\
\end{aligned}
\end{equation*}

The first constraint enforces differential privacy for a regular mechanism as in \cref{prop:regdp}, where $\qq$ is the worst-case neighbor of $\q$.
We assumed the maximum entry of every score vector is zero, which is without loss of generality due to shift invariance. To ensure that $\qq$ has maximum entry zero, we use separate expressions for $\qq$ depending on whether or not $q_r=0$:
$$ \qq = \begin{cases}
\q + 2 \Delta \e_r & q_r < 0 \\
\q + 2 \Delta (\e_r-\vec{1}) & q_r = 0
\end{cases} $$
The second constraint ensures the mechanism is symmetric, and the final two constraints ensure the mechanism corresponds to a valid probability distribution. 

To measure how close to optimal permute-and-flip is for $\epsilon$ below the threshold, we can solve this linear program numerically, and compare the solution to permute-and-flip.  Observe that the linear program has a large number of redundant variables from the symmetry constraint (e.g., $x_1(-2, -8, 0) = x_3(0,-8,-2)$).  These variables can be grouped into equivalence classes, and the redundant ones can be eliminated, keeping only a single one from each equivalence class.  This drastically reduces the number of variables and also allows us to eliminate the symmetry constraints.  Using this trick, the resulting linear program is significantly smaller, but the size still grows quickly with $n$ and $k$, and is only feasible to solve for relatively small $n$ and $k$.  

\paragraph{Relaxed LP}
Our goal is to show that $\PF$ solves the linear program.  To do so, we will consider the following relaxation of the linear program:
\begin{equation*}
\begin{aligned}
& \underset{x}{\text{maximize}}
& & \sum_{\q \in Q} \sum_r x_r(\q) q_r \\
& \text{subject to}
& & -x_r(\q) + \exp{\Big(\frac{\epsilon}{2 \Delta} q_r\Big)} x_r(\q - q_r \e_r) \leq 0 &  q_r < 0  && \text{(privacy)}\\
& & & n_* x_r(\q) + \sum_{s : q_s < 0} x_s(\q) = 1 &  q_r = 0 && \text{(symmetry, sum-to-one)}\\
& & & x_r(\q) \geq 0 \\
\end{aligned}
\end{equation*}

In this linear program:
\begin{itemize}[leftmargin=*]
\item There is exactly one constraint per optimization variable (excluding non-negativity constraints).
\item
The first set of constraints corresponds to a \emph{subset} of the privacy constraints from the original, corresponding only to $(\q, r)$ pairs with $q_r < 0$. In addition, we performed substitutions of the form
\begin{align*}
x_r(\q) &\geq \exp{(-\epsilon)} x_r(\q + 2 \Delta \e_r)  \\
&\geq \exp{(-2 \epsilon)} x_r(\q + 4 \Delta \e_r) \\
&\geq \dots  \\
&\geq \exp{\Big(\frac{\epsilon}{2\Delta} q_r \Big)} x_r(\q - q_r \e_r),
\end{align*}
where $\q - q_r \e_r$ is the quality score vector obtained by setting $q_r = 0$.
\item The sum-to-one and symmetry constraints are merged into a single constraint when $q_r = 0$, and other symmetry constraints are dropped.
\end{itemize}
These constraints correspond exactly to the ones in the recurrence defining $\PF$ in Section~\ref{sec:derivation}. This means that $\PF$ satisfies these constraints with equality, by construction.
Furthermore, since $\PF$ is  feasible in the full LP (because it is a private, regular mechanism), if $\PF$ is optimal for the relaxed LP it is also optimal for the full LP.

\paragraph{Constructing a dual optimal solution}
We can show that $\PF$ is optimal by constructing a corresponding optimal solution to the dual linear program:
\begin{equation*}
\begin{aligned}
& \underset{y}{\text{minimize}}
& & \sum_{\q} \sum_{r : q_r = 0} y_r(\q) \\
& \text{subject to}
& & n_* y_r(\q) - \sum_{t=1}^k y_r(\q-2\Delta t \e_r) \exp{(-t \epsilon)} \geq q_r &  q_r = 0\\
& & & -y_r(\q) + \sum_{s : q_s = 0} y_s(\q) \geq q_r & q_r < 0 \\
& & & y_r(\q) \geq 0 & q_r < 0 \\
\end{aligned}
\end{equation*}
Because there is exactly one constraint for each optimization variable, we have used the same indexing scheme for the dual variables. 
Note that the non-negativity constraints apply only to $(\q, r)$ pairs with $q_r < 0$.

To prove optimality, the dual solution and $\PF$ should satisfy complementary slackness: for each positive primal variable, the corresponding dual constraint should be tight. However, \emph{all} primal variables are positive. Therefore, all dual constraints must be tight. By treating dual constraints as equalities, we obtain a recurrence for $y$ similar to the one used to derive $\PF$:
$$ y_r(q) = \begin{cases}
0 & q_r = 0, n_* = 1 \\
-\frac{1}{n_*} \sum_{t=1}^k y_r(\q-2\Delta t \e_r) \exp{(-t \epsilon)} & q_r = 0 \\
-q_r + \sum_{s : q_s = 0} y_s(q) & q_r < 0 \\
\end{cases} $$

Like the recurrence for $\PF$, this recurrence is well-founded and defines a unique dual solution $y$.  The order of evaluation is reversed for the dual variables, and the base case occurs when $n_* = 1$ (rather than $n_* = n$).  We will now argue that, whenever $\epsilon \geq \log{\big(\frac{1}{2}(3 + \sqrt{5})\big)} $, the resulting dual solution is feasible. This, together with complementary slackness, which is satisfied by construction, implies that $\PF$ and $y$ are optimal solutions to the primal and dual programs, respectively.

Let $y$ solve the recurrence above for $\epsilon \geq \log{\big(\frac{1}{2}(3 + \sqrt{5})\big)}$. To show that $y$ is feasible, we will argue inductively that these finer-grained bounds hold:
\begin{align}
-\frac{2 \Delta}{n_*} &\leq y_r(\q) \leq 0 & \text{if } q_r = 0 \label{eq:bound0} \\
0 &\leq y_r(\q) \leq -q_r & \text{if } q_r < 0 \label{eq:bound1} 
\end{align}
Note that \cref{eq:bound1} includes the dual feasibility constraints.

We prove \cref{eq:bound0,eq:bound1} by induction on the $n_*$, the number of zero (i.e., maximum) entries of $\q$. 
For the base case, when $n_* = 1$, $y_r(q) = -q_r$, so \cref{eq:bound0,eq:bound1} hold. 

Now let $\q$ be a score vector with $n_* > 1$ entries equal to zero, and assume that \cref{eq:bound0,eq:bound1} hold for all score vectors with fewer than $n_*$ zeros. By Case 1 of the recurrence, for $r$ such that $q_r=0$, we have
\begin{align*}
y_r(\q) &= -\frac{1}{n_*} \sum_{t=1}^{k} y_r(\q - 2 \Delta t \e_r) \exp{(-t \epsilon)} \\
&\geq -\frac{1}{n_*} \sum_{t=1}^{k} 2 \Delta t \exp{(-t \epsilon)} \\
&\geq -\frac{2 \Delta}{n_*} \sum_{t=1}^{\infty} t \exp{(-t \epsilon)} \\
&\geq -\frac{2 \Delta}{n_*}. \\
\end{align*}
In the second line, we used the fact that $y_r(\q - 2 \Delta t \e_r) \leq -(\q - 2 \Delta t \e_r)_r = 2\Delta t$, which follows from \cref{eq:bound1} by the induction hypothesis, since $\q - 2 \Delta t \e_r$ is a score vector with $n_*-1$ zeros. In the third line, we used the fact that $\sum_{t=1}^\infty t \exp(-t\epsilon) \leq 1$ whenever $\epsilon \geq \log{\big(\frac{1}{2}(3 + \sqrt{5})\big)}$, which is stated and proved in \cref{lem:golden-ratio} below.

It is also clear that
$$
y_r(\q) = -\frac{1}{n_*} \sum_{t=1}^{k} y_r(\q - 2 \Delta t \e_r) \exp{(-t \epsilon)} \leq 0,
$$
since, again by \cref{eq:bound1} and the induction hypothesis, each term of the sum is non-nonegative.

We have now established that \cref{eq:bound0} holds for all score vectors with $n_*$ or fewer zeros, which we use to prove that \cref{eq:bound1} holds under the same conditions. 
By Case 2 of the recurrence, when $q_r < 0$ we have
\begin{align*}
y_r(\q) &= -q_r + \sum_{s : q_s = 0} y_s(\q) \\
&\geq -q_r + \sum_{s : q_s = 0} -\frac{2 \Delta}{n_*} \\
&\geq -q_r - 2 \Delta \\
&\geq 0. \\
\end{align*}
In the second line, we used, from \cref{eq:bound0} that $y_s(\q) \geq -\frac{2\Delta}{n_*}$. Similarly, we have
$$
y_r(\q) = -q_r + \sum_{s : q_s = 0} y_s(\q) \leq -q_r
$$
because $y_s(\q) \leq 0$.

This completes the inductive proof, and establishes that the dual solution $y$ is feasible. This in turn completes the proof that $\PF$ is optimal.

\begin{lemma}
If $ \epsilon \geq \log{\big(\frac{1}{2}(3 + \sqrt{5})\big)} $, then
$ \sum_{k=1}^{\infty} k \exp{(-k \epsilon)} \leq 1$.
\label{lem:golden-ratio}
\end{lemma}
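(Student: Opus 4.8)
The plan is to reduce the infinite sum to a closed form and then to a quadratic inequality in $x := \exp(-\epsilon)$. Since $\epsilon \ge \log(\tfrac12(3+\sqrt5)) > 0$ we have $x \in (0,1)$, so the series converges and
\[
\sum_{k=1}^\infty k\, x^k = \frac{x}{(1-x)^2}.
\]
Hence the desired inequality $\sum_{k=1}^\infty k \exp(-k\epsilon) \le 1$ is equivalent to $x \le (1-x)^2$, i.e. to $x^2 - 3x + 1 \ge 0$.

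Next I would solve this quadratic. Its roots are $\tfrac{3 \pm \sqrt 5}{2}$, so $x^2 - 3x + 1 \ge 0$ holds exactly when $x \le \tfrac{3-\sqrt 5}{2}$ or $x \ge \tfrac{3+\sqrt 5}{2}$. Because $x = \exp(-\epsilon) \in (0,1)$ while $\tfrac{3+\sqrt 5}{2} > 1$, only the first branch is relevant, so the inequality is equivalent to $\exp(-\epsilon) \le \tfrac{3-\sqrt 5}{2}$, that is, $\epsilon \ge -\log\!\big(\tfrac{3-\sqrt 5}{2}\big)$.

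Finally I would check that this threshold matches the one in the statement. Rationalizing, $\frac{2}{3-\sqrt 5} = \frac{2(3+\sqrt 5)}{(3-\sqrt 5)(3+\sqrt 5)} = \frac{2(3+\sqrt 5)}{4} = \frac{3+\sqrt 5}{2}$, so $-\log\!\big(\tfrac{3-\sqrt 5}{2}\big) = \log\!\big(\tfrac{1}{2}(3+\sqrt 5)\big)$. Therefore $\epsilon \ge \log\!\big(\tfrac12(3+\sqrt5)\big)$ implies $\sum_{k=1}^\infty k\exp(-k\epsilon)\le 1$, as claimed.

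There is no serious obstacle here; the computation is routine. The only points requiring care are noting convergence of the series (using $\epsilon > 0$), selecting the correct root of the quadratic (the relevant one lies in $(0,1)$), and the sign bookkeeping when taking logarithms of a quantity less than one.
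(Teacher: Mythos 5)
Your proof is correct and follows essentially the same route as the paper's: sum the series in closed form, reduce the inequality to a quadratic, and identify the relevant root (the paper substitutes $z=\exp(\epsilon)-1$ and arrives at $1+z\le z^2$ with the golden ratio as the critical root, which is the same quadratic condition as your $x^2-3x+1\ge 0$ in $x=\exp(-\epsilon)$). No issues.
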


\begin{proof}
The infinite sum is equal to:
$$\frac{\exp{(\epsilon)}}{[1 - \exp{(\epsilon)}]^2} $$

Making the substitution $ \exp{(\epsilon)} = 1 + z $, we have:

$$ \frac{\exp{(\epsilon)}}{[1 - \exp{(\epsilon)}]^2} \leq 1 \; \iff \;
\frac{1 + z}{z^2} \leq 1 \;\iff\;  1 + z \leq z^2 $$ 

The solution to the quadratic equation $1 + z = z^2$ is the golden ratio, $\phi = \frac{1}{2}(1 + \sqrt{5})$, so the inequality holds whenever $z \geq \phi$, or whenever $\epsilon \geq \log{(1 + \phi)} = \log{\big(\frac{1}{2}(3 + \sqrt{5})\big)}$. 
\end{proof}

\section{Dynamic Programming Algorithm} \label{sec:dynamic}

In this section, we derive an efficient $O(n^2)$ dynamic programming algorithm to calculate the probabilities. \revision{Recall the expression for the pmf from \cref{thm:pmf2}:}
$$
Pr[\M_{PF}(\q) = r] = p_r \sum_{\substack{S \subseteq \R \\ r \notin S}} \frac{(-1)^{|S|}}{|S|+1} \prod_{s \in S} p_s.
$$

To evaluate the probabilities efficiently, we can break up the sum into groups where $|S| = k$.  Then, using dynamic programming, we can calculate these sums efficiently and use them to compute the desired probabilities.

Let 
$$ S(k, r) = \sum_{\substack{S \subseteq \R \\ |S| = k \\ \max(S) \leq r}} \prod_{s \in S} p_s. $$

And note that  $S(k,r)$ satisfies the recurrence:

$$ S(k, r) = S(k,r-1) + p_r S(k-1,r-1). $$ 

$S(k,r-1)$ is the sum over subsets not including $r$, and $ p_r S(k-1,r-1)$ is the sum over subsets including $r$.  Using the above recursive formula together with the base cases $S(0,r) = 1$ and $S(k, 0) = 0$, we can compute $S(k,r)$ for all $(k,r)$ in $O(n^2)$ time.  

$S(k,n)$ is then the sum over all subsets of size $k$.  Let $T(k,r)$ denote the sum over all size $k$ subsets \emph{not including r}:

$$ T(k, r) = \sum_{\substack{S \subseteq \R \\ |S| = k \\ r \notin S}} \prod_{s \in S} p_s. $$

and note that $T(k,r)$ satisfies the recurrence:

$$ T(k,r) = S(k,n) - p_r T(k-1,r) $$

with $T(0,r) = 1$.  $T(k,r)$ can also be calculated in $O(n^2)$ time. The final answer is then:

$$ \Pr[\PF(\q) = r] = p_r \sum_{k=0}^{n} \frac{(-1)^k}{k+1} T(k,r) $$

which can be computed in $O(n)$ time for each $r$.  Thus, the overall time complexity of this dynamic programming procedure is $O(n^2)$.

\section{Report Noisy Max} \label{sec:noisymax}

A popular alternative to the exponential mechanism for private selection is report noisy max, which works by adding Laplace noise with scale $\frac{2 \Delta}{\epsilon}$ to the score for each candidate, then returns the candidate with the largest noisy score. 

Reasoning about report noisy max analytically and exactly is challenging, and we are not aware of a simple closed form expression for its probability mass function.  To compute the probability of returning a particular candidate, we must reason about the probability that one random variable (the noisy score for that candidate) is larger than $n-1$ other random variables (the scores for other candidates), which in general requires evaluating a complicated integral.  Specifically, let $f(x)$ denote the probability density function of $\text{Lap}(\frac{2 \Delta}{\epsilon})$ and let $F(x)$ denote its cumulative density function.

$$ \Pr[\M_{NM}(\q) = r] = \int_{-\infty}^{\infty} f(x) \prod_{s \neq r} F(q_r - q_s + x) d x $$

If we consider quality score vectors of the form $ \q = (c, \dots, c, 0) $, the expression simplies to:

$$ \Pr[\M_{NM}(\q) = n] = \int_{-\infty}^{\infty} f(x) F(x - c)^{n-1} d x $$

Due to symmetry, the expected error can be expressed as:

$$ \EE[\E(\M_{NM}, \q)] = -c \Big( 1 - \Pr[\M_{NM}(\q) = n] \Big) $$

While it is not obvious how to simplify this expression further, we can readily evaluate the integral numerically to obtain the expected error.  Doing so allows us to compare report noisy max with the exponential mechanism and permute-and-flip.  \cref{fig:noisymax} plots the expected error of report noisy max alongside the exponential mechanism and permute-and-flip for quality score vectors of the form $\q = (c, c, 0)$.  It shows that report noisy max is better than the exponential mechanism when $c$ is closer to $0$ but is worse when $c$ is much smaller than $0$.  We made similar observations for different values of $n$ as well.  Thus, we conclude that neither one Pareto dominates the other.  On the other hand, permute-and-flip is always better than both mechanisms for all $c$.  Note that in contrast to \cref{fig:empf1}, we plot $c$ on the x-axis instead of $p = \exp{\big(\frac{\epsilon}{2\Delta} c\big)}$, because it is not clear if report noisy max only depends on $c$ through $p$.

This comparison covers a particular class of quality score vectors which allow for a simple and tractable exact comparison.  Further comparison with report noisy max would be an interesting future direction.

\begin{figure}[h] \centering
\includegraphics[width=0.5\textwidth]{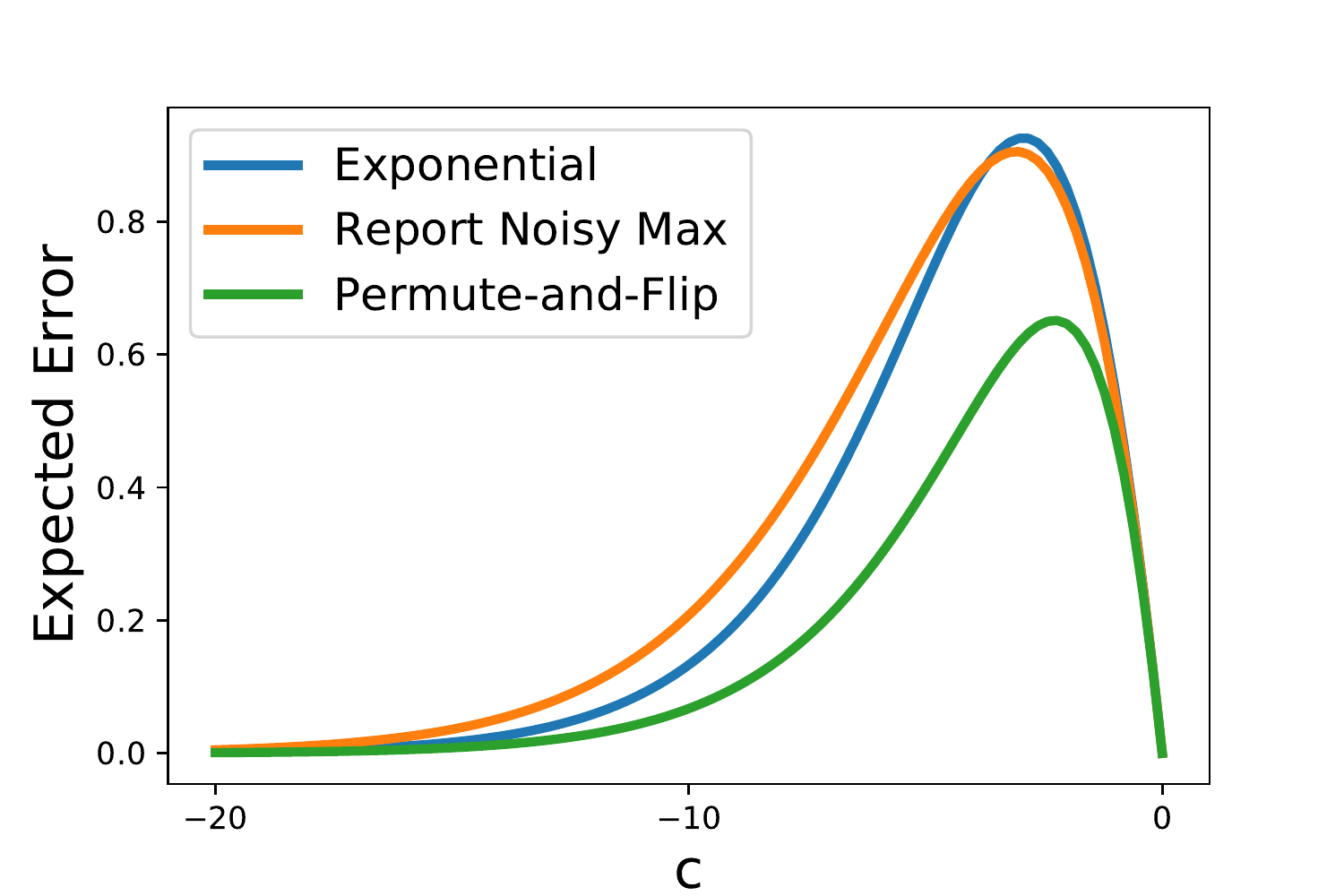}
\caption{\label{fig:noisymax}
Expected error of three mechanisms on quality score vectors of the form $ \q = (c, c, 0) $ assuming $\epsilon = 1.0$ and $\Delta = 1.0$.}
\end{figure}

\section{Extra Experiments} \label{sec:extra_experiments}

\begin{figure}[t]
\begin{subfigure}[b]{0.2\textwidth}
\centering
\includegraphics[width=\textwidth]{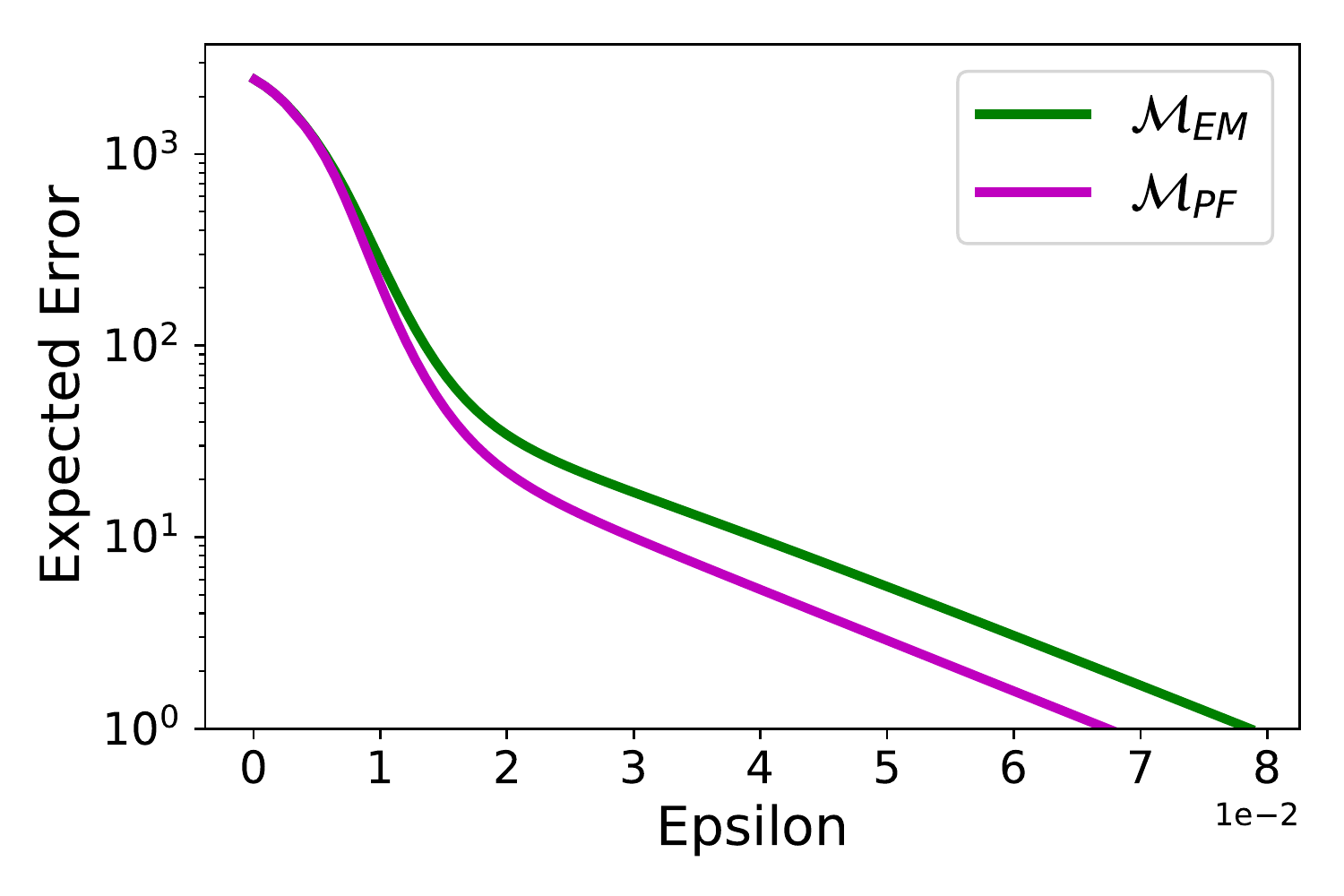}
\caption{HEPTH} 
\end{subfigure}
\hspace{-0.5em}
\begin{subfigure}[b]{0.2\textwidth}
\centering
\includegraphics[width=\textwidth]{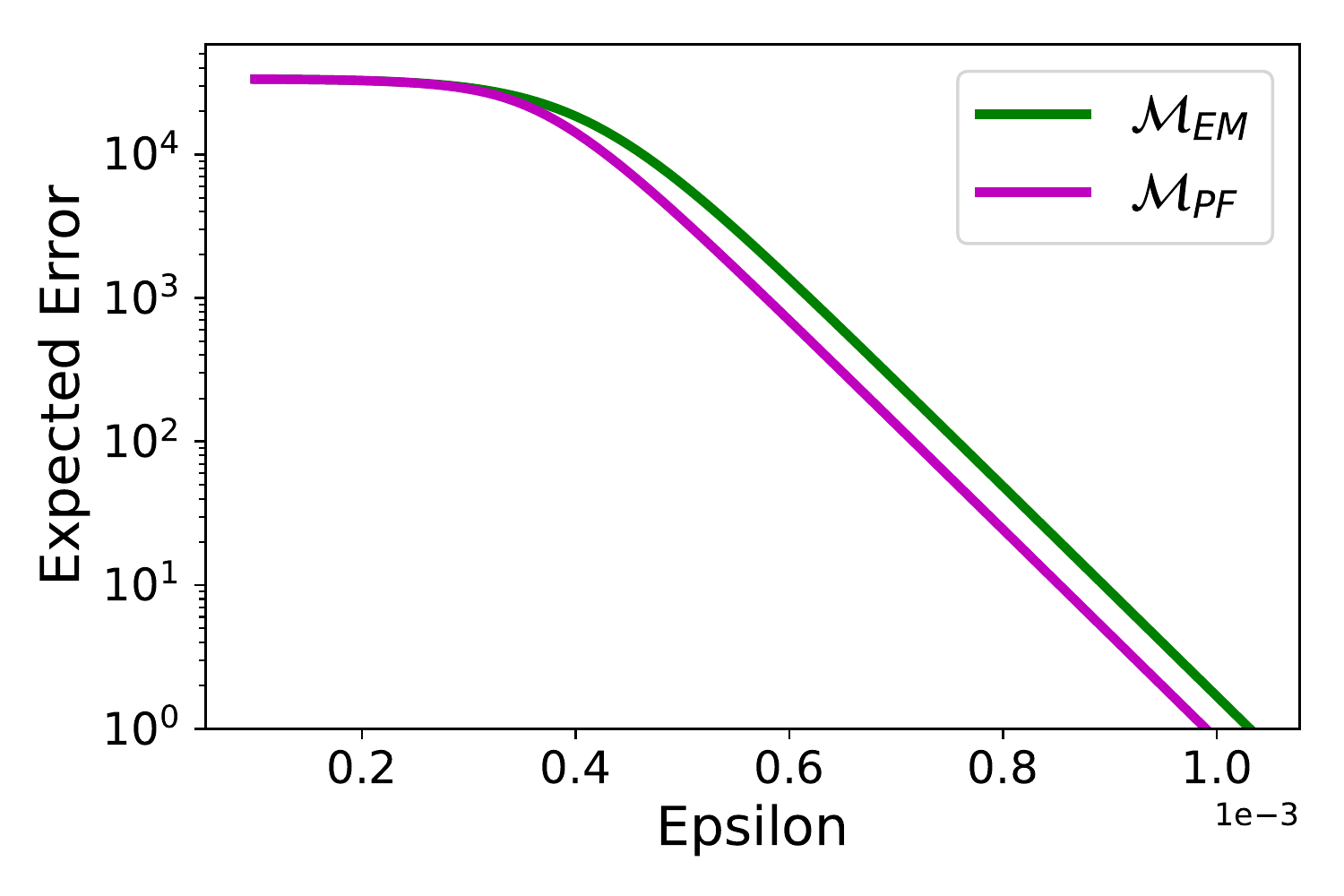}
\caption{ADULTFRANK} 
\end{subfigure}
\hspace{-0.5em}
\begin{subfigure}[b]{0.2\textwidth}
\centering
\includegraphics[width=\textwidth]{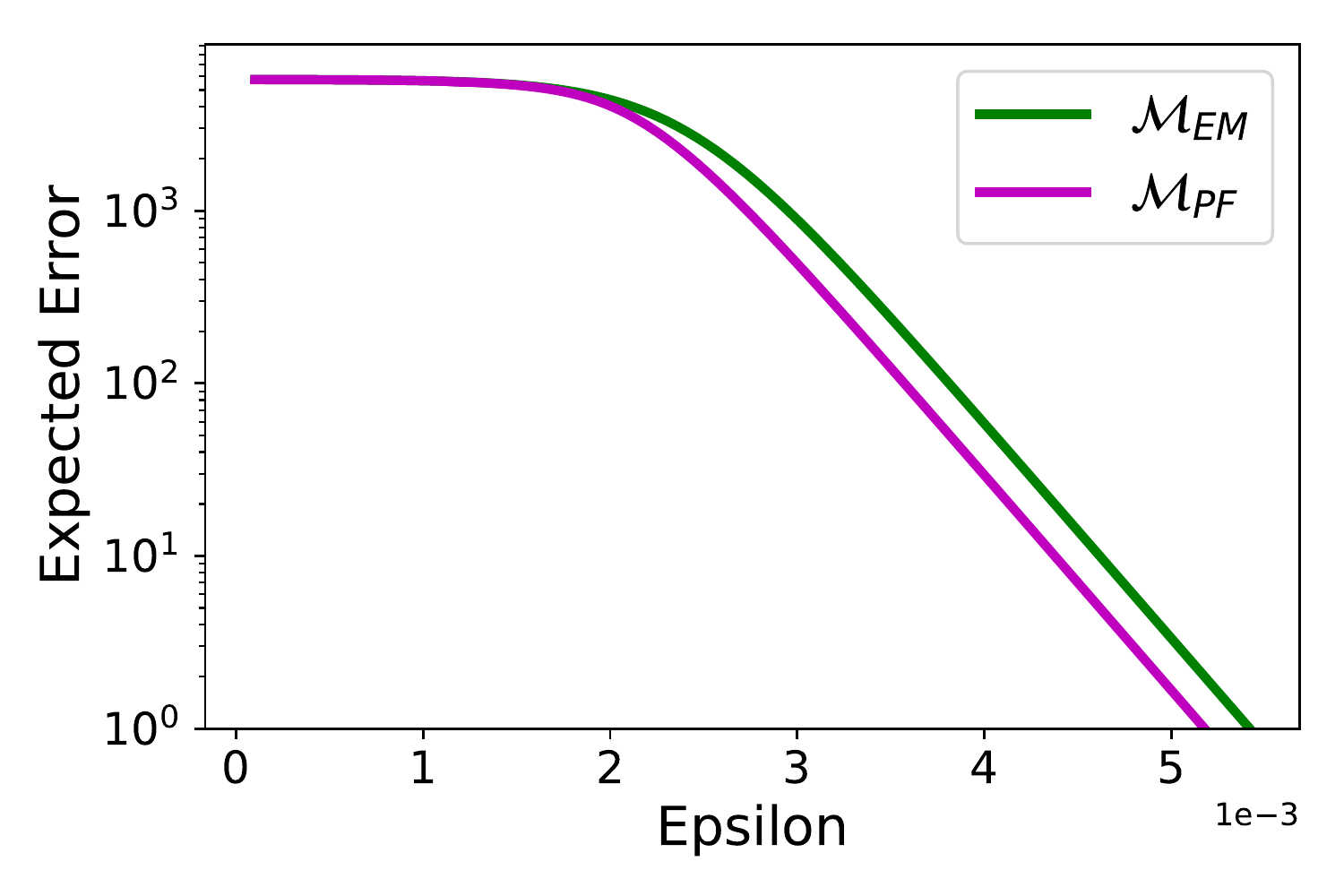}
\caption{MEDCOST}
\end{subfigure}
\hspace{-0.5em}
\begin{subfigure}[b]{0.2\textwidth}
\centering
\includegraphics[width=\textwidth]{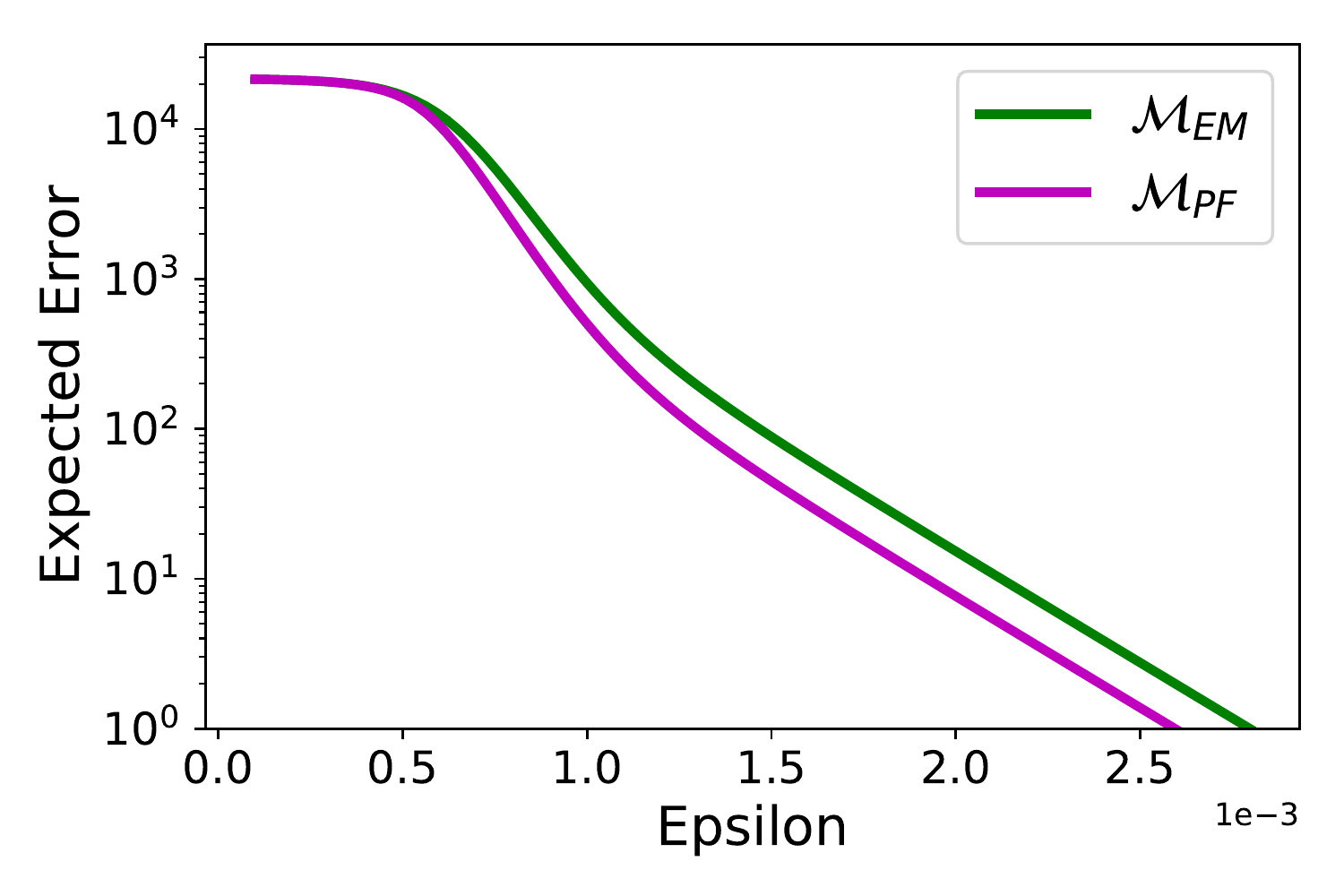}
\caption{SEARCHLOGS}
\end{subfigure}
\hspace{-0.5em}
\begin{subfigure}[b]{0.2\textwidth}
\centering
\includegraphics[width=\textwidth]{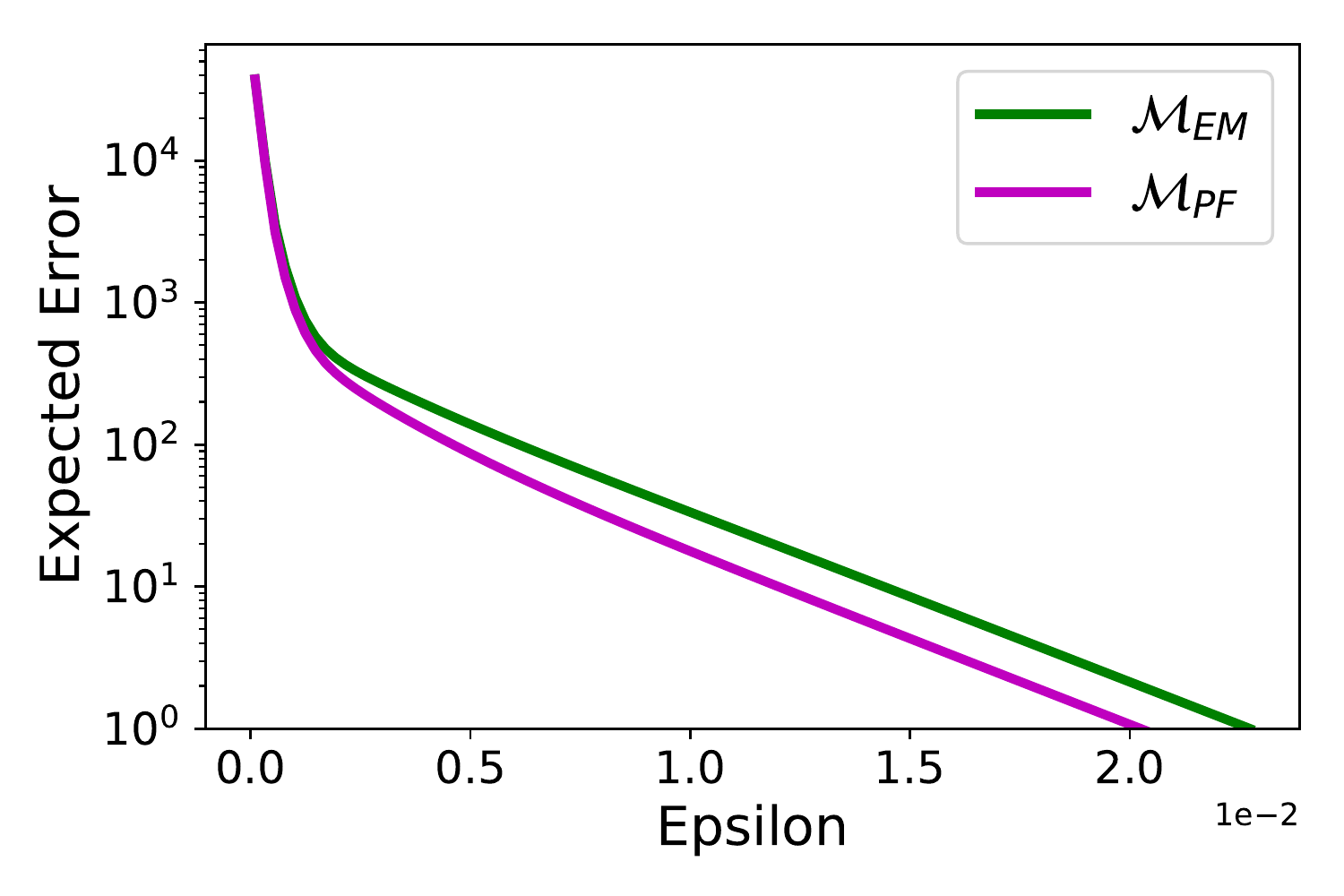}
\caption{PATENT}
\end{subfigure}
\caption{Expected error of $\EM$ and $\PF$ on five datasets for the mode problem.} \label{fig:extra-mode}
\end{figure}

\begin{figure}[t]
\begin{subfigure}[b]{0.2\textwidth}
\centering
\includegraphics[width=\textwidth]{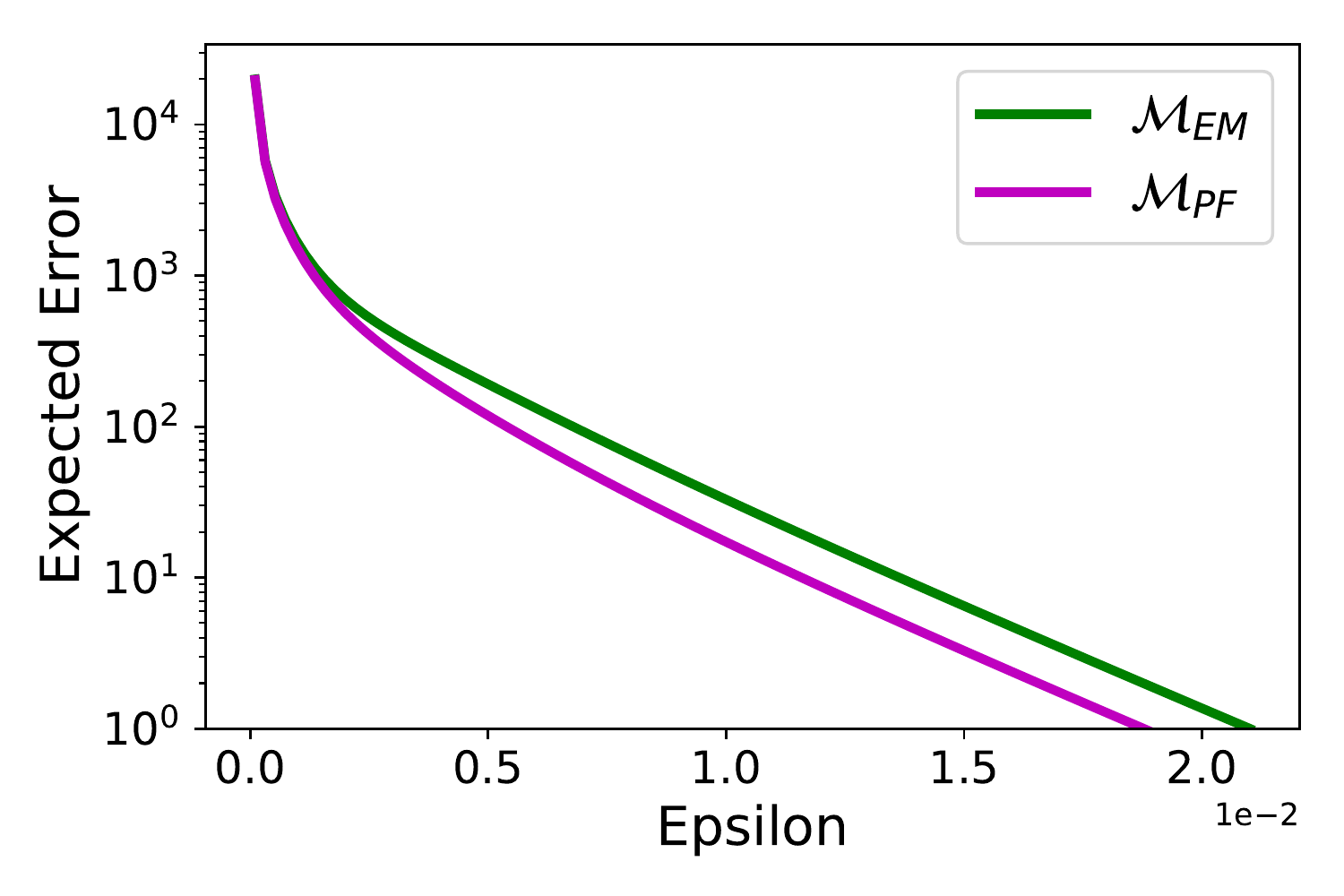}
\caption{HEPTH} 
\end{subfigure}
\hspace{-0.5em}
\begin{subfigure}[b]{0.2\textwidth}
\centering
\includegraphics[width=\textwidth]{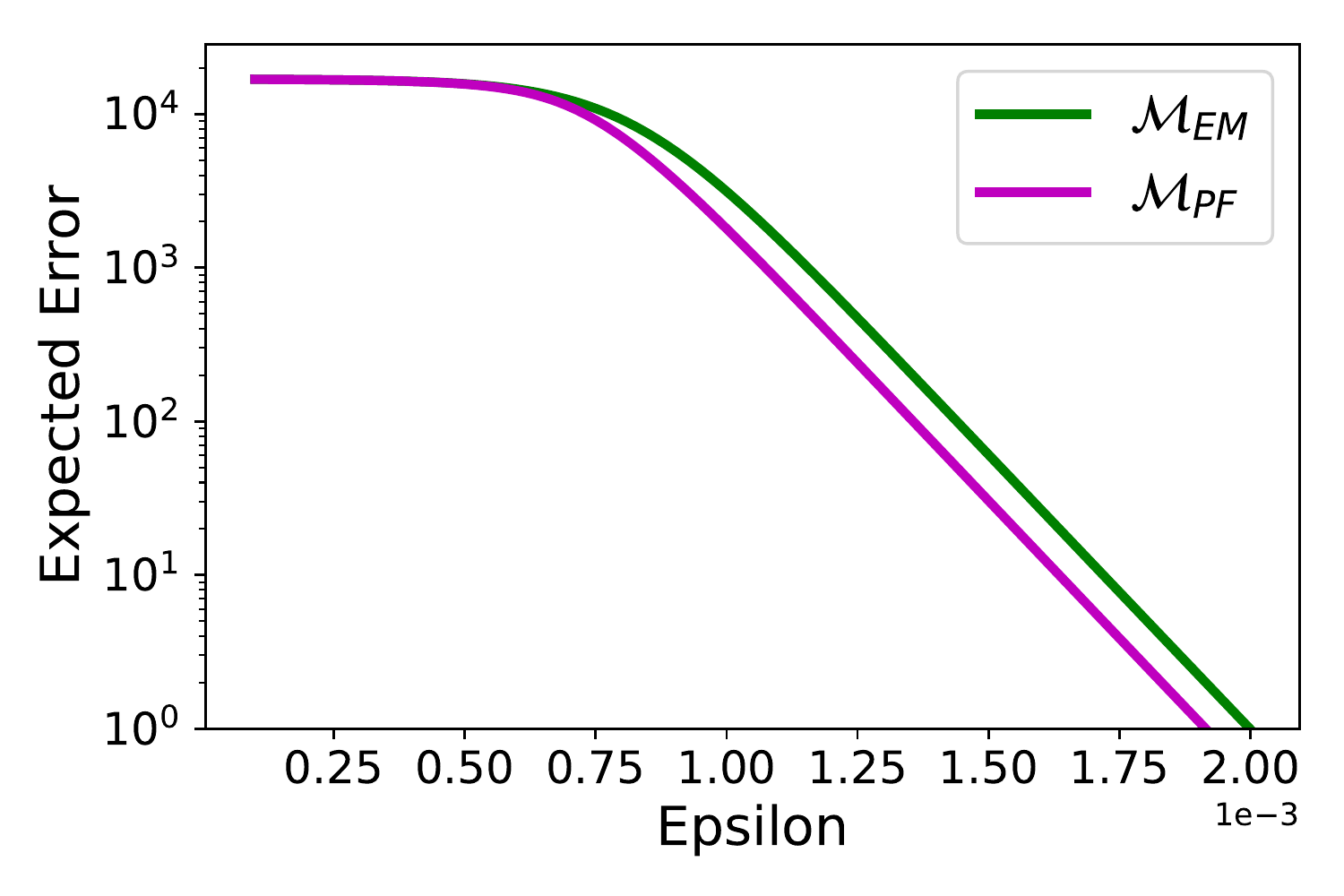}
\caption{ADULTFRANK} 
\end{subfigure}
\hspace{-0.5em}
\begin{subfigure}[b]{0.2\textwidth}
\centering
\includegraphics[width=\textwidth]{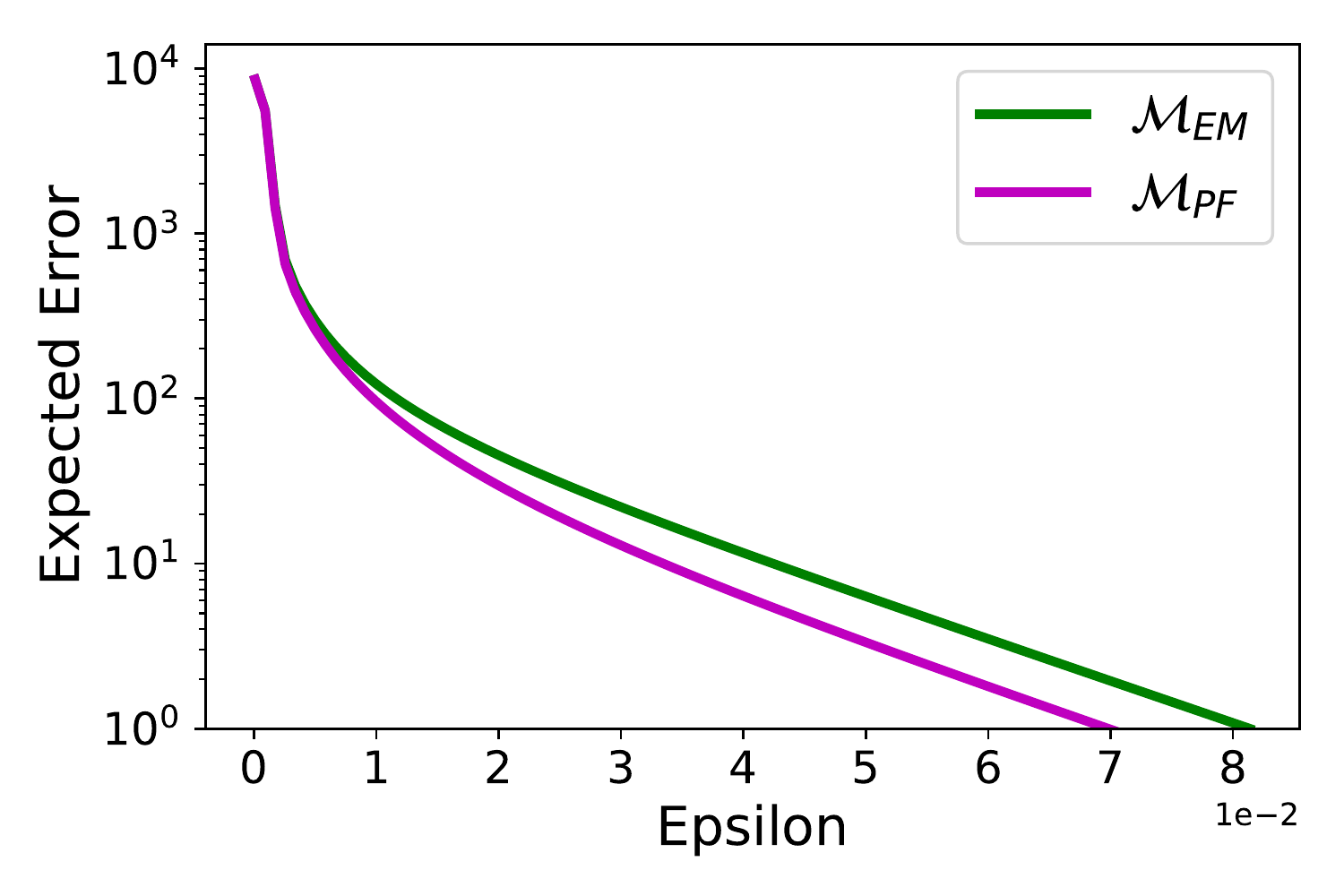}
\caption{MEDCOST}
\end{subfigure}
\hspace{-0.5em}
\begin{subfigure}[b]{0.2\textwidth}
\centering
\includegraphics[width=\textwidth]{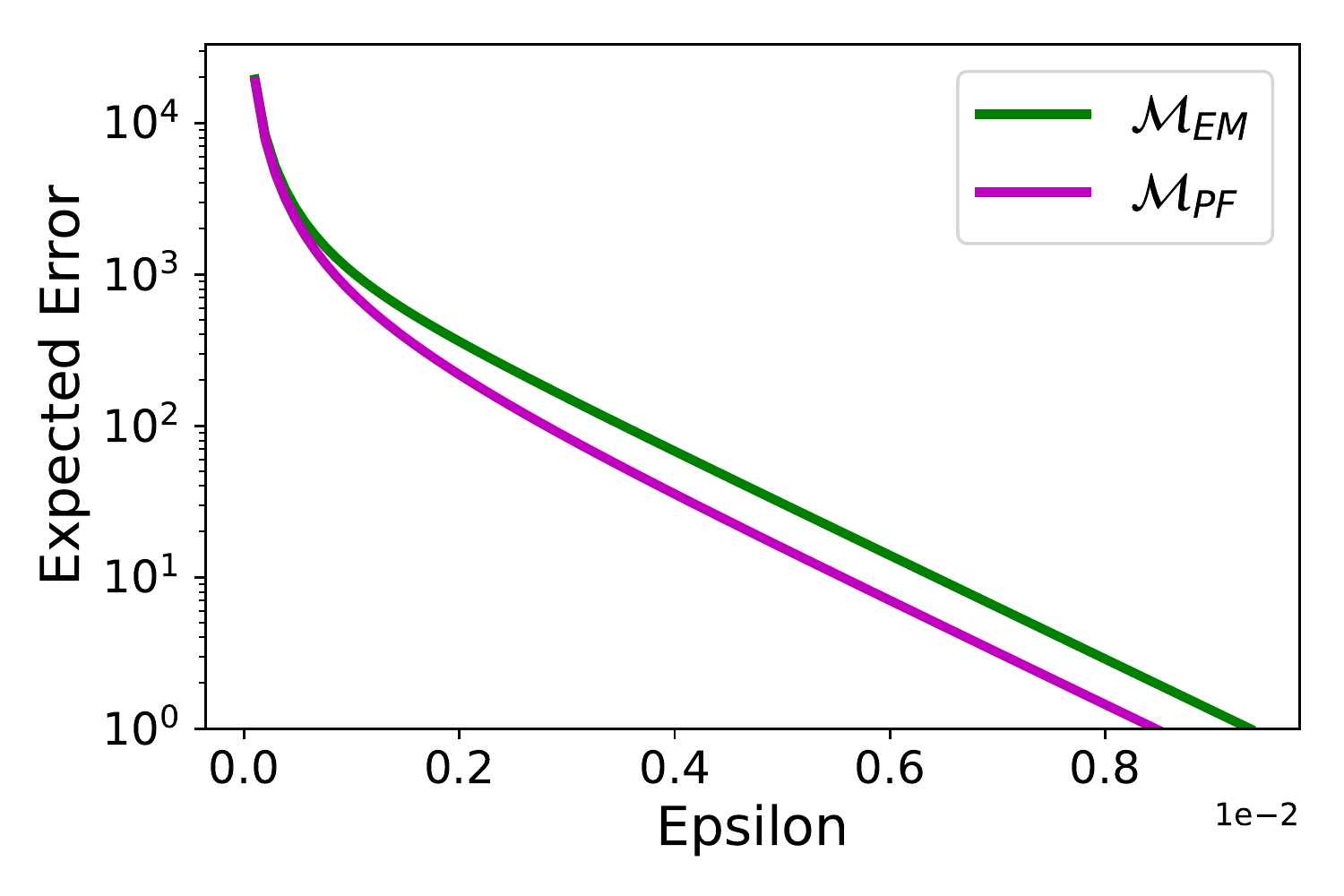}
\caption{SEARCHLOGS}
\end{subfigure}
\hspace{-0.5em}
\begin{subfigure}[b]{0.2\textwidth}
\centering
\includegraphics[width=\textwidth]{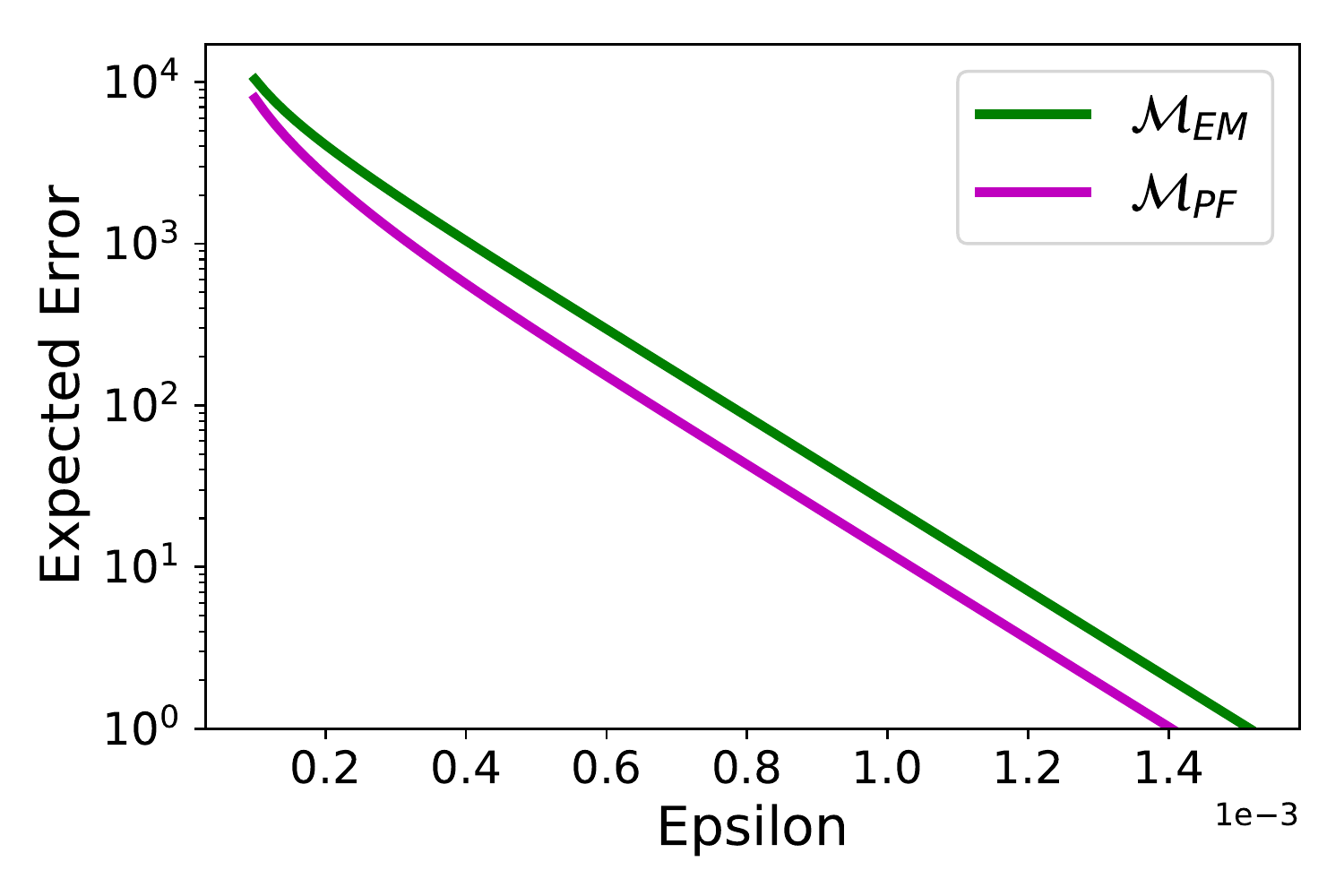}
\caption{PATENT}
\end{subfigure}
\caption{Expected error of $\EM$ and $\PF$ on five datasets for the median problem.} \label{fig:extra-median}
\end{figure}

\begin{figure}[h]
\begin{subfigure}[b]{0.5\textwidth}
\centering
\includegraphics[width=\textwidth]{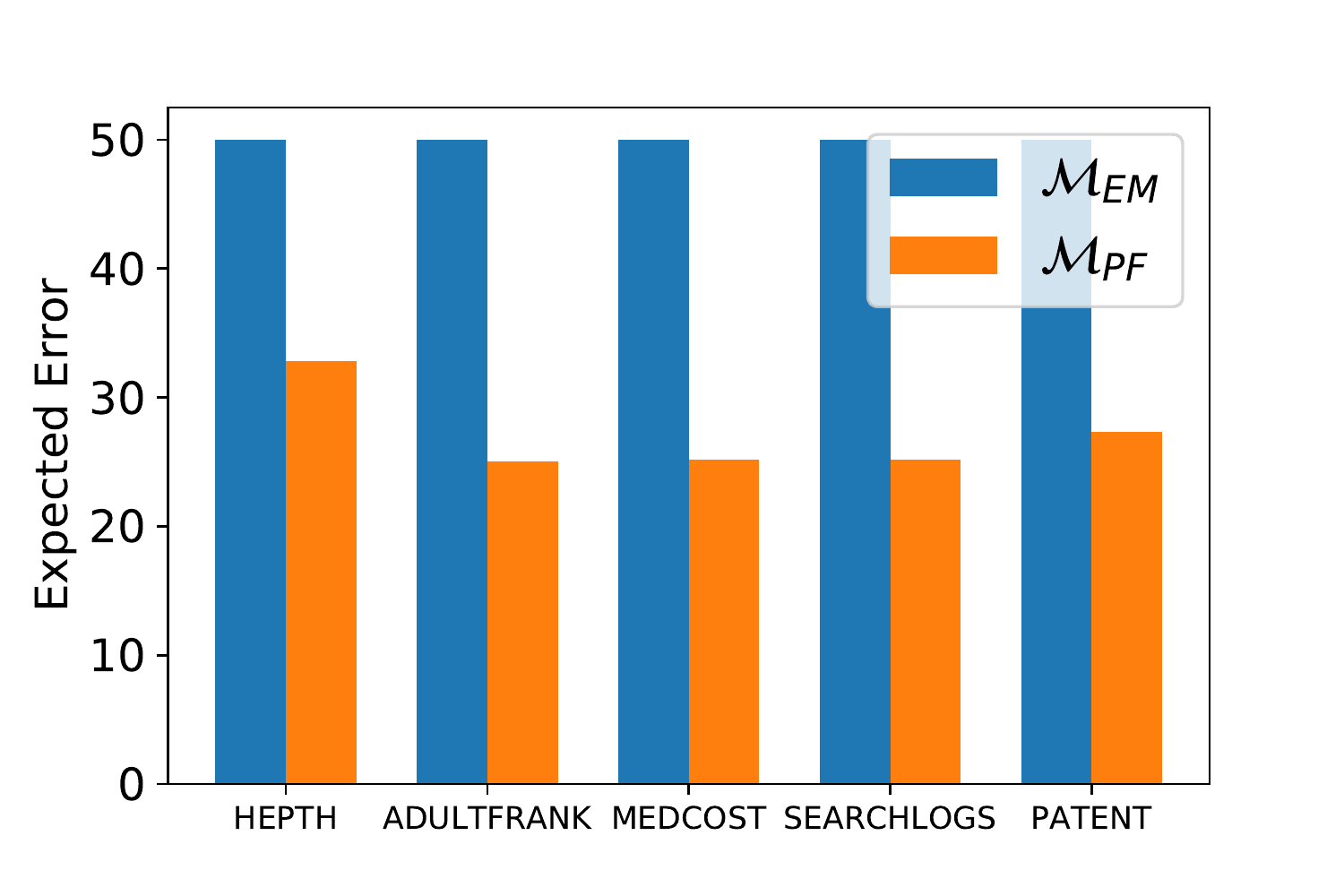}
\caption{Mode} 
\end{subfigure}
\hspace{-0.5em}
\begin{subfigure}[b]{0.5\textwidth}
\centering
\includegraphics[width=\textwidth]{fig/median-bar.pdf}
\caption{Median} 
\end{subfigure}
\caption{Expected error of $\EM$ and $\PF$ on five datasets for both problems.} \label{fig:extra-both}
\end{figure}

In \cref{fig:extra-mode} and \cref{fig:extra-median}, we measure the expected error of $\EM$ and $\PF$ on the mode and median problem for five different datasets from the DPBench study \cite{hay2016principled}.   The conclusions are the same for each dataset: the improvement increases with $\epsilon$, and for the range of $\epsilon$ that offer reasoanble utility, the improvement is close to $2 \times$.  In \cref{fig:extra-both}, we compare the expected error of $\EM$ and $\PF$ on both problems, for the value of $\epsilon$ satisfying $ \EE[\E(\EM, \q)] = 50$.

\end{document}